\journal{Journal of \LaTeX\ Templates}
\newtheorem{theorem}{Theorem}
\newtheorem{proposition}{Proposition}
\newtheorem{remark}{Remark}
\newtheorem{lemma}{Lemma}
\newtheorem{corollary}{Corollary}
\newcommand{\Hull}{{\mathrm{Hull}}}
\newcommand{\C}{{\mathcal{C}}}
\newcommand{\F}{{\mathbb{F}}}
\begin{document}

\begin{frontmatter}

\title{Three classes of propagation rules for generalized Reed-Solomon codes and their applications to EAQECCs}
\tnotetext[mytitlenote]{This research work is supported by the National Natural Science Foundation of China under Grant Nos. U21A20428 and 12171134.}

\author[mymainaddress]{Ruhao Wan}
\ead{wanruhao98@163.com}

\author[mymainaddress]{Shixin Zhu\corref{mycorrespondingauthor}}
\cortext[mycorrespondingauthor]{Corresponding author}
\ead{zhushixinmath@hfut.edu.cn}

\address[mymainaddress]{School of Mathematics, HeFei University of Technology, Hefei 230601, China}

\begin{abstract}
In this paper, we study the Hermitian hulls of generalized Reed-Solomon (GRS) codes over finite fields.
For a given class of GRS codes,
by extending the length, increasing the dimension, and extending the length and increasing the dimension at the same time, we obtain three classes of GRS codes with Hermitian hulls of arbitrary dimensions.
Furthermore,
based on some known $q^2$-ary Hermitian self-orthogonal GRS codes with dimension $q-1$, we obtain several classes of $q^2$-ary maximum distance separable (MDS) codes with Hermitian hulls of arbitrary dimensions.
It is worth noting that the dimension of these MDS codes can be taken from $q$ to $\frac{n}{2}$, and the parameters of these MDS codes can be more flexible by propagation rules.
As an application,
we derive three new propagation rules for MDS entanglement-assisted quantum error correction codes (EAQECCs) constructed from GRS codes.
Then, from the presently known GRS codes with Hermitian hulls, we can directly obtain many MDS EAQECCs with more flexible parameters.
Finally, we present several new classes of (MDS) EAQECCs with flexible parameters, and the distance of these codes can be taken from $q+1$ to $\frac{n+2}{2}$.
\end{abstract}

\begin{keyword}
Hermitian hulls \sep GRS codes \sep propagation rules \sep EAQECCs
\end{keyword}

\end{frontmatter}

\section{Introduction}\label{sec1}

Let $\F_{q}$ be the finite field with $q$ elements, where $q=p^m$ is a prime power.
Let $\F_{q}^n$ be the $n$-dimensional vector space over $\F_{q}$, where $n$ is a positive integer.
An $[n,k,d]_{q}$ linear code $\C$ over $\F_{q}$ is a $k$-dimensional subspace of $\F_{q}^n$ and minimum distance $d$.
A linear code is called an MDS code if $d=n-k+1$.
For a linear code $\C$,
the hull of $\C$ is defined as $\Hull(\C)=\C\cap \C^{\bot}$,
where $\C^\bot$ be the dual code of $\C$ with respect to certain inner product.
Hulls of linear codes were found to have wide applications
in determining the complexity of algorithms for computing the automorphism group of a linear code (see \cite{RefJ (1982) app J}).
Moreover, hull plays an important role in checking permutation equivalence of two linear codes
(see \cite{RefJ (1991) app J.S,RefJ (2000) app N}).

There are two special cases of hulls that are of great interest.
One is that $\Hull(\C)=\{0\}$, in which the code $\C$ is just the linear complementary dual (LCD) code.
LCD codes were first introduced by Massey in \cite{RefT (1992) LCD Ma}.
Later, LCD codes were widely used to protect against side-channel attacks (SCAs) and fault injection attacks (FIAs) (see \cite{RefT (2014) LCD CC,RefJ (2016) app C}).
Since then,
the study of LCD codes has received much attention (see \cite{RefJ (2018) Carlet equivalent}-\cite{RefJ (2017) LCD Li} and the references therein).
Another special case is $\Hull(\C)=\C$, in which case the code $\C$ is called self-orthogonal.
In \cite{RefJ (1996) introduct 1}-\cite{RefJ (2001) introduct 3}, the authors
established the fundamentals for constructing a class of quantum codes named stabilizer codes
using classical linear codes.
Then
a stabilizer code can be yielded if there exists a classical linear codes with certain self-orthogonality.

For the case of arbitrary hull dimensions, Brun et al. \cite{RefJ (2006) bound} introduced a class of quantum codes called EAQECCs.
We can find that quantum stabilizer codes as a special case of EAQECCs.
In other words, one can construct a EAQECC by a classical linear code without self-orthogonality.
However,
it is not easy to determine the number of shared pairs that required to construct an EAQECC.
Fortunately, Guenda et al. \cite{RefJ (2018) K good} found that this number can be obtained from the dimension of the hull of a classical linear code.
Thus, we can obtain an EAQECC from a classical linear code with a determined dimensional hull.

\subsection{Recent related works}

Recently, the construction of (MDS) EAQECCs has attracted the attention of many
researchers and the related results have been gradually improved.
More precisely, all $q$-ary MDS EAQECCs of length $n\leq q+1$ have been constructed in \cite{RefJ (2020) K q+1}.
Since then, researchers have focused on the construction of $q$-ary MDS EAQECCs with length $n>q+1$, in particular, the construction of $q$-ary MDS EAQECCs with distances $d>q+1$.
By using cyclic codes, constacyclic codes and GRS codes, many (MDS) EAQECCs with good parameters have been constructed (see \cite{RefJ (2006) Ketkar}-\cite{RefJ (2024) Anderson} and the references therein).
We summarise the relevant results as follows.

\begin{itemize}
\item
In \cite{RefJ (2022) Chen IEEE} and \cite{RefJ (2023) Chen}, Chen studied the
 Hull-Variation problem of equivalent linear codes.
 From this, we know that
  if there exists a $q^2$-ary linear code $\C$ with $l$-dimension Hermitian hull, where $q>2$,
  then there exists an equivalent code $\C'$ of $\C$ with $s$-dimensional Hermitian hull for each $0\leq s\leq l$.

\item
In \cite{RefJ (2024) Anderson},
Anderson et al. extended Chen's results \cite{RefJ (2022) Chen IEEE,RefJ (2023) Chen} to the relative hulls of two codes.
From this, we know that
when $q>2$, the relative hull dimension
can be repeatedly reduced by one, down to a certain bound,
by replacing either of the two codes with an equivalent one.
Moreover, under certain conditions, the relative hull dimension
can be increased by one via equivalent codes.

 \item
In \cite{RefJ (2022) Luo}, Luo et al. obtained three classes of propagation rules by considering equivalent codes, extending the length, and extending the length and increasing the dimension at the same time.
Then, Luo et al. obtained some EAQECCs with good parameters.

 \item
In \cite{RefJ (2022) Luo rule}, Luo et al. proposed two classes of propagation rules on quantum codes based on punctured and shortened classical codes.
From this, many known MDS EAQECCs can have more flexible parameters.
Moreover, in \cite{RefJ (2024) Luo MDS}, Luo et al. gave some GRS codes whose Hermitian hulls are MDS.
\end{itemize}

\subsection{Our motivations and contributions}

Our main motivations can be summarized as follows:

\begin{itemize}
\item
Although, in \cite{RefJ (2022) Luo}, Luo et al. have obtained the propagation rule by
extending the length, and extending the length and increasing the dimension at the same time, we find that for MDS codes using this propagation rule to obtain codes is not always an MDS code. Therefore, finding new propagation rules remains an interesting problem.

\item
To the best of our knowledge, when $n>q+1$, there are currently very few
MDS codes with
parameters $[n,k]_{q^2}$ for any $q<k\leq \frac{n}{2}$ such that their Hermitian
hull dimensions can be computed explicitly.
Therefore, as mentioned in \cite{RefJ (2020) Fang hull}, constructing $q^2$-ary MDS codes of length $n>q+1$, dimension $q<k\leq \frac{n}{2}$ and determining the dimensions of their Hermitian hulls is a valuable work.

\item
Up to now, the construction of MDS EAQECCs with distances greater than $q+1$ is still an
 active topic.
However,
there are only a few special length MDS EAQECCs whose distances can be greater than $q+1$,
 and very few MDS EAQECCs whose distances can be taken from $q+1$ to $\frac{n+2}{2}$.
Thus it makes sense to construct new MDS EAQECCs with distances greater than $q+1$.
\end{itemize}

The main contributions of this paper can be summarized as follows:

\begin{itemize}
\item
For a given class of GRS codes,
by extending the length, increasing the dimension, and extending the length and increasing the dimension at the same time, we obtain three classes of GRS codes with determined Hermitian hull dimensions (see Propositions \ref{pro GRS 1}, \ref{pro 3}, \ref{pro 4}).
From this, we can obtain many direct results (see Corollaries \ref{cor GRS 1} - \ref{cor GALOIS}).

\item
Based on some known $q^2$-ary Hermitian self-orthogonal GRS codes with dimension $q-1$, by considering larger dimensions,
we obtain several classes of $q^2$-ary MDS codes with Hermitian hulls of arbitrary dimensions (see Theorems \ref{th MDS 1}, \ref{th MDS 2}, \ref{th MDS 3}).
It is worth noting that the dimension of these MDS codes can be taken from $q$ to $\frac{n}{2}$.

\item
As an application,
we derive three new propagation rules on MDS EAQECCs constructed from GRS codes (see Theorems \ref{th rule 1}, \ref{th rule 2}, \ref{th rule 3}).
From this, many currently known MDS EAQECCs can have more flexible parameters.

\item
Finally,
 we present several new classes of MDS EAQECCs whose distances can be taken from $q+1$ to $\frac{n+2}{2}$
 (see Theorems \ref{th q22}).
 By comparison, the MDS EAQECCs we obtained have new parameters (see Tables \ref{tab:MDS EAQECC}, \ref{tab:EAQECCs}).
\end{itemize}

\subsection{Organization of this paper}

This paper is organized as follows.
In Section \ref{sec2}, we introduce some basic notations and results on GRS codes and hulls.
In Section \ref{sec3}, we present three classes of MDS codes from GRS codes.
In Section \ref{sec MDS}, we obtain several classes of MDS codes with Hermitian hulls of arbitrary dimensions.
In Section \ref{sec4}, we give three new classes of propagation rules and construct some new MDS EAQECCs
with distances greater than $q+1$.
Finally, Section \ref{sec6} concludes this paper.

\section{Preliminaries}\label{sec2}

Now, we review some of the basic concepts and conclusions about dual codes, GRS codes and hulls.
Let $q=p^m$, where $p$ is prime. For any two vectors $\bm{x}=(x_1,x_2,\dots,x_n),\ \bm{y}=(y_1,y_2,\dots,y_n)\in \F_{q}^n$,
the $e$-Galois inner product \cite{RefJ (2017) Galois} of vectors $\bm{x}$ and $\bm{y}$ is defined as
$$\langle \bm{x},\bm{y} \rangle_{e}=\sum_{i=1}^{n}x_iy_i^{p^e},$$
where $0\leq e\leq m-1$.
It is easy to see that
$\langle -, -\rangle_{0}$ (resp. $\langle -,-\rangle_{\frac{m}{2}}$ with even $m$)
 is just the Euclidean (resp. Hermitian) inner product.
Hence, the $e$-Galois inner product generalizes the Euclidean inner product and the Hermitian inner product.
Let $\C$ be a linear code over $\F_q$, the $e$-Galois dual code of $\C$ is defined as
$$\C^{\perp_{e}}=\{\bm{x}\in \F_{q}^n:\langle \bm{y}, \bm{x}\rangle_{e}=0, \ for \ all\ \bm{y}\in \C \}.$$

We define the $e$-Galois hull of $\C$ as $\C\cap\C^{\perp_e}$ denoted by $\Hull_e(\C)$.
In particular, if $\Hull_e (\C)=\C$, then $\C$ is an $e$-Galois self-orthogonal code.
For an even integer $m$, the Hermitian dual code and Hermitian hull of $\C$ are defined as $\C^{\perp_H}=\C^{\perp_{\frac{m}{2}}}$ and $\Hull_H(\C)=\C\cap\C^{\perp_{\frac{m}{2}}}$, respectively.
Next, when we consider the Hermitian case, we assume that the base field is $\F_{q^2}$.
We fix some notations as follows for convenience.

\begin{itemize}
\item
For a vector $\bm{c}=(c_1,c_2,\dots,c_n)\in \F_{q}^n$,
we denote a vector $(c_1^i,c_2^i,\dots,c_n^i)\in \F_{q}^n$ by $\bm{c}^i$ for any integer $i$. Specially, $0^0=1$.
\item
For a matrix $A=(a_{ij})$ over $\F_{q}$,
we define matrices $(a_{ji})$, $(a_{ij}^{p^{m-e}})$ and $(a_{ji}^{p^{m-e}})$ as $A^T$, $A^{p^{m-e}}$ and $A^\ddagger$, respectively.
\item
Similarly, for a matrix $A=(a_{ij})$ over $\F_{q^2}$,
we define matrices $(a_{ji})$, $(a_{ij}^q)$ and $(a_{ji}^q)$ as $A^T$, $A^{q}$ and $A^\dagger$, respectively.
\item
 Suppose $O_{i\times j}$ is a zero matrix of size $i\times j$ and $E_k$ is a identity matrix of size $k\times k$.
  \item
  For $k\times n$ matrix $G$ with row vectors $\bm{x}_1,\bm{x}_2,\dots,\bm{x}_k\in \F_q^n$, write $G=(\bm{x}_1,\bm{x}_2,\dots,\bm{x}_k)^{\mathcal{T}}$.
\end{itemize}

 A \emph{monomial matrix} is a square matrix with exactly one nonzero entry in each row and each column and zeros elsewhere.
Based on the monomial matrix, equivalence relations for linear codes can be defined:
Let two linear codes $\C$ and $\C'$ with respective generator matrices $G$ and $G'$ be given.
The codes $\C$ and $\C'$ are (monomially) equivalent if there exists a monomial matrix $M$ such that  $GM$ is a generator matrix of $\C'$.

Now, we recall some definitions related to GRS and EGRS codes.
Let $\bm{a}=(a_{1},a_{2},\dots,a_{n})\in \F_q^n$
with $a_1,a_2,\dots,a_n$ distinct elements and $\bm{v}=(v_1,v_2,\dots,v_n)\in (\F_{q}^*)^n$, where $\F_{q}^*=\F_{q}\setminus \{0\}$.
For an integer $1\leq k\leq n$, the GRS code associated with $\bm{a}$ and $\bm{v}$ is defined as follows:
$$GRS_{k}(\bm{a},\bm{v})=\{(v_{1}f(a_{1}),v_{2}f(a_{2}),\dots,v_{n}f(a_{n})): f(x)\in \F_{q}[x],\ \deg(f(x))\leq k-1\}.$$
It is well known that $GRS_{k}(\bm{a},\bm{v})$ is an $[n,k]_{q}$ MDS code and the corresponding dual code is also an MDS code.
The generator matrix of $GRS_k(\bm{a},\bm{v})$ is
$$G_k(\bm{a},\bm{v})=(\bm{g}_0,\bm{g}_1,\dots,\bm{g}_{k-1})^\mathcal{T},$$ 
where $\bm{g}_i=(v_1a_1^i,v_2a_2^i,\dots,v_na_n^{i})$.
If $\bm{a}\in (\F_{q}^*)^n$,
let $GRS_{k,k_1}(\bm{a},\bm{v})$ be a code with generator matrix
$G_{k,k_1}(\bm{a},\bm{v})=(\bm{g}_{k_1},\bm{g}_{k_1+1},\dots,\bm{g}_{k_1+k-1})^\mathcal{T}$.
Put $\bm{v}'=(v_1',v_2',\dots,v_n')\in (\F_q^*)^n$, where $v_i'=v_ia_i^{k_1}$ for $1\leq i\leq n$, then $GRS_{k,k_1}(\bm{a},\bm{v})=GRS_{k}(\bm{a},\bm{v}')$. Then $GRS_{k,k_1}(\bm{a},\bm{v})$ is also an $[n,k]_q$ GRS code.

Moreover, for an integer $1\leq k\leq n$, the EGRS code associated with $\bm{a}$ and $\bm{v}$ is defined as follows:
$$GRS_{k}(\bm{a},\bm{v},\infty)=\{(v_{1}f(a_{1}),v_{2}f(a_{2}),\dots,v_{n}f(a_{n}),f_{k-1}):\ f(x)\in \F_{q}[x],\ \deg(f(x))\leq k-1\},$$
where $f_{k-1}$ is the coefficient of $x^{k-1}$ in $f(x)$.
It is well known that $GRS_{k}(\bm{a},\bm{v},\infty)$ is an $[n+1,k]_{q}$ MDS code and the corresponding dual code is also an MDS code.
The code $GRS_k(\bm{a},\bm{v},\infty)$ has a generator matrix
$$G_{k}(\bm{a},\bm{v},\infty)=[(\bm{g}_0,\bm{g}_1,\dots,\bm{g}_{k-1})^\mathcal{T}|\infty^T],$$
where $\bm{g}_i=(v_1a_1^i,v_2a_2^i,\dots,v_na_n^{i})$ and $\infty=(0,0,\dots,1)\in \F_{q}^k$.
If $\bm{a}\in (\F_{q}^*)^n$,
let $GRS_{k,k_1}(\bm{a},\bm{v},\infty)$ be a linear code with generator matrix
$G_{k,k_1}(\bm{a},\bm{v},\infty)=[(\bm{g}_{k_1},\bm{g}_{k_1+1},\dots,\bm{g}_{k_1+k-1})^\mathcal{T}|\infty^T]$.
Put $\bm{v}'=(v_1',v_2',\dots,v_n')\in (\F_q^*)^n$, where $v_i'=v_ia_i^{k_1}$ for $1\leq i\leq n$, then $GRS_{k,k_1}(\bm{a},\bm{v},\infty)=GRS_{k}(\bm{a},\bm{v}',\infty)$. Hence, $GRS_{k,k_1}(\bm{a},\bm{v},\infty)$ is also an $[n+1,k]_q$ EGRS code.
There are some important results in the literature on GRS codes and EGRS codes. We review them here.

\begin{lemma}\label{lem equiv}
(\cite[Corollary 3.3]{RefJ (2023) Liu Galois})
Let $\bm{a}=(a_{1},a_{2},\dots,a_{n})\in \F_q^n$
with $a_1,a_2,\dots,a_n$ distinct elements and $\bm{v}=(v_1,v_2,\dots,v_n)\in (\F_{q}^*)^n$.
Then for any $a\in \F_q^*$, $b\in \F_q$ and $\lambda\in \F_q^*$,
we have $$GRS_k(\bm{a},\bm{v})=GRS_k(a\bm{a}+b\bm{1},\lambda \bm{v})$$
and
$$GRS_k(\bm{a},\bm{v},\infty)=GRS_k(a\bm{a}+b\bm{1},a^{1-k}\bm{v},\infty),$$
where $a\bm{a}+b\bm{1}=(aa_1+b,aa_2+b,\dots,aa_n+b)\in \F_q^n$ and $\lambda\bm{v}=(\lambda v_1, \lambda v_2, \dots, \lambda v_n)\in (\F_q^*)^n$.
\end{lemma}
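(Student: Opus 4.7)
The plan is to prove both identities by exhibiting an explicit bijection on the underlying polynomial parameterizations. Recall that codewords of $GRS_k(\bm{a},\bm{v})$ are indexed by polynomials $f(x)\in\F_q[x]$ with $\deg f\le k-1$; so the key observation will be that the substitution $x\mapsto ax+b$ preserves the space $\F_q[x]_{<k}$ of such polynomials.

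For the first identity, I would start from a generic codeword of $GRS_k(a\bm{a}+b\bm{1},\lambda\bm{v})$, which has the form $\bigl(\lambda v_1 f(aa_1+b),\ldots,\lambda v_n f(aa_n+b)\bigr)$ for some $f$ with $\deg f\le k-1$. Setting $g(x)=f(ax+b)$ gives another polynomial of degree at most $k-1$, and because $a\in\F_q^*$ the map $f\mapsto g$ is a bijection on $\F_q[x]_{<k}$. Hence this codeword equals $\lambda\bigl(v_1 g(a_1),\ldots,v_n g(a_n)\bigr)$, which lies in $GRS_k(\bm{a},\bm{v})$ since $\lambda\in\F_q^*$ acts as a nonzero scalar on the $\F_q$-linear code. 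Reversing this argument via $f(x)=g(a^{-1}(x-b))$ gives the opposite inclusion.

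For the second identity, the only new ingredient is tracking the extended coordinate. Writing again $g(x)=f(ax+b)$, expanding $f(x)=\sum_{i=0}^{k-1}f_i x^i$ shows that the leading coefficient of $g$ is $g_{k-1}=a^{k-1}f_{k-1}$, equivalently $f_{k-1}=a^{1-k}g_{k-1}$. Therefore a generic codeword of $GRS_k(a\bm{a}+b\bm{1},a^{1-k}\bm{v},\infty)$, namely $\bigl(a^{1-k}v_1 f(aa_1+b),\ldots,a^{1-k}v_n f(aa_n+b),\,f_{k-1}\bigr)$, rewrites as $a^{1-k}\bigl(v_1g(a_1),\ldots,v_ng(a_n),g_{k-1}\bigr)$, which is a scalar multiple of an element of $GRS_k(\bm{a},\bm{v},\infty)$ and hence lies in that code. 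The bijection $f\leftrightarrow g$ again ensures both inclusions.

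There is no real obstacle here; this is essentially a change-of-variables computation. The one point that deserves care is the EGRS case, where one must correctly track how the leading coefficient of $f$ transforms under $x\mapsto ax+b$ so as to verify that the prescribed scalar $a^{1-k}$ on $\bm{v}$ is exactly what is needed to make the last coordinate match. Everything else is routine, and no results beyond the definitions of $GRS_k$ and $GRS_k(\cdot,\cdot,\infty)$ given in the preliminaries are required.
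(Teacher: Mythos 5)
Your argument is correct. The paper itself offers no proof of this lemma --- it is quoted verbatim as Corollary~3.3 of the cited work of Liu and Lin --- so there is nothing internal to compare against; your change-of-variables argument ($g(x)=f(ax+b)$, which is a linear bijection of the space of polynomials of degree at most $k-1$ because $a\in\F_q^*$) is the standard and expected one. The one subtlety you flag, namely that $g_{k-1}=a^{k-1}f_{k-1}$ forces the compensating factor $a^{1-k}$ on $\bm{v}$ in the extended case, is handled correctly; if you wish to streamline it, you can absorb the scalar into the polynomial (replace $g$ by $a^{1-k}g$, respectively by $\lambda g$ in the first identity), so that each codeword of one code is literally the codeword of the other associated to the transformed polynomial, making the two inclusions immediate from the bijectivity of $f\mapsto g$.
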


\begin{remark}\label{rem 1}
By Lemma \ref{lem equiv}, we have the following results.
\begin{itemize}
\item[(1)]
For a given GRS code $GRS_k(\bm{a},\bm{v})$ of length $n<q$, there exists $b\in \F_q$ such that $\bm{a}+b\bm{1}\in (\F_q^*)^n$, so when $n<q$, we can assume that $\bm{a}\in (\F_q^*)^n$ without loss of generality.
\item[(2)]
For a given EGRS code $GRS_k(\bm{a},\bm{v},\infty)$ of length $n+1<q+1$, there exists $b\in \F_q$ such that $\bm{a}+b\bm{1}\in (\F_q^*)^n$, so when $n<q$, we can assume that $\bm{a}\in (\F_q^*)^n$ without loss of generality.
\end{itemize}
\end{remark}

When $n\leq q$, we have that GRS codes are equivalent to EGRS codes.

\begin{lemma}\label{lem GRS=EGRS}
(\cite{RefJ (1996) Pellikaan} and \cite{RefJ (2022) equivalence})
If $n\leq q$, then a linear code with length $n$ is GRS if and only if it is EGRS.
\end{lemma}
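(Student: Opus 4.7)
The plan is to establish both implications using the reciprocal substitution $x \mapsto 1/x$, which swaps the role of the evaluation point $0 \in \F_q$ and the formal point $\infty$. The hypothesis $n \leq q$ enters because it guarantees that any set of $n$ (respectively $n-1$) distinct evaluation points in $\F_q$ appearing in a GRS (resp.\ EGRS) code can be shifted via Lemma \ref{lem equiv} so as to place $0$ either in or out of the evaluation set, after which the reciprocal map is defined on the remaining points and again produces distinct elements of $\F_q$.

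For the direction GRS $\Rightarrow$ EGRS, I start from $\C = GRS_k(\bm{a},\bm{v})$ of length $n \leq q$. Using Lemma \ref{lem equiv} with $a = 1$ and $b = -a_n$, I may shift so that $a_n = 0$, while $a_1,\dots,a_{n-1}$ remain nonzero and pairwise distinct. Define the reversal $\tilde{f}(x) = x^{k-1} f(1/x)$, a linear bijection on the space of polynomials of degree less than $k$ which sends the coefficient $f_j$ to position $k-1-j$. For $i < n$ one has the identity $f(a_i) = a_i^{k-1}\tilde{f}(1/a_i)$, while for the last coordinate $f(0) = f_0 = \tilde{f}_{k-1}$, the leading coefficient of $\tilde{f}$. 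Setting $b_i = 1/a_i$ and $u_i = v_i a_i^{k-1}/v_n$ for $1 \leq i \leq n-1$, reparametrizing each codeword in terms of $\tilde{f}$ instead of $f$ exhibits $\C = GRS_k(\bm{b},\bm{u},\infty)$, which is EGRS.

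For the direction EGRS $\Rightarrow$ GRS, the same calculation runs in reverse. Given $\C = GRS_k(\bm{a}',\bm{v}',\infty)$ of length $n \leq q$, the inequality $n-1 \leq q-1$ together with Lemma \ref{lem equiv} allows me to assume that no entry of $\bm{a}'$ is $0$. The reversal $\tilde{g}(x) = x^{k-1} g(1/x)$ is an involution on the polynomial space and converts the formal "$\infty$-value" $g_{k-1}$ into the honest evaluation $\tilde{g}(0)$. This yields $\C = GRS_k(\bm{b},\bm{u})$ with $\bm{b} = (1/a_1',\dots,1/a_{n-1}',0)$ (whose entries are pairwise distinct) and a suitable multiplier vector $\bm{u}$, so $\C$ is GRS.

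The main technical obstacle will be the bookkeeping of the multiplier vectors, so that after reparametrization one lands on the \emph{exact} prescribed generator form of an EGRS (or GRS) code rather than merely on a monomially equivalent code; the factor $v_n$ needs to be absorbed into the new vector $\bm{u}$ carefully. The only conceptually delicate point is that the bound $n \leq q$ is sharp: it is exactly what ensures the existence of a shift moving $0$ into or out of the evaluation set, which is what makes the reciprocal substitution land in the required target format.
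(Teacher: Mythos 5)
Your argument is correct and essentially complete. Note, however, that the paper does not prove this lemma at all --- it is quoted as a known result with citations to Pellikaan and to Zhu's note on the equivalence of GRS and EGRS codes --- so there is no in-paper proof to compare against; your write-up supplies the standard argument from that literature. The two key ingredients are exactly right: the coefficient-reversal bijection $f(x)\mapsto x^{k-1}f(1/x)$, which exchanges the evaluation at $0$ with the leading coefficient (the ``value at $\infty$''), and the shift from Lemma \ref{lem equiv} that places $0$ into, respectively outside of, the evaluation set; the bookkeeping with $u_i=v_ia_i^{k-1}/v_n$ (equivalently, replacing $\tilde f$ by $v_n\tilde f$) correctly lands on the exact generator format rather than a merely equivalent code. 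Two small remarks: in the direction GRS $\Rightarrow$ EGRS the hypothesis $n\leq q$ is automatic, since a GRS code over $\F_q$ cannot have more than $q$ evaluation points, so the hypothesis is only doing work in the converse direction, where $n-1\leq q-1$ guarantees a shift avoiding $0$ exists; and it is worth observing that this is precisely why the lemma fails at $n=q+1$, where the extended Reed--Solomon codes are EGRS but in general not GRS.
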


In \cite{RefJ (2018) K good},
for a linear code $\C$ with generator matrix $G$,
Guenda et al. gave the relation between $dim(\Hull_H(\C))$ and the value of $rank(GG^\dagger)$.

\begin{lemma} (\cite[Proposition 3.2]{RefJ (2018) K good})\label{lem Hull=GG}
Let $\C$ be an $[n,k,d]_{q^2}$ linear code. If $H$ is a parity check matrix and $G$ is a generator matrix of $\C$. Then $rank(HH^\dagger)$ and $rank(GG^\dagger)$ are independent of $H$ and $G$ such that
$$rank(HH^\dagger)=n-k-dim(\Hull_H(\C))=n-k-dim(\Hull_H(\C^{\bot_H}))$$
and
$$rank(GG^\dagger)=k-dim(\Hull_H(\C))=k-dim(\Hull_H(\C^{\bot_H})).$$
\end{lemma}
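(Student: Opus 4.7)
The plan is to identify, via the canonical map $x\mapsto xG$, the left nullspace of $GG^\dagger$ with the Hermitian hull $\Hull_H(\C)$, and then apply the same argument with $H$ viewed as a generator of $\C^{\perp_H}$. First I would observe that the $(i,j)$-entry of $GG^\dagger$ is $\sum_{l=1}^{n}G_{il}G_{jl}^{q}$, i.e., the Hermitian inner product of the $i$-th and $j$-th rows of $G$. Consequently, for any $x\in\F_{q^2}^k$, the equation $xGG^\dagger=0$ is equivalent to the codeword $xG$ being Hermitian-orthogonal to every row of $G$, which is equivalent to $xG\in\C\cap\C^{\perp_H}=\Hull_H(\C)$. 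Because $G$ has full row rank, $x\mapsto xG$ is injective, so it restricts to a linear isomorphism from the left nullspace of $GG^\dagger$ onto $\Hull_H(\C)$. Rank--nullity then yields
$$\mathrm{rank}(GG^\dagger)=k-\dim(\Hull_H(\C)).$$

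For the parity-check version I would take $H$ as a generator matrix of $\C^{\perp_H}$ (the natural Hermitian analogue of the parity-check role) and rerun the argument above with $\C$ replaced by $\C^{\perp_H}$, which yields $\mathrm{rank}(HH^\dagger)=(n-k)-\dim(\Hull_H(\C^{\perp_H}))$. The two hull dimensions agree because the Hermitian dual is an involution:
$$\Hull_H(\C^{\perp_H})=\C^{\perp_H}\cap(\C^{\perp_H})^{\perp_H}=\C^{\perp_H}\cap\C=\Hull_H(\C),$$
so in fact the two hulls coincide as subspaces, not merely in dimension. Independence from the particular choice of $G$ (or $H$) will follow immediately from the relation $G'(G')^\dagger=M(GG^\dagger)M^\dagger$ whenever $G'=MG$ with $M\in GL_k(\F_{q^2})$, since $M^\dagger$ is also invertible.

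The main obstacle I anticipate is not computational but conceptual: one must fix the convention that $H$ is taken as a generator of $\C^{\perp_H}$ rather than of the Euclidean dual $\C^\perp$, and then verify that $\Hull_H(\C)$ and $\Hull_H(\C^{\perp_H})$ literally coincide as subspaces. Once those identifications are in place, a single Gram-matrix argument applied twice delivers both rank formulas, and choice-independence reduces to the invariance of rank under two-sided multiplication by invertible matrices.
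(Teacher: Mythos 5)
The paper does not prove this lemma; it is imported verbatim as a citation to \cite[Proposition 3.2]{RefJ (2018) K good}, so there is no internal proof to compare against. Your argument is a correct, self-contained proof of the standard Gram-matrix characterization: the identification of the left nullspace of $GG^\dagger$ with $\Hull_H(\C)$ via $x\mapsto xG$ is exactly right (the computation $(xGG^\dagger)_j=\langle xG,\bm{g}_j\rangle_H$ together with conjugate-symmetry of Hermitian orthogonality gives $xGG^\dagger=0$ iff $xG\in\C\cap\C^{\perp_H}$, and injectivity of $x\mapsto xG$ plus rank--nullity finishes it), the duality $\Hull_H(\C^{\perp_H})=\Hull_H(\C)$ is correct since $(\C^{\perp_H})^{\perp_H}=\C$, and the invariance under $G\mapsto MG$ handles independence of the choice of generator matrix. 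The one point worth making explicit is the convention you already flag: if $H$ is instead a Euclidean parity-check matrix (rows spanning $\C^{\perp}$ rather than $\C^{\perp_H}$), the same computation identifies the left nullspace of $HH^\dagger$ with $\C^{\perp}\cap\C^{q}=\bigl(\Hull_H(\C)\bigr)^{q}$, which has the same dimension, so the rank formula is insensitive to which of the two conventions is adopted; stating that one line would make the proof robust to either reading of ``parity check matrix.''
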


In \cite{RefJ (2022) Luo} or \cite{RefJ (2023) Chen}, the authors
gave the following lemma using equivalent codes.

\begin{lemma}(\cite[Corollary 2.2]{RefJ (2023) Chen} or \cite[Theorem 6]{RefJ (2022) Luo})\label{lem 1}
Let $q>2$ be a prime power. If there exists an $[n,k]_{q^2}$ code $\C$ with $l$-dimension Hermitian hull, then there exists an equivalent $[n,k]_{q^2}$ code $\C'$ with $s$-dimensional Hermitian hull for each $0\leq s\leq l$.
\end{lemma}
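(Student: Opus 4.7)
The plan is to translate the statement into a rank-variation problem for the Hermitian Gram matrix $GG^\dagger$ of a generator matrix $G$ of $\C$, and then to lower the hull dimension by one at each step via diagonal monomial rescalings. By Lemma \ref{lem Hull=GG}, $\dim \Hull_H(\C) = k - \mathrm{rank}(GG^\dagger)$, so it is enough to exhibit equivalent generator matrices whose Gram rank is $k-s$ for every $s$ with $0\leq s\leq l$. Since monomial equivalence allows $G\mapsto GPD$ with $P$ a permutation and $D=\mathrm{diag}(d_1,\dots,d_n)$ invertible diagonal, and since $P$ only conjugates the Gram matrix by a permutation (which preserves rank), I restrict attention to $G\mapsto GD$. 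This yields $(GD)(GD)^\dagger = G\Lambda G^\dagger$ with $\Lambda=\mathrm{diag}(d_1^{q+1},\dots,d_n^{q+1})$. Because the norm map $\F_{q^2}^*\to \F_q^*$, $d\mapsto d^{q+1}$, is surjective, $\Lambda$ can be taken as an arbitrary diagonal matrix with entries $\lambda_i\in\F_q^*$, and the problem reduces to controlling $\mathrm{rank}(G\Lambda G^\dagger)$ over such $\Lambda$.

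I would then proceed by induction on $l-s$: it suffices to show that whenever $r:=\mathrm{rank}(G\Lambda G^\dagger)<k$, some entry $\lambda_i$ of $\Lambda$ can be replaced by $\mu_i\in\F_q^*$ so that the new Gram matrix has rank $r+1$. Writing $M=G\Lambda G^\dagger = \sum_{i=1}^n \lambda_i\,\bm{g}_i\bm{g}_i^\dagger$, where $\bm{g}_i$ is the $i$-th column of $G$, the perturbation replaces $M$ by $M'=M+(\mu_i-\lambda_i)\bm{g}_i\bm{g}_i^\dagger$. Applying the standard rank-one update formula to this Hermitian update, $\mathrm{rank}(M')=r+1$ as soon as $\bm{g}_i\notin \mathrm{colspace}(M)$ and $\mu_i\neq\lambda_i$; the usual two-sided non-containment conditions collapse to this single one because $M$ is Hermitian.

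Both requirements can be met simultaneously. Since $G$ has full row rank $k$, its columns $\bm{g}_1,\dots,\bm{g}_n$ span $\F_{q^2}^k$, while $\mathrm{colspace}(M)$ is a proper subspace of dimension $r<k$, so at least one $\bm{g}_i$ lies outside it; moreover $\F_q^*\setminus\{\lambda_i\}$ is nonempty precisely when $|\F_q^*|\geq 2$, i.e.\ when $q>2$, which is exactly the hypothesis used. Iterating this one-step reduction $l-s$ times produces an equivalent code whose Gram matrix has rank $k-s$, hence whose Hermitian hull has dimension exactly $s$.

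The main obstacle is the rank-one update step itself: I need to ensure that the condition $\bm{g}_i\notin\mathrm{colspace}(M)$, combined with $\mu_i\neq \lambda_i$, really forces the rank to jump up by one rather than stay the same or drop. I would verify this with the identity $\mathrm{rank}(M+uv^T)=\mathrm{rank}(M)+1$ iff $u\notin \mathrm{colspace}(M)$ and $v\notin \mathrm{rowspace}(M)$, together with the Hermitian symmetry of $M$, which entrywise relates $\mathrm{rowspace}(M)$ to $\mathrm{colspace}(M)^q$ and thus collapses the two conditions into the single column-space condition verified above.
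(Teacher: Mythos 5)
Your argument is correct, and it is worth noting that the paper itself gives no proof of this lemma: it is imported verbatim from Chen \cite{RefJ (2023) Chen} and Luo et al.\ \cite{RefJ (2022) Luo}, so there is no internal proof to compare against. What you have written is a legitimate self-contained derivation. The key steps all check out: by Lemma \ref{lem Hull=GG} the problem is indeed equivalent to realizing every Gram rank from $k-l$ up to $k$; monomial equivalence does reduce to $G\mapsto GD$ with the permutation absorbed into a relabelling of the diagonal; surjectivity of the norm $\F_{q^2}^*\to\F_q^*$ lets you realize any $\Lambda$ with entries in $\F_q^*$; and the rank-one update $M\mapsto M+(\mu_i-\lambda_i)\bm{g}_i\bm{g}_i^\dagger$ genuinely raises the rank by one when $\bm{g}_i\notin\mathrm{colspace}(M)$, since $M^\dagger=M$ forces $\mathrm{rowspace}(M)=\mathrm{colspace}(M)^{(q)}$ and the Frobenius is bijective, so the two non-containment conditions of the update lemma coincide. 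The existence of a suitable column follows from $\mathrm{rank}(G)=k>r$, and the existence of $\mu_i\in\F_q^*\setminus\{\lambda_i\}$ is exactly where $q>2$ enters. Compared with the cited proofs, which work directly with codewords of the hull (rescale one coordinate at which some hull codeword is nonzero and show the hull strictly shrinks, using the existence of $x\in\F_{q^2}^*$ with $x^{q+1}\neq1$), your version is the same one-coordinate rescaling mechanism expressed on the Gram-matrix side; it is arguably cleaner here because it plugs directly into the $rank(GG^\dagger)$ bookkeeping that the rest of the paper (Propositions \ref{pro GRS 1}--\ref{pro 4}) is built on. One small caution: your $\bm{g}_i$ denotes the $i$-th \emph{column} of $G$, whereas everywhere else in the paper $\bm{g}_i$ is a \emph{row} of a GRS generator matrix; if this proof were inserted into the paper the notation would need to be changed to avoid a clash.
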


\begin{remark}\label{rem 222}
Lemma \ref{lem 1} shows that for a linear code $\C$, if we find a lower bound $\delta$ on its Hermitian hull dimension, then there exists an equivalent code $\C'$ with $s$-dimensional Hermitian hull for each $0\leq s\leq \delta$.
\end{remark}

In \cite{RefJ (2024) Anderson},
Anderson et al. consider the more general relative hulls. For $e$-Galois hulls the following result is obtained.

\begin{lemma}\label{lem Galois}
(\cite[Proposition 3.8]{RefJ (2024) Anderson})
Let $q=p^m>2$ and $\C$ be an $[n,k]_q$ code
with $l$-dimensional $e$-Galois hull, where $1\leq l\leq k$ and $0\leq e\leq m-1$.
If there exists $x\in \F_q^*$ such that $x^{p^e+1}\neq 1$,
then
there exists an equivalent $[n,k]_{q}$ code $\C'$ with $(l-1)$-dimensional $e$-Galois hull.
\end{lemma}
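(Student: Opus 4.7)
My plan is to extend Chen's argument for the Hermitian hull reduction (Lemma \ref{lem 1}) to the general $e$-Galois case, with the hypothesis on the existence of $x \in \F_q^*$ with $x^{p^e+1} \neq 1$ playing the role that $q > 2$ plays there. The key algebraic ingredient is an $e$-Galois analogue of Lemma \ref{lem Hull=GG}: if $G$ is a generator matrix of $\C$ and $M := G G^{p^e T}$, then $\dim \Hull_e(\C) = k - \operatorname{rank}(M)$. This follows from a direct computation: a codeword $\bm{c} = \bm{x} G$ lies in $\C^{\perp_e}$ iff the column vector of $p^e$-th powers of entries of $\bm{x}$ lies in $\ker(M)$, and the Frobenius $\bm{x} \mapsto \bm{x}^{p^e}$ is a bijection on $\F_q^k$. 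Under our hypothesis, $\operatorname{rank}(M) = k - l$.

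Next, I would compute how a diagonal monomial equivalence acts on $M$. Setting $D = \operatorname{diag}(\lambda_1, \dots, \lambda_n)$ with $\lambda_i \in \F_q^*$, the equivalent code $\C D$ has generator $GD$ and $e$-Galois matrix
$$M' = G \, \operatorname{diag}(\lambda_1^{p^e+1}, \dots, \lambda_n^{p^e+1}) \, G^{p^e T}.$$
Thus the diagonal equivalences act on $M$ only through the multiplicative subgroup $S := \{\lambda^{p^e+1} : \lambda \in \F_q^*\}$ of $\F_q^*$, and the hypothesis is precisely the statement $S \neq \{1\}$. Fix some $\mu \in S \setminus \{1\}$.

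I would then choose a generator matrix $G = \begin{pmatrix} U \\ V \end{pmatrix}$ whose first $l$ rows $U$ form a basis of $\Hull_e(\C)$. The defining relation $G U^{p^e T} = 0$ forces $M$ into the block form
$$M = \begin{pmatrix} 0 & U V^{p^e T} \\ 0 & V V^{p^e T} \end{pmatrix},$$
so every row of $M$ has its first $l$ entries equal to zero. Scaling only the $a_0$-th coordinate by $\lambda$ with $\lambda^{p^e+1} = \mu$ then produces the rank-one perturbation
$$M' = M + (\mu - 1)\, \bm{c}_{a_0}\, (\bm{c}_{a_0}^{p^e})^T,$$
where $\bm{c}_{a_0}$ is the $a_0$-th column of $G$. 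By the standard rank-one-update criterion, $\operatorname{rank}(M') = \operatorname{rank}(M) + 1$ as soon as $\bm{c}_{a_0} \notin \operatorname{colsp}(M)$ and $(\bm{c}_{a_0}^{p^e})^T \notin \operatorname{rowsp}(M)$. The row condition is automatic once column $a_0$ of $U$ is nonzero, because then the first $l$ entries of $(\bm{c}_{a_0}^{p^e})^T$ are not all zero whereas those of every element of $\operatorname{rowsp}(M)$ are; and the column condition reduces to a codimension count showing that $\operatorname{rowsp}(V^{p^e}) + \C^{\perp_0}$ is a proper subspace of $\F_q^n$ (of codimension $\geq l$) and therefore cannot contain every standard basis vector $\bm{e}_a$.

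The main obstacle, in my view, is locating a single $a_0$ that satisfies both conditions at once: the indices where $\bm{c}_a \notin \operatorname{colsp}(M)$ may all fail to lie in the support of $U$. In that situation, a single-coordinate scaling cannot work, and one must instead scale two coordinates at once, producing a perturbation of $M$ of rank up to two; the two scalars $\mu_1, \mu_2 \in S$ must then be chosen so that the rank of $M$ rises by exactly one — neither zero nor two. The flexibility granted by $|S| \geq 2$ from the hypothesis is precisely what permits such a choice, through a short finite-field non-vanishing argument in the two parameters $\mu_1$ and $\mu_2$. This is the technical heart of the proof in \cite{RefJ (2024) Anderson}, and it is the step that genuinely requires the assumption $x^{p^e+1} \neq 1$ for some $x \in \F_q^*$.
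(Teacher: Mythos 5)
First, a remark on the comparison itself: the paper does not prove this lemma at all --- it is quoted verbatim from \cite[Proposition 3.8]{RefJ (2024) Anderson} --- so your proposal can only be assessed on its own terms, not against an in-paper argument.

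Your setup is correct and is the right machinery: the rank formula $\dim\Hull_e(\C)=k-\operatorname{rank}(M)$ with $M=G(G^{p^e})^{T}$, the observation that a diagonal equivalence acts on $M$ only through $\lambda^{p^e+1}$, the block form of $M$ (first $l$ columns zero) once the first $l$ rows of $G$ are a hull basis, the rank-one update $M'=M+(\mu-1)\,\bm{c}_{a_0}(\bm{c}_{a_0}^{p^e})^{T}$, and the criterion for the rank to rise by exactly one. The genuine gap is the final paragraph: you assert that finding a single index $a_0$ satisfying both membership conditions is the ``main obstacle,'' sketch a two-coordinate perturbation you never carry out, and defer that case to the reference. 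As written, the proof is therefore unfinished at precisely the step you call its technical heart.

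In fact that obstacle does not exist, because the two conditions coincide. Since the first $l$ columns of $M$ vanish and $\operatorname{rank}(M)=k-l$, the row space of $M$ is exactly $\{0\}^{l}\times\F_q^{\,k-l}$; hence $(\bm{c}_a^{p^e})^{T}\notin\operatorname{rowsp}(M)$ if and only if column $a$ of $U$ is nonzero, i.e.\ $a\in\operatorname{supp}(\Hull_e(\C))$. For the column condition, $\operatorname{colsp}(M)=\{G\bm{y}^{T}:\bm{y}\in\C^{p^e}\}$ and $G\bm{z}^{T}=0$ exactly when $\bm{z}\in\C^{\perp_0}$, so $\bm{c}_a=G\bm{e}_a^{T}\in\operatorname{colsp}(M)$ if and only if $\bm{e}_a\in\C^{p^e}+\C^{\perp_0}$. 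Taking Euclidean duals, $(\C^{p^e})^{\perp_0}=\C^{\perp_e}$, whence $(\C^{p^e}+\C^{\perp_0})^{\perp_0}=\C^{\perp_e}\cap\C=\Hull_e(\C)$ and therefore $\C^{p^e}+\C^{\perp_0}=\Hull_e(\C)^{\perp_0}$. Thus $\bm{c}_a\notin\operatorname{colsp}(M)$ if and only if some hull codeword is nonzero in coordinate $a$ --- again exactly $a\in\operatorname{supp}(\Hull_e(\C))$. Since $l\geq 1$, the hull is nonzero and its support is nonempty, so any $a_0$ in that support meets both conditions at once; the hypothesis that some $x\in\F_q^{*}$ has $x^{p^e+1}\neq 1$ is needed only to supply $\mu\neq 1$, making the perturbation genuinely of rank one. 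No two-coordinate argument, and no appeal to the reference, is needed to close the proof.
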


By Lemma \ref{lem Galois}, we have the following corollary, which is a generalisation of Lemma \ref{lem 1}.

\begin{corollary}\label{cor Galois}
Let $q=p^m>4$ be a prime power. If there exists an $[n,k]_{q}$ code $\C$ with $l$-dimension $e$-Galois hull, where $0\leq e\leq m-1$, then there exists an equivalent $[n,k]_{q}$ code $\C'$ with $s$-dimensional $e$-Galois hull for each $0\leq s\leq l$.
\end{corollary}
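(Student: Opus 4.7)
The plan is to reduce the corollary to an iterated application of Lemma \ref{lem Galois}. That lemma lowers the $e$-Galois hull dimension by exactly one, at the cost of passing to an equivalent code, provided the arithmetic hypothesis ``there exists $x\in \F_q^*$ with $x^{p^e+1}\neq 1$'' holds. So the entire work is to (i) verify this hypothesis automatically whenever $q=p^m>4$ and $0\leq e\leq m-1$, and (ii) chain the lemma $l-s$ times.

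For step (i), the set $\{x\in \F_q^*:x^{p^e+1}=1\}$ is a subgroup of $\F_q^*$ of order $\gcd(p^e+1,q-1)$. It suffices to show this order is strictly less than $q-1$, which is implied by $p^e+1<q-1$. Using $e\leq m-1$, I would reduce the problem to the inequality $p^{m-1}+1<p^m-1$, i.e.\ $p^{m-1}(p-1)>2$. A short case check shows this fails only for $(p,m)\in\{(2,1),(2,2),(3,1)\}$, corresponding exactly to $q\in\{2,3,4\}$; the hypothesis $q>4$ excludes all of these, so in every remaining case the desired $x$ exists.

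For step (ii), I apply Lemma \ref{lem Galois} to $\C$ to produce an equivalent $[n,k]_q$ code $\C_{l-1}$ whose $e$-Galois hull has dimension $l-1$. Since $\C_{l-1}$ has the same parameters and sits over the same field, the hypothesis verified in step (i) still applies, so Lemma \ref{lem Galois} can be applied again to get $\C_{l-2}$, and so on; after $l-s$ iterations I obtain an $[n,k]_q$ code $\C_s$ with $s$-dimensional $e$-Galois hull. Transitivity of monomial equivalence gives that $\C_s$ is equivalent to $\C$, so setting $\C'=\C_s$ finishes the proof.

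The main (and only) obstacle is the arithmetic case analysis in step (i): one must be careful that the bound $q>4$ is sharp, since for $q\in\{2,3,4\}$ there are values of $e$ (e.g.\ $e=1$ when $q=4$) for which every $x\in\F_q^*$ satisfies $x^{p^e+1}=1$. Once this bookkeeping is handled, the iteration in step (ii) is entirely mechanical.
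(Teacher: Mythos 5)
Your proof is correct and takes essentially the same approach as the paper: both arguments reduce to checking that for $q>4$ and $0\leq e\leq m-1$ some $x\in\F_q^*$ satisfies $x^{p^e+1}\neq 1$ (the paper phrases this via a primitive element and the condition $(q-1)\nmid(p^e+1)$, which rests on the same inequality $p^{m-1}+1<q-1$ that you verify), and then iterate Lemma \ref{lem Galois}. Your write-up merely makes explicit the case analysis for small $q$ and the chaining of equivalences, which the paper leaves implicit.
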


\begin{proof}
Let $w$ be a primitive element of $\F_{q}$.
Then $w$ is a primitive $(q-1)$-th root of unity.
It follows that
$w^{p^e+1}=1$ if and only if
 $(q-1)\mid (p^e+1)$.
 Note that $(q-1)\nmid (p^e+1)$ can always be met for $q>4$ and $0\leq e\leq m-1$.
 By Lemma \ref{lem Galois}, we have the desired result.
 This completes the proof.
\end{proof}

\section{Three classes of GRS codes from known GRS codes}\label{sec3}

In this section, we present three classes of GRS codes that can be obtained from known GRS codes and determine the dimension of the hulls of the obtained GRS codes.
Now, we always prove all the results in the Hermitian case.
 Moreover, by Corollary \ref{cor Galois}, some results still hold for more general Galois inner products, which we note as corollaries.

\subsection{Extend the length of the GRS code}

In \cite{RefJ (2022) Luo}, for a given $[n,k,d]_{q^2}$ code with $l$-dimensional Hermitian hull, Luo et al. extended its length to obtain an $[n+1,k,d']_{q^2}$ code with $(l+1)$-dimensional Hermitian hull,
where $d\leq d'\leq d+1$.
Inspired by this result, we study the Hull-variable problem of extending the length of GRS codes.
Different from the results of Luo et al. the codes we obtain are always MDS codes.

\begin{proposition}\label{pro GRS 1}
Let $q>2$ be a prime power and $GRS_k(\bm{a},\bm{v})$ be an $[n,k]_{q^2}$ GRS code with generator matrix $G=G_{k}(\bm{a},\bm{v})$.
For any $\lambda\in \F_q^*$, define
$$S_\lambda=\begin{pmatrix}
O_{(k-1)\times (k-1)} & O_{(k-1)\times 1}\\
O_{1\times (k-1)} & \lambda\\
\end{pmatrix}\ and \
\bar{S}_\lambda=\begin{pmatrix}
\lambda & O_{1\times (k-1)}\\
O_{(k-1)\times 1} & O_{(k-1)\times(k-1)}\\
\end{pmatrix}.$$
Then the following statements hold.
\begin{itemize}
\item[(1)]
There exists an $[n+1,k]_{q^2}$ EGRS code with $s$-dimensional Hermitian hull for each $0\leq s\leq k-rank(GG^\dagger+S_\lambda)$.
\item[(2)]
If $n<q^2$,
then there exists an $[n+1,k]_{q^2}$ GRS code  with $s$-dimensional Hermitian hull for each $0\leq s\leq k-rank(GG^\dagger+\bar{S}_\lambda)$.
\item[(3)]
If $n<q^2$,
then there exists an $[n+2,k]_{q^2}$ EGRS code  with $s$-dimensional Hermitian hull for each $0\leq s\leq k-rank(GG^\dagger+S_{\lambda_1}+\bar{S}_{\lambda_2})$, where $\lambda_1,\lambda_2\in \F_{q}^*$.
\end{itemize}
\end{proposition}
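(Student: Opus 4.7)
The overarching approach is to reduce each of the three claims to a Gram-matrix computation via Lemma \ref{lem Hull=GG}: for any generator matrix $G'$ of a code $\C'$ over $\F_{q^2}$, one has $\dim \Hull_H(\C') = k - \mathrm{rank}(G'(G')^\dagger)$. So for each part it suffices to exhibit an explicit extension of $G$ whose Gram matrix is precisely the matrix whose rank appears in the stated bound. Lemma \ref{lem 1}, which applies since $q>2$ and the codes live over $\F_{q^2}$, then lets us descend from the maximal hull dimension so obtained down to any smaller nonnegative value $s$ via a monomially equivalent code.

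For part (1) I plan to append to $G$ the single column $(0,\dots,0,u)^{\mathcal{T}}$, where $u\in\F_{q^2}^*$ is chosen so that $u^{q+1}=\lambda$; such $u$ exists because the norm map $\F_{q^2}^*\to\F_q^*$, $x\mapsto x^{q+1}$, is surjective. The enlarged $k\times (n+1)$ matrix generates an $[n+1,k]_{q^2}$ MDS code that is monomially equivalent, via $\mathrm{diag}(1,\dots,1,u)$, to $GRS_k(\bm{a},\bm{v},\infty)$, and hence qualifies as an EGRS code; a direct computation gives Gram matrix $GG^\dagger + u^{q+1}\bm{e}_k\bm{e}_k^\dagger = GG^\dagger + S_\lambda$, as required.

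For part (2) the hypothesis $n<q^2$ combined with Remark \ref{rem 1}(1) permits the assumption $\bm{a}\in (\F_{q^2}^*)^n$, so the point $0$ is available as an additional evaluation point. Take $\bm{a}'=(\bm{a},0)$ and $\bm{v}'=(\bm{v},v_{n+1})$ with $v_{n+1}\in\F_{q^2}^*$ chosen so that $v_{n+1}^{q+1}=\lambda$. Using the convention $0^0=1$ introduced in Section \ref{sec2}, the new evaluation column is $(v_{n+1},0,\dots,0)^{\mathcal{T}}$, so the Gram matrix picks up exactly $v_{n+1}^{q+1}\bm{e}_1\bm{e}_1^\dagger = \bar{S}_\lambda$, producing a genuine $[n+1,k]_{q^2}$ GRS code with Gram $GG^\dagger+\bar{S}_\lambda$. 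Part (3) is obtained by performing both modifications simultaneously: starting from $GRS_k(\bm{a},\bm{v})$ with $\bm{a}\in(\F_{q^2}^*)^n$, append the ``zero'' column contributing $\bar{S}_{\lambda_2}$ together with the scaled ``infinity'' column contributing $S_{\lambda_1}$, giving an $[n+2,k]_{q^2}$ EGRS code whose Gram matrix equals $GG^\dagger + S_{\lambda_1} + \bar{S}_{\lambda_2}$.

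The only genuinely delicate step is confirming that the scaled ``infinity'' extension in parts (1) and (3) really belongs to the EGRS family in the strict sense of Section \ref{sec2}. I plan to handle this either by invoking the monomial equivalence to the textbook $GRS_k(\bm{a},\bm{v},\infty)$, or by absorbing the scalar $u$ into $\bm{v}$ using Lemma \ref{lem equiv} to present the constructed code in standard EGRS form. Once that bookkeeping is settled, the MDS property follows from the standard Vandermonde-minor argument, the Gram-matrix identity is immediate from a block expansion, and Lemma \ref{lem 1} supplies every intermediate value of the hull dimension down to zero.
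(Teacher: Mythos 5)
Your proposal is correct and follows essentially the same route as the paper: reduce each part to a Gram-matrix rank computation via Lemma \ref{lem Hull=GG} for the extension by the $\infty$-column and/or the evaluation point $0$ (available since $n<q^2$ by Remark \ref{rem 1}), then descend to all smaller hull dimensions with Lemma \ref{lem 1}. The only difference is cosmetic normalization — you scale just the appended column by $u$ with $u^{q+1}=\lambda$ (your code is in fact literally $GRS_k(\bm{a},u^{-1}\bm{v},\infty)$, which settles the ``is it genuinely EGRS'' worry), whereas the paper scales all of $\bm{v}$ by $\xi$ with $\xi^{q+1}=\lambda^{-1}$ and obtains the Gram matrix $\lambda^{-1}(GG^\dagger+S_\lambda)$ of the same rank.
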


\begin{proof}
(1)
For any $\lambda\in \F_q^*$, there exists $\xi$ such that $\xi^{q+1}=\lambda^{-1}$.
Let $GRS_k(\bm{a},\xi\bm{v},\infty)$ be an EGRS code with generator matrix
$G_k(\bm{a},\xi\bm{v},\infty)=[(\xi\bm{g}_0,\xi\bm{g}_1,\dots,\xi\bm{g}_{k-1})^\mathcal{T}|\infty^T]$,
where $\bm{g}_i$ and $\infty$ are defined as above.
Then we have
$$G_k(\bm{a},\xi\bm{v},\infty)G_k(\bm{a},\xi\bm{v},\infty)^\dagger=\lambda^{-1} GG^\dagger+\lambda^{-1}S_\lambda.$$
Hence $rank(G_k(\bm{a},\xi\bm{v},\infty)G_k(\bm{a},\xi\bm{v},\infty)^\dagger)=rank(GG^\dagger+S_\lambda).$
By Lemma \ref{lem Hull=GG}, $GRS_k(\bm{a},\xi\bm{v},\infty)$ is an $[n+1,k]_{q^2}$ EGRS code with $(k-rank(GG^\dagger+S_\lambda))$-dimensional Hermitian hull.
The desired result follows from Lemma \ref{lem 1}.

(2)
Since $n<q^2$, by Remark \ref{rem 1}, we assume $\bm{a}\in (\F_{q^2}^*)^n$ without loss of generality.
For any $\lambda\in \F_q^*$, there exists $\xi$ such that $\xi^{q+1}=\lambda$.
Let $\tilde{\bm{a}}=(0,\bm{a})\in \F_{q^2}^{n+1}$, $\tilde{\bm{v}}=(\xi,\bm{v})\in (\F_{q^2}^*)^{n+1}$.
Then $GRS_k(\tilde{\bm{a}},\tilde{\bm{v}})$ be a GRS code with generator matrix
$G_k(\tilde{\bm{a}},\tilde{\bm{v}})=[\bm{u}^T|(\bm{g}_0,\bm{g}_1,\dots,\bm{g}_{k-1})^\mathcal{T}]$,
where $\bm{u}=(\xi,0\dots,0)\in \F_{q^2}^k$.
Then we have
$$G_k(\tilde{\bm{a}},\tilde{\bm{v}})G_k(\tilde{\bm{a}},\tilde{\bm{v}})^\dagger= GG^\dagger+\bar{S}_\lambda.$$
Hence $rank(G_k(\tilde{\bm{a}},\tilde{\bm{v}})G_k(\tilde{\bm{a}},\tilde{\bm{v}})^\dagger)=rank(GG^\dagger+\bar{S}_\lambda).$
By Lemma \ref{lem Hull=GG}, $GRS_k(\tilde{\bm{a}},\tilde{\bm{v}})$ is an $[n+1,k]_{q^2}$ GRS code with $(k-rank(GG^\dagger+\bar{S}_\lambda))$-dimensional Hermitian hull.
 By Lemma \ref{lem 1}, we have the desired result.

(3) Combining (1) and (2) gives the desired result. This completes the proof.
\end{proof}

Note that $rank(GG^\dagger+S_\lambda)\leq rank(GG^\dagger)+rank(S_\lambda)=rank(GG^\dagger)+1$ and
$rank(GG^\dagger+S_{\lambda_1}+\bar{S}_{\lambda_2})\leq rank(GG^\dagger)+rank(S_{\lambda_1}+\bar{S}_{\lambda_2})=rank(GG^\dagger)+2$,
 we have the following corollary.

\begin{corollary}\label{cor GRS 1}
Let $q>2$ be a prime power.
If there exists an $[n,k]_{q^2}$ GRS code with $l$-dimensional Hermitian hull, where $0\leq l\leq k$, then for any $1\leq i\leq \min\{l,q^2+1-n\}$, there exists an $[n+i,k]_{q}$ MDS code with $s$-dimensional Hermitian hull for each $0\leq s\leq l-i$.
\end{corollary}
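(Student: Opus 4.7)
The plan is to prove this by induction on $i$, iteratively applying Proposition \ref{pro GRS 1} and tracking how the Gram matrix $GG^{\dagger}$ of the generator matrix evolves under extension. By Lemma \ref{lem Hull=GG}, the hypothesis $\dim \Hull_H(\C)=l$ is equivalent to $\mathrm{rank}(GG^{\dagger})=k-l$. Since the perturbation matrices $S_\lambda$ and $\bar{S}_\lambda$ each have rank one, subadditivity of rank yields $\mathrm{rank}(GG^{\dagger}+S_\lambda)\leq k-l+1$ and similarly for $\bar{S}_\lambda$; thus each application of Proposition \ref{pro GRS 1} reduces the guaranteed maximum hull dimension by at most one while increasing the length by one.

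The base case $i=1$ is handled immediately. If $n<q^2$, apply Proposition \ref{pro GRS 1}(2) to obtain an $[n+1,k]_{q^2}$ GRS code with $s$-dimensional Hermitian hull for every $0\leq s\leq l-1$; if $n=q^2$, apply Proposition \ref{pro GRS 1}(1) instead to obtain an $[q^2+1,k]_{q^2}$ EGRS code with the same range of hull dimensions. For the inductive step with $i+1\leq\min\{l,q^2+1-n\}$, I need the code produced at step $i$ to still fit the hypotheses of the proposition. The cleanest way is to keep the intermediate codes GRS by using part (2) at every step with $n+i-1<q^2$: the induction hypothesis gives an $[n+i-1,k]_{q^2}$ GRS code with $(l-i+1)$-dimensional Hermitian hull, whose Gram matrix has rank $k-l+i-1$, and part (2) then yields an $[n+i,k]_{q^2}$ GRS code with $(l-i)$-dimensional hull. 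At the final step, if the prescribed value of $i$ equals $q^2+1-n$, the length has reached $q^2$ and we cannot apply part (2) once more; instead we apply part (1) to the length-$q^2$ GRS code to adjoin the point at infinity, finishing with an $[q^2+1,k]_{q^2}$ EGRS code of $(l-i)$-dimensional hull. In either case, the full range $0\leq s\leq l-i$ of achievable hull dimensions is delivered by the propagation clause inside Proposition \ref{pro GRS 1} itself (or equivalently by Lemma \ref{lem 1}), and GRS/EGRS codes are MDS, so the resulting code is an $[n+i,k]_{q^2}$ MDS code.

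The main subtlety I expect is the bookkeeping needed to iterate part (2). Part (2) requires $\bm{a}\in(\F_{q^2}^{\ast})^{n+i-1}$ before it can adjoin a new evaluation point at $0$, but the previous iteration already inserted a $0$ into $\bm{a}$. This is resolved by invoking Remark \ref{rem 1}: before the next extension one applies an affine transformation to shift all $n+i-1$ evaluation points away from $0$, which is possible precisely when $n+i-1<q^2$ (so that at least one nonzero element of $\F_{q^2}$ is free to serve as the image of $0$). This is exactly where the length bound $i\leq q^2+1-n$ enters the corollary, and it also explains why the very last increment, if needed, must be carried out via the point-at-infinity construction of part (1) rather than via part (2). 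The remaining rank estimates are routine applications of subadditivity, so the content of the argument lies in setting up the correct evaluation vectors across successive extensions.
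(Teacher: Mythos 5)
Your proof is correct and follows essentially the same route the paper intends: the corollary is stated right after the observation that $rank(GG^\dagger+S_\lambda)\leq rank(GG^\dagger)+1$ and $rank(GG^\dagger+S_{\lambda_1}+\bar S_{\lambda_2})\leq rank(GG^\dagger)+2$, i.e.\ it is obtained by iterating Proposition \ref{pro GRS 1} with the rank-subadditivity bound, exactly as you do. Your write-up merely makes explicit the bookkeeping the paper leaves implicit (re-shifting the evaluation points via Remark \ref{rem 1} so part (2) can be reapplied while $n+i-1<q^2$, and switching to the point at infinity for the final step when the length reaches $q^2$), which is a faithful and slightly more careful rendering of the same argument.
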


By Corollary \ref{cor Galois},
the following result for Galois hull can be proved similarly.

\begin{corollary}
Let $q>4$ be a prime power.
If there exists an $[n,k]_{q}$ GRS code with $l$-dimensional $e$-Galois hull, where $0\leq l\leq k$ and $0\leq e\leq m-1$, then for any $1\leq i\leq \min\{l,q+1-n\}$, there exists an $[n+i,k]_{q}$ MDS code with $s$-dimensional $e$-Galois hull for each $0\leq s\leq l-i$.
\end{corollary}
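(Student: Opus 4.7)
The plan is to follow the same template used to deduce Corollary \ref{cor GRS 1} from Proposition \ref{pro GRS 1}, systematically replacing the Hermitian ingredients with their $e$-Galois counterparts. The two substitutions needed are (i) a Galois analogue of Proposition \ref{pro GRS 1} governing how hull dimensions change under a one-step length extension of a GRS code, and (ii) Corollary \ref{cor Galois} in place of Lemma \ref{lem 1} for repeatedly reducing the hull dimension by equivalence.

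For step (i), starting from an $[n,k]_q$ GRS code $\GRS_k(\bm{a},\bm{v})$ with generator matrix $G$, I would pick any $\xi\in\F_q^*$ and form either the EGRS extension $\GRS_k(\bm{a},\xi\bm{v},\infty)$ of length $n+1$, or, when $n<q$ and $\bm{a}\in(\F_q^*)^n$ (which we may assume by Remark \ref{rem 1}), the GRS extension $\GRS_k(\tilde{\bm{a}},\tilde{\bm{v}})$ with $\tilde{\bm{a}}=(0,\bm{a})$ and $\tilde{\bm{v}}=(\xi,\bm{v})$. A direct computation in the spirit of the proof of Proposition \ref{pro GRS 1}, but carried out with the $\ddagger$-operator in place of the $\dagger$-operator, gives Gram matrices of the form
\[
\xi^{p^e+1}\bigl(GG^\ddagger+S\bigr),
\]
where $S$ is supported in a single corner and has rank at most one. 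Applying the $e$-Galois version of Lemma \ref{lem Hull=GG} (proved by the identical argument with $\dagger$ replaced by $\ddagger$), the extended code has $e$-Galois hull of dimension $k-\mathrm{rank}(GG^\ddagger+S)\ge k-\mathrm{rank}(GG^\ddagger)-1=l-1$, by subadditivity of rank.

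For step (ii), since $q>4$, Corollary \ref{cor Galois} tells us that any $[n+1,k]_q$ MDS code with $e$-Galois hull of dimension at least $l-1$ admits, for each $0\le s\le l-1$, an equivalent MDS code with exactly $s$-dimensional $e$-Galois hull. Iterating the one-step extension $i$ times, the length constraint $i\le q+1-n$ guarantees enough $\F_q$-evaluation points together with the infinity symbol, while the hull-budget constraint $i\le l$ guarantees that each extension has positive hull dimension to draw from. After $i$ steps we arrive at an $[n+i,k]_q$ MDS code with $s$-dimensional $e$-Galois hull for every $0\le s\le l-i$, which is precisely the claim.

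The main obstacle I anticipate is a subtle point inside step (i): over $\F_{q^2}$ the map $x\mapsto x^{q+1}$ surjects onto $\F_q^*$, which is what lets the Hermitian proof of Proposition \ref{pro GRS 1} realize an \emph{arbitrary} scalar $\lambda$ in the rank-one correction $S_\lambda$, whereas over $\F_q$ the image of $x\mapsto x^{p^e+1}$ is in general only a proper subgroup of $\F_q^*$. Fortunately full surjectivity is not actually needed here: we only require \emph{some} nonzero $\xi\in\F_q^*$ in order to produce one extended code whose $e$-Galois hull has dimension at least $l-1$, after which Corollary \ref{cor Galois} absorbs every remaining choice of $s$. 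Once this single-step bound is in place, the iteration and the accounting of accessible hull dimensions proceed mechanically.
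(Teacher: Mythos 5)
Your proposal is correct and follows essentially the route the paper intends (the paper only remarks that the Galois case "can be proved similarly" via Corollary \ref{cor Galois} in place of Lemma \ref{lem 1}): a one-step length extension perturbs the Gram matrix $GG^\ddagger$ by a rank-one corner term, so the $e$-Galois hull dimension drops by at most one per step, and Corollary \ref{cor Galois} then sweeps out all dimensions $0\le s\le l-i$. Your observation that surjectivity of $x\mapsto x^{p^e+1}$ is not needed is exactly the right point; the only cosmetic slip is that with the paper's definition of $\ddagger$ the scalar is $\xi^{1+p^{m-e}}$ and it multiplies $GG^\ddagger$ rather than the corner term, which changes nothing in the rank estimate.
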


Under certain conditions, by Proposition \ref{pro GRS 1}, we obtain some Hermitian self-orthogonal MDS codes.

\begin{theorem}\label{th GRS 1}
Let $GRS_k(\bm{a},\bm{v})$ be an $[n,k]_{q^2}$ GRS code with $(k-1)$-dimensional Hermitian hull.
Then the following statements hold.
\begin{itemize}
\item[(1)]
If $\Hull_H(GRS_k(\bm{a},\bm{v}))=GRS_{k-1}(\bm{a},\bm{v})$,
then there exists an $[n+1,k]_{q^2}$ Hermitian self-orthogonal EGRS code.
\item[(2)]
If $n<q^2$ and $\Hull_H(GRS_k(\bm{a},\bm{v}))=GRS_{k-1,1}(\bm{a},\bm{v})$,
then there exists an $[n+1,k]_{q^2}$ Hermitian self-orthogonal GRS code.
\end{itemize}
\end{theorem}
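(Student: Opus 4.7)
The plan is to reduce both parts to a direct application of Proposition \ref{pro GRS 1} by first showing that, under the hypothesis on the shape of the hull, the matrix $GG^\dagger$ is supported in a single entry whose value lies in $\F_q$, and then choosing $\lambda\in\F_q^*$ to cancel that entry.

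For part (1), I would first unpack the assumption $\Hull_H(GRS_k(\bm{a},\bm{v}))=GRS_{k-1}(\bm{a},\bm{v})$: the rows $\bm{g}_0,\dots,\bm{g}_{k-2}$ of $G$ then lie in $GRS_k(\bm{a},\bm{v})^{\perp_H}$, so $\langle\bm{g}_i,\bm{g}_j\rangle_H=0$ whenever $\min(i,j)\leq k-2$ (using conjugate-symmetry of the Hermitian form). Consequently the only possibly nonzero entry of $GG^\dagger$ is the $(k,k)$-entry $\mu:=\langle\bm{g}_{k-1},\bm{g}_{k-1}\rangle_H=\sum_{l=1}^n (v_la_l^{k-1})^{q+1}$. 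Because $x^{q+1}$ is the norm from $\F_{q^2}$ to $\F_q$, we get $\mu\in\F_q$; and Lemma \ref{lem Hull=GG} gives $\mathrm{rank}(GG^\dagger)=k-(k-1)=1$, so $\mu\neq 0$. Taking $\lambda=-\mu\in\F_q^*$ then produces $GG^\dagger+S_\lambda=0$, and Proposition \ref{pro GRS 1}(1) (applied with $s=k$) yields an $[n+1,k]_{q^2}$ EGRS code with $k$-dimensional Hermitian hull, i.e.\ a Hermitian self-orthogonal EGRS code.

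Part (2) is entirely symmetric. From $\Hull_H(GRS_k(\bm{a},\bm{v}))=GRS_{k-1,1}(\bm{a},\bm{v})$ the rows $\bm{g}_1,\dots,\bm{g}_{k-1}$ lie in the Hermitian dual, so $GG^\dagger$ is concentrated in its $(1,1)$-entry $\nu:=\sum_{l=1}^n v_l^{q+1}\in\F_q^*$ by the same norm-map and rank count. Choosing $\lambda=-\nu$ yields $GG^\dagger+\bar S_\lambda=0$, and Proposition \ref{pro GRS 1}(2), whose hypothesis $n<q^2$ is part of the standing assumption, delivers an $[n+1,k]_{q^2}$ Hermitian self-orthogonal GRS code.

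The main delicate point I expect is not combinatorial but arithmetic: ensuring that the cancelling scalar lies in $\F_q^*$ rather than merely in $\F_{q^2}^*$. This is exactly where the norm identity $x^{q+1}\in\F_q$ is essential; it guarantees both that $\mu,\nu\in\F_q$ and that the auxiliary $\xi$ with $\xi^{q+1}=\lambda^{\pm1}$ required inside the proof of Proposition \ref{pro GRS 1} exists in $\F_{q^2}$. Once this observation is recorded, the rest of the argument is a one-line substitution and no further computation is needed.
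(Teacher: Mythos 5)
Your proposal is correct and follows essentially the same route as the paper: both reduce to Proposition \ref{pro GRS 1} by observing that the hull hypothesis forces $GG^\dagger$ to be supported in a single diagonal entry lying in $\F_q^*$ (the paper notes $-(\bm{g}_{k-1}\bm{g}_{k-1}^\dagger)^{-1}\in\F_q^*$, which is your norm-map observation), and then choose $\lambda$ to cancel it. Your version merely spells out a few details the paper leaves implicit, such as the rank count showing the entry is nonzero and the explicit treatment of part (2).
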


\begin{proof}
By Proposition \ref{pro GRS 1}, the proof of (2) is exactly similar to (1), then we only prove (1).
Let the GRS code $GRS_k(\bm{a},\bm{v})$ have a generator matrix $G_{k}(\bm{a},\bm{v})=(\bm{g}_0,\bm{g}_1,\dots,\bm{g}_{k-1})^\mathcal{T}$,
where $\bm{g}_i$ is defined as above.
Since $\Hull_H(GRS_k(\bm{a},\bm{v}))=GRS_{k-1}(\bm{a},\bm{v})$, it follows that $G_{k-1}(\bm{a},\bm{v})=(\bm{g}_0,\bm{g}_1,\dots,\bm{g}_{k-2})^\mathcal{T}$ is a basis of the code $\Hull_H(GRS_k(\bm{a},\bm{v}))$.
Note that $\bm{g}_i\bm{g}_j^\dagger=0$ if and only if $\bm{g}_j\bm{g}_i^\dagger=0$,
then we have $\bm{g}_i\bm{g}_j^\dagger=0$ for $0\leq i\leq j\leq k-1$ except $i=j=k-1$ and $\bm{g}_{k-1}\bm{g}_{k-1}^\dagger\neq 0$.
Since $-(\bm{g}_{k-1}\bm{g}_{k-1}^\dagger)^{-1}\in \F_{q}^*$, then there exists
$\xi\in \F_{q^2}^*$ such that $\xi^{q+1}=-(\bm{g}_{k-1}\bm{g}_{k-1}^\dagger)^{-1}$.
By Proposition \ref{pro GRS 1}, it is easy to check that
$GRS_k(\bm{a},\xi\bm{v},\infty)$ is an $[n+1,k]_{q^2}$ Hermitian self-orthogonal EGRS code.
This completes the proof.
\end{proof}

\begin{remark}
In \cite[Theorems 9, 10, 11, 12 and 13]{RefJ (2024) Luo MDS}, Luo et al. constructed some GRS codes satisfying $\Hull_H(GRS_k(\bm{a},\bm{v}))=GRS_{k-1}(\bm{a},\bm{v})$. From Theorem \ref{th GRS 1}, we can directly obtain many Hermitian self-orthogonal MDS codes.
\end{remark}

\subsection{Increase the dimension of the GRS code}

In this subsection, we study the Hull-variable problem of
  increasing the dimension of GRS codes.
  Moreover, from the currently known Hermitian self-orthogonal GRS codes,
  we can directly obtain MDS codes with larger dimensions and which have
  determined Hermitian hull dimensions.

\begin{proposition}\label{pro 3}
Let $q>2$ be a prime power and $GRS_k(\bm{a},\bm{v})$ be an $[n,k]_{q^2}$ GRS code
 with generator matrix $G=G_{k}(\bm{a},\bm{v})$
and $l$-dimensional Hermitian hull,
where $0\leq l\leq k$.
Let $\bm{g}_i$ be defined as above, then the following statements hold.
\begin{itemize}
\item[(1)]
If $\bm{g}_k\in GRS_k(\bm{a},\bm{v})^{\bot_H}$ and $\bm{g}_k\bm{g}_k^\dagger=0$ $(resp.\ \bm{g}_{-1}\in GRS_k(\bm{a},\bm{v})^{\bot_H}\ and\ \bm{g}_{-1}\bm{g}_{-1}^\dagger=0\ with\ n<q^2)$,
 then
there exists an $[n,k+1]_{q^2}$ GRS code with $s$-dimensional Hermitian hull for each $0\leq s\leq l+1$.
\item[(2)]
If $\bm{g}_k\notin (\Hull_H(GRS_k(\bm{a},\bm{v})))^{\bot_H}$ $(resp.\ \bm{g}_{-1}\notin (\Hull_H(GRS_k(\bm{a},\bm{v})))^{\bot_H}\ with\ n<q^2)$,
then there exists an $[n,k+1]_{q^2}$ GRS code  with $s$-dimensional Hermitian hull for each $0\leq s\leq l-1$.
\item[(3)]
If $\bm{g}_k\in (\Hull_H(GRS_k(\bm{a},\bm{v})))^{\bot_H}$ $(resp.\ \bm{g}_{-1}\in (\Hull_H(GRS_k(\bm{a},\bm{v})))^{\bot_H}\ with\ n<q^2)$,
then there exists an $[n,k+1]_{q^2}$ GRS code  with $s$-dimensional Hermitian hull for each  $0\leq s\leq l$.
\item[(4)]
If $\bm{g}_{k}\bm{g}_k^{\dagger}\neq 0$ $(resp.\ \bm{g}_{-1}\bm{g}_{-1}^{\dagger}\neq 0\ with\ n<q^2)$,
then there exists an $[n,k+1]_{q^2}$ GRS code with $s$-dimensional Hermitian hull for each $0\leq s\leq k-rank(GSG^\dagger)$, where $S=\bm{g}_k\bm{g}_k^\dagger E_n-\bm{g}_k^\dagger\bm{g}_k$ $(resp.\ S=\bm{g}_{-1}\bm{g}_{-1}^\dagger E_n-\bm{g}_{-1}^\dagger\bm{g}_{-1})$ be an $n\times n$ matrix with rank $n-1$.
\end{itemize}
\end{proposition}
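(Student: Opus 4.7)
The plan is to compute the Hermitian hull dimension of the candidate $(k{+}1)$-dimensional GRS code $\C'$ whose generator matrix $G'$ is obtained by adjoining the row $\bm{g}_k$ (respectively $\bm{g}_{-1}$, using Remark~\ref{rem 1} to arrange $\bm{a}\in(\F_{q^2}^\ast)^n$ when $n<q^2$) to $G$, and then to invoke Lemma~\ref{lem 1} to descend to every smaller hull dimension. The passage to smaller hulls is legitimate because monomial equivalence preserves the GRS structure: permuting columns reindexes $(\bm{a},\bm{v})$ and scaling them modifies $\bm{v}$. With $M:=GG^\dagger$, $\bm{b}:=G\bm{g}_k^\dagger$, and $c:=\bm{g}_k\bm{g}_k^\dagger$, a direct computation gives the block decomposition
$$G'(G')^\dagger=\begin{pmatrix}M & \bm{b}\\ \bm{b}^\dagger & c\end{pmatrix},$$
with $M^\dagger=M$ and $c\in\F_q$. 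By Lemma~\ref{lem Hull=GG}, $\dim\Hull_H(\C')=(k{+}1)-\mathrm{rank}(G'(G')^\dagger)$, so everything reduces to computing this rank in each case.

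A key translation is that since $\Hull_H(\C)=\{\bm{u}G:\bm{u}M=0\}$, the condition $\bm{g}_k\in(\Hull_H(\C))^{\perp_H}$ is equivalent to $\bm{u}\bm{b}=0$ for every $\bm{u}$ in the left null space of $M$, i.e.\ $\bm{b}\in\mathrm{col}(M)$; by Hermitianity this is also equivalent to $\bm{b}^\dagger\in\mathrm{row}(M)$. With this in hand, Case~(1) is immediate: $\bm{b}=0$ and $c=0$ make the matrix block-diagonal of rank $k-l$, yielding hull dimension $l+1$. For Case~(3), write $\bm{b}=M\bm{x}$ and perform the unimodular congruence
$$\begin{pmatrix}I&0\\-\bm{x}^\dagger&1\end{pmatrix}G'(G')^\dagger\begin{pmatrix}I&-\bm{x}\\0&1\end{pmatrix}=\begin{pmatrix}M&0\\0&c-\bm{x}^\dagger M\bm{x}\end{pmatrix},$$
which forces $\mathrm{rank}(G'(G')^\dagger)\in\{k-l,k-l+1\}$, so the hull dimension is at least $l$. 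In Case~(2), $\bm{b}\notin\mathrm{col}(M)$ makes the system $M\bm{v}_1=-v_2\bm{b}$ unsolvable for $v_2\neq 0$; on the right null space of $M$ the functional $\bm{v}_1\mapsto\bm{b}^\dagger\bm{v}_1$ is nonzero (else $\bm{b}^\dagger\in\mathrm{row}(M)$, hence $\bm{b}\in\mathrm{col}(M)$), so the null space of $G'(G')^\dagger$ has dimension exactly $l-1$, and hence so does the hull.

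Case~(4) is the algebraic heart. With $c\neq 0$, pivoting on the bottom-right block yields the Schur-complement identity
$$\mathrm{rank}(G'(G')^\dagger)=\mathrm{rank}(M-c^{-1}\bm{b}\bm{b}^\dagger)+1=\mathrm{rank}(cM-\bm{b}\bm{b}^\dagger)+1=\mathrm{rank}(GSG^\dagger)+1,$$
where the last equality expands $GSG^\dagger=G(cE_n-\bm{g}_k^\dagger\bm{g}_k)G^\dagger=cM-\bm{b}\bm{b}^\dagger$; the hull dimension thus equals $k-\mathrm{rank}(GSG^\dagger)$. The main obstacle across the proof is recognizing this translation between the $n\times n$ external matrix $S$ (whose rank is $n{-}1$ under the hypothesis $c\neq 0$) and the $k\times k$ bordered Gram matrix; once it is in hand, Case~(4) becomes a one-line Schur complement and Cases~(1) and~(3) follow from standard congruence reductions exploiting $M^\dagger=M$, while Case~(2)'s exact rank-jump-by-two is the one place where a direct null-space count (rather than a congruence) is needed.
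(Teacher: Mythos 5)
Your proof is correct and follows the same overall skeleton as the paper's: form the bordered Gram matrix $G'(G')^\dagger=\bigl(\begin{smallmatrix}M&\bm{b}\\ \bm{b}^\dagger&c\end{smallmatrix}\bigr)$, compute its rank, convert to a hull dimension via Lemma \ref{lem Hull=GG}, and descend to all smaller $s$ via Lemma \ref{lem 1}. Cases (1) and (4) are essentially identical to the paper's (the paper's row reduction in (4) is exactly your Schur complement, and it likewise identifies $cM-\bm{b}\bm{b}^\dagger$ with $GSG^\dagger$). Where you genuinely diverge is in (2) and (3): the paper chooses a basis $\{\bm{g}_{h_1},\dots,\bm{g}_{h_l}\}$ of the hull from among the rows of $G$ and argues by explicit elementary operations on the rows and columns indexed by the $h_i$, whereas you translate the hypothesis $\bm{g}_k\in(\Hull_H(\C))^{\perp_H}$ into the coordinate-free condition $\bm{b}\in\mathrm{col}(M)$ and then use a congruence (for (3)) or a direct null-space count (for (2)). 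Your route is cleaner and strictly more general: the paper's argument tacitly assumes the hull is spanned by a subset of the rows $\bm{g}_0,\dots,\bm{g}_{k-1}$, i.e.\ that $\ker(M)$ is a coordinate subspace, which need not hold for an arbitrary GRS code with Hermitian hull, while your column-space formulation needs no such assumption. Two small points to tidy up: you assert but do not verify that $\mathrm{rank}(S)=n-1$, which is part of the statement (the paper checks it by bordering; alternatively note $S\bm{g}_k^\dagger=0$ while $S$ differs from the invertible scalar matrix $cE_n$ by a rank-one perturbation), and your remark that monomial equivalence preserves the GRS property is a welcome justification that the codes produced by Lemma \ref{lem 1} are still GRS, a point the paper leaves implicit.
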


\begin{proof}
When $n<q^2$, by Remark \ref{rem 1}, we can assume that $\bm{a}\in (\F_{q^2}^*)^n$.
Note that we can increase the dimension of this code by adding $\bm{g}_k$ or $\bm{g}_{-1}\ (n<q^2)$ to the base of $GRS_k(\bm{a},\bm{v})$.
Since the proofs are similar, here we will only prove the case of adding $\bm{g}_k$.
Suppose that $\{\bm{g}_{h_1},\bm{g}_{h_2},\dots,\bm{g}_{h_l}\}$ be a basis of $\Hull_H(GRS_k(\bm{a},\bm{v}))$,
where $0\leq h_i\leq k-1$ for $1\leq i\leq l$.
Let $G_1=(\bm{g}_{h_1},\bm{g}_{h_2},\dots,\bm{g}_{h_l})^\mathcal{T}$.

(1)
If $\bm{g}_k\in GRS_k(\bm{a},\bm{v})^{\bot_H}$, then $G\bm{g}_k^\dagger=O_{k\times 1}$.
Since $G\bm{g}_k^\dagger=O_{k\times 1}$ and $\bm{g}_k\bm{g}_k^\dagger=0$, it follows that
$$G_{k+1}(\bm{a},\bm{v})G_{k+1}(\bm{a},\bm{v})^\dagger=
\begin{pmatrix}
GG^\dagger & G\bm{g}_k^\dagger\\
\bm{g}_kG^\dagger & \bm{g}_k\bm{g}_k^\dagger\\
\end{pmatrix}
=\begin{pmatrix}
GG^\dagger  & O_{k\times 1}\\
O_{1\times k} & 0\\
\end{pmatrix}.$$
Note that $rank(GG^\dagger)=k-l$,
then by Lemma \ref{lem Hull=GG}, $GRS_{k+1}(\bm{a},\bm{v})$ is an $[n,k+1]_{q^2}$ GRS code with $(l+1)$-dimensional Hermitian hull.
The desired result follows from Lemma \ref{lem 1}.

(2)
If $\bm{g}_k\notin (\Hull_H(GRS_k(\bm{a},\bm{v})))^{\bot_H}$, then $G_1\bm{g}_k^{\dagger}\neq O_{l\times 1}$,
it follows that there exists $1\leq t\leq l$ such that $\bm{g}_k\bm{g}_{h_t}^{\dagger}\neq 0$.
Let $\tilde{\bm{u}}=(0,0,\dots,0, \bm{g}_{h_t}\bm{g}_k^{\dagger})\in \F_{q^2}^{k+1}$ and
$\bm{u}=(\underbrace{0,\dots,0}_{h_t-1\ times },\bm{g}_{h_t}\bm{g}_k^\dagger, \underbrace{0,\dots,0}_{k-h_t\ times})\in\F_{q^2}^k$.
By (1), we can find that the $h_t$-th row and $h_t$-th column of the matrix $G_{k+1}(\bm{a},\bm{v})G_{k+1}(\bm{a},\bm{v})^\dagger$ are $\tilde{\bm{u}}$ and $\tilde{\bm{u}}^\dagger$, respectively. Then we can perform the elementary transformation such that
$$rank(
\begin{pmatrix}
GG^\dagger & G\bm{g}_k^\dagger\\
\bm{g}_kG^\dagger & \bm{g}_k\bm{g}_k^\dagger\\
\end{pmatrix})=
rank(
\begin{pmatrix}
GG^\dagger & \bm{u}^T\\
\bm{u}^q & 0\\
\end{pmatrix})
=rank(GG^\dagger)+2=k-l+2.
$$
 By Lemma \ref{lem Hull=GG}, $GRS_{k+1}(\bm{a},\bm{v})$ is an $[n,k+1]_{q^2}$ GRS code with $(l-1)$-dimensional Hermitian hull.
 The desired result follows from Lemma \ref{lem 1}.

(3)
Let $\{\bm{g}_{z_1},\bm{g}_{z_2},\dots,\bm{g}_{z_{k-l}}\}=\{\bm{g}_0,\bm{g}_1,\dots,\bm{g}_{k-1}\}\backslash \{\bm{g}_{h_1},\bm{g}_{h_2},\dots,\bm{g}_{h_l}\}$ and $G_2=(\bm{g}_{z_1},\bm{g}_{z_2},\dots,\bm{g}_{z_{k-l}})^\mathcal{T}$.
It is easy to check that $G_2G_2^\dagger$ is an invertible $(k-l)\times (k-l)$ matrix.
If $\bm{g}_k\in (\Hull_H(GRS_k(\bm{a},\bm{v})))^{\bot_H}$, then $G_1\bm{g}_k^{\dagger}=O_{l\times 1}$,
it follows that $\bm{g}_k\bm{g}_{h_i}^\dagger=0$ for $1\leq i\leq l$.
 Then we can perform the elementary transformation such that
 \[\begin{split}
rank(
\begin{pmatrix}
GG^\dagger & G\bm{g}_k^\dagger\\
\bm{g}_kG^\dagger & \bm{g}_k\bm{g}_k^\dagger\\
\end{pmatrix})
&=rank(
\begin{pmatrix}
G_2G_2^\dagger & G_2\bm{g}_k^\dagger\\
\bm{g}_kG_2^\dagger & \bm{g}_k\bm{g}_k^\dagger\\
\end{pmatrix}
)\\
&=rank(
\begin{pmatrix}
E_{k-l} & O_{(k-l)\times 1}\\
-\bm{g}_kG_2^\dagger (G_2G_2^\dagger)^{-1} & 1\\
\end{pmatrix}
\begin{pmatrix}
G_2G_2^\dagger & G_2\bm{g}_k^\dagger\\
\bm{g}_kG_2^\dagger & \bm{g}_k\bm{g}_k^\dagger\\
\end{pmatrix}
)\\
&=rank(
\begin{pmatrix}
G_2G_2^\dagger & G_2\bm{g}_k^\dagger\\
O_{1\times (k-l)} & \bm{g}_k\bm{g}_k^\dagger-\bm{g}_kG_2^\dagger (G_2G_2^\dagger)^{-1} G_2\bm{g}_k^\dagger\\
\end{pmatrix})
\\
&\leq rank(G_2G_2^\dagger)+1=k-l+1.\\
	\end{split}\]
By lemmas \ref{lem Hull=GG} and \ref{lem 1}, we have the desired result.

(4)
Note that
\[\begin{split}
rank(
\begin{pmatrix}
GG^\dagger & G\bm{g}_k^\dagger\\
\bm{g}_kG^\dagger & \bm{g}_k\bm{g}_k^\dagger\\
\end{pmatrix})
&=rank(
\begin{pmatrix}
E_{k-1} & -G\bm{g}_k^\dagger(\bm{g}_k\bm{g}_k^\dagger)^{-1}\\
O_{(k-1)\times 1} & 1\\
\end{pmatrix}
\begin{pmatrix}
GG^\dagger & G\bm{g}_k^\dagger\\
\bm{g}_kG^\dagger & \bm{g}_k\bm{g}_k^\dagger\\
\end{pmatrix}
)\\
&=rank(
\begin{pmatrix}
GG^\dagger-G\bm{g}_k^\dagger(\bm{g}_k\bm{g}_k^\dagger)^{-1}\bm{g}_kG^\dagger
 & O_{1\times (k-1)}\\
\bm{g}_kG^\dagger & \bm{g}_k\bm{g}_k^\dagger\\
\end{pmatrix})\\
&=rank(G(\bm{g}_k\bm{g}_k^\dagger E_n-\bm{g}_k^\dagger\bm{g}_k)G^\dagger)+1,\\
	\end{split}\]
then by Lemma \ref{lem Hull=GG}, $GRS_{k+1}(\bm{a},\bm{v})$ is an $[n,k+1]_{q^2}$ GRS code with $(k-rank(G(\bm{g}_k\bm{g}_k^\dagger E_n-\bm{g}_k^\dagger\bm{g}_k)G^\dagger))$-dimensional Hermitian hull.
Note that
\[\begin{split}
rank(
\begin{pmatrix}
\bm{g}_k\bm{g}_k^\dagger E_n & \bm{g}_k^\dagger\\
\bm{g}_k &1\\
\end{pmatrix})
&=rank(
\begin{pmatrix}
E_{n} & -\bm{g}_k^\dagger\\
O_{n\times 1} & 1\\
\end{pmatrix}
\begin{pmatrix}
\bm{g}_k\bm{g}_k^\dagger E_n & \bm{g}_k^\dagger\\
\bm{g}_k &1\\
\end{pmatrix}
)\\
&=rank(
\begin{pmatrix}
\bm{g}_k\bm{g}_k^\dagger E_n-\bm{g}_k^\dagger\bm{g}_k & O_{1\times n}\\
\bm{g}_k &1\\
\end{pmatrix})\\
&=rank(\bm{g}_k\bm{g}_k^\dagger E_n-\bm{g}_k^\dagger\bm{g}_k)+1\\
	\end{split}\]
and
\[\begin{split}
rank(
\begin{pmatrix}
\bm{g}_k\bm{g}_k^\dagger E_n & \bm{g}_k^\dagger\\
\bm{g}_k &1\\
\end{pmatrix})
&=rank(
\begin{pmatrix}
E_{n} & O_{1\times n}\\
-\bm{g}_k(\bm{g}_k\bm{g}_k^\dagger)^{-1}E_n & 1\\
\end{pmatrix}
\begin{pmatrix}
\bm{g}_k\bm{g}_k^\dagger E_n & \bm{g}_k^\dagger\\
\bm{g}_k &1\\
\end{pmatrix}
)\\
&=rank(
\begin{pmatrix}
\bm{g}_k\bm{g}_k^\dagger E_n & \bm{g}_k^\dagger\\
O_{n\times 1} &0\\
\end{pmatrix})\\
&=n,\\
\end{split}\]
we have that $rank(\bm{g}_k\bm{g}_k^\dagger E_n-\bm{g}_k^\dagger\bm{g}_k)=n-1$.
The desired result follows from Lemma \ref{lem 1}. This completes the proof.
\end{proof}

By Propositions \ref{pro 3} (2) and \ref{pro 3} (3), we have the following corollary.

\begin{corollary}\label{cor GRS 2}
Let $q>2$ be a prime power.
If there exists an $[n,k]_{q^2}$ GRS code with $l$-dimensional Hermitian hull,
where $0\leq l\leq k$,
then for any $1\leq i\leq \min\{l,n-k\}$, there exists an $[n,k+i]_{q^2}$ MDS code with $s$-dimensional Hermitian hull for each $0\leq s\leq l-i$.
\end{corollary}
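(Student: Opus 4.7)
The plan is to prove a slightly stronger claim by induction on $j$: for every $0\le j\le \min\{l,n-k\}$, there exists an $[n,k+j]_{q^2}$ GRS code whose Hermitian hull has dimension exactly $l-j$. Once this claim is established, one final application of Lemma \ref{lem 1} at $j=i$ produces an equivalent $[n,k+i]_{q^2}$ MDS code with $s$-dimensional Hermitian hull for every $0\le s\le l-i$, which is precisely the conclusion of the corollary.

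The base case $j=0$ is the hypothesis of the corollary. For the inductive step, assume that we have an $[n,k+j]_{q^2}$ GRS code $\C = GRS_{k+j}(\bm{a}',\bm{v}')$ with exactly $(l-j)$-dimensional Hermitian hull, and suppose $j<\min\{l,n-k\}$. Let $\bm{g}_{k+j}=(v_1'(a_1')^{k+j},\dots,v_n'(a_n')^{k+j})$ be the row that enlarges $\C$ to $GRS_{k+j+1}(\bm{a}',\bm{v}')$. Exactly one of the following occurs: either $\bm{g}_{k+j}\notin(\Hull_H(\C))^{\bot_H}$, in which case Proposition \ref{pro 3}(2) yields an $[n,k+j+1]_{q^2}$ GRS code with $s$-dimensional Hermitian hull for every $0\le s\le (l-j)-1$; or $\bm{g}_{k+j}\in(\Hull_H(\C))^{\bot_H}$, in which case Proposition \ref{pro 3}(3) yields an $[n,k+j+1]_{q^2}$ GRS code with $s$-dimensional Hermitian hull for every $0\le s\le l-j$. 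In either case, selecting $s=l-j-1$ supplies the GRS code with hull dimension $l-(j+1)$ needed to continue the induction.

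The main point to verify is that the equivalent codes furnished by Lemma \ref{lem 1} inside Proposition \ref{pro 3} are themselves GRS codes, so the induction stays inside the GRS world. This is routine: a monomial matrix is the product of a permutation matrix and an invertible diagonal matrix, and its right action on the generator matrix of $GRS_{k+j+1}(\bm{a}',\bm{v}')$ merely permutes the coordinates and rescales the multiplier vector, both of which preserve the GRS form with a new pair $(\tilde{\bm{a}},\tilde{\bm{v}})$. Beyond this bookkeeping observation, the proof is a straightforward induction driven entirely by Proposition \ref{pro 3}, and one last use of Lemma \ref{lem 1} promotes the single GRS code at level $j=i$ to the full range $0\le s\le l-i$ of equivalent MDS codes claimed in the corollary.
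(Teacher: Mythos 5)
Your proposal is correct and follows essentially the same route as the paper, which simply iterates Propositions \ref{pro 3}(2) and \ref{pro 3}(3) (an exhaustive case split on whether $\bm{g}_{k+j}$ lies in $(\Hull_H(\C))^{\bot_H}$), losing at most one hull dimension per unit increase of the code dimension, and finishes with Lemma \ref{lem 1}. Your added check that monomial equivalence preserves the GRS form is a worthwhile detail the paper leaves implicit; alternatively one can avoid the intermediate uses of Lemma \ref{lem 1} entirely by keeping the same pair $(\bm{a},\bm{v})$ throughout and observing that $\mathrm{rank}(G_{k+j}(\bm{a},\bm{v})G_{k+j}(\bm{a},\bm{v})^{\dagger})$ grows by at most $2$ at each step, so that $\dim\Hull_H(GRS_{k+i}(\bm{a},\bm{v}))\geq l-i$.
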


Note that $$rank(
\begin{pmatrix}
GG^\ddagger & G\bm{g}_k^\ddagger\\
\bm{g}_kG^\ddagger & \bm{g}_k\bm{g}_k^\ddagger\\
\end{pmatrix})=
rank(
\begin{pmatrix}
GG^\ddagger & O_{k\times 1}\\
O_{1\times k} & 0\\
\end{pmatrix}
+
\begin{pmatrix}
O_{k\times k} & G\bm{g}_k^\ddagger\\
\bm{g}_kG^\ddagger & \bm{g}_k\bm{g}_k^\ddagger\\
\end{pmatrix})
\leq rank(GG^\ddagger)+2,
$$
then by Corollary \ref{cor Galois},  the following result for Galois hull can be proved similarly.

\begin{corollary}
Let $q>4$ be a prime power.
If there exists an $[n,k]_{q}$ GRS code with $l$-dimensional $e$-Galois hull, where $0\leq l\leq k$ and $0\leq e\leq m-1$, then for any $1\leq i\leq \min\{l,n-k\}$, there exists an $[n,k+i]_{q}$ MDS code with $s$-dimensional $e$-Galois hull for each $0\leq s\leq l-i$.
\end{corollary}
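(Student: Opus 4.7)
The plan is to imitate the proof of Corollary \ref{cor GRS 2}, replacing the Hermitian conjugate $\dagger$ by the Galois conjugate $\ddagger$ throughout and replacing Lemma \ref{lem 1} by Corollary \ref{cor Galois}. Two ingredients drive the argument: (i) the Galois analogue of Lemma \ref{lem Hull=GG}, namely $rank(GG^\ddagger) = k - \dim(\Hull_e(\C))$, which follows from the same Gram-matrix reasoning since the $e$-Galois form is nondegenerate; and (ii) the rank inequality
\[
rank\begin{pmatrix} GG^\ddagger & G\bm{g}_k^\ddagger \\ \bm{g}_k G^\ddagger & \bm{g}_k \bm{g}_k^\ddagger \end{pmatrix} \le rank(GG^\ddagger) + 2
\]
recorded in the paragraph immediately preceding the corollary.

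First I would handle $i=1$. Let $G = G_k(\bm{a},\bm{v})$, so that $rank(GG^\ddagger) = k - l$. Because $k+1 \le n$, the code $GRS_{k+1}(\bm{a},\bm{v})$ with generator matrix $G' = (\bm{g}_0,\bm{g}_1,\ldots,\bm{g}_k)^{\mathcal{T}}$ is a well-defined $[n,k+1]_q$ GRS code, and by ingredient (ii) we have $rank(G'(G')^\ddagger) \le k - l + 2$, whence $\dim(\Hull_e(GRS_{k+1}(\bm{a},\bm{v}))) \ge (k+1) - (k - l + 2) = l - 1$. Iterating this construction $i$ times — legitimate because $k + i \le n$ by the hypothesis $i \le n-k$ and because $l - i \ge 0$ by $i \le l$ — yields an $[n, k+i]_q$ GRS code $GRS_{k+i}(\bm{a},\bm{v})$ whose $e$-Galois hull has dimension at least $l - i$. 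Finally, Corollary \ref{cor Galois} (applicable because $q > 4$) produces, for every $0 \le s \le l-i$, an equivalent $[n, k+i]_q$ MDS code with $s$-dimensional $e$-Galois hull.

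The hard part, such as it is, is purely bookkeeping. One must ensure that at each stage of the iteration the new code $GRS_{k+j+1}(\bm{a},\bm{v})$ is really obtained from $GRS_{k+j}(\bm{a},\bm{v})$ by adjoining the row $\bm{g}_{k+j}$, so that the same rank bound applies, and that the lower bound on the hull dimension decays by at most one per step rather than collapsing faster. Both points are automatic from the definition of GRS codes and from the fact that we use only the $\bm{g}_k$-branch of the extension, which, unlike the $\bm{g}_{-1}$-branch of Proposition \ref{pro 3}, imposes no condition on $\bm{a}$ and so requires no appeal to Remark \ref{rem 1}. Once these points are in place, the corollary is a direct transcription of Corollary \ref{cor GRS 2} to the Galois setting.
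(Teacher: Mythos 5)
Your proposal is correct and follows essentially the same route as the paper: the paper records exactly the rank inequality $rank\bigl(\begin{smallmatrix} GG^\ddagger & G\bm{g}_k^\ddagger \\ \bm{g}_kG^\ddagger & \bm{g}_k\bm{g}_k^\ddagger \end{smallmatrix}\bigr)\leq rank(GG^\ddagger)+2$ and then invokes Corollary \ref{cor Galois}, leaving the iteration and the Galois analogue of Lemma \ref{lem Hull=GG} implicit, which is precisely the bookkeeping you spell out. No gaps.
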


Now we generalize to consider the case of $k=q^2$. We know that for any $1\leq k\leq q^2$, $G_k(\bm{a},\bm{v})G_k(\bm{a},\bm{v})^\dagger$ is a $k$-order leading principal submatrix of $G_{q^2}(\bm{a},\bm{v})G_{q^2}(\bm{a},\bm{v})^\dagger$.

\begin{lemma}\label{lem q^2}
For any $1\leq k\leq q^2$,
let $\bm{a}$, $\bm{v}$, $\bm{g}_i$ and $G_k(\bm{a},\bm{v})$ be defined as above.
Then
$$G_{q^2}(\bm{a},\bm{v})G_{q^2}(\bm{a},\bm{v})^\dagger=
\begin{pmatrix}
A_{0,0}& A_{1,0}& \dots & A_{q-1,0}\\
A_{0,1}& A_{1,1}& \dots & A_{q-1,1}\\
\vdots & \vdots & \ddots &\vdots\\
A_{0,q-1}& A_{1,q-1}& \dots & A_{q-1,q-1}\\
\end{pmatrix},$$
where
$A_{i,j}=
\begin{pmatrix}
\bm{g}_i\bm{g}_j^\dagger& \bm{g}_i\bm{g}_{j+1}^\dagger& \dots & \bm{g}_i\bm{g}_{j+q-1}^\dagger\\
\bm{g}_{i+1}\bm{g}_j^\dagger& \bm{g}_{i+1}\bm{g}_{j+1}^\dagger& \dots & \bm{g}_{i+1}\bm{g}_{j+q-1}^\dagger\\
\vdots & \vdots & \ddots &\vdots\\
\bm{g}_{i+q-1}\bm{g}_j^\dagger& \bm{g}_{i+q-1}\bm{g}_{j+1}^\dagger& \dots & \bm{g}_{i+q-1}\bm{g}_{i+q-1}^\dagger\\
\end{pmatrix}$
be an $q\times q$ matrix over $\F_{q^2}$ for $0\leq i,j\leq q-1$.
Moreover, for $0\leq s,t\leq q-1$ except $s=t=0$, $\bm{g}_s\bm{g}_t^\dagger$ occurs twice in $A_{s,t}$ and only once in $A_{i,j}$, where $0\leq i,j\leq q-1$ except $i=s$ and $j=t$.
\end{lemma}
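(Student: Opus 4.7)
The plan is to split the lemma into a matrix identity (the block decomposition) and a combinatorial count (the "moreover" statement), both powered by the single algebraic fact that $a^{q^{2}}=a$ for every $a\in\F_{q^{2}}$. I will work throughout with the explicit expansion $\bm{g}_{\alpha}\bm{g}_{\beta}^{\dagger}=\sum_{l=1}^{n}v_{l}^{q+1}a_{l}^{\alpha+\beta q}$, so that two such inner products coincide precisely when their exponents are equivalent under the rules $a^{q^{2}}=a$ on $\F_{q^{2}}$ and the convention $0^{0}=1$ fixed in the paper.

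For the block identity I would index the full $q^{2}\times q^{2}$ matrix by writing its row index $u=jq+s'$ and column index $v=iq+t'$ with $i,j,s',t'\in\{0,\dots,q-1\}$, so that the candidate block $A_{i,j}$ occupies block-row $j$ and block-column $i$, with local position $(s',t')$. The $(u,v)$-entry is $\bm{g}_{jq+s'}\bm{g}_{iq+t'}^{\dagger}$ and carries the exponent $s'+jq+t'q+iq^{2}$. Since $a_{l}^{iq^{2}}=(a_{l}^{q^{2}})^{i}=a_{l}^{i}$ for every $l$ (immediate for $a_{l}\neq 0$; and for $a_{l}=0$ both sides equal $0$ when $i\geq 1$ and $1$ when $i=0$), this exponent collapses to $(i+s')+(j+t')q$, so the entry equals $\bm{g}_{i+s'}\bm{g}_{j+t'}^{\dagger}$, matching exactly the $(s',t')$-entry of $A_{i,j}$.

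For the counting, I would use the bijection $(s',t')\mapsto s'+t'q$ from $\{0,\dots,q-1\}^{2}$ onto $\{0,\dots,q^{2}-1\}$ to observe that the multiset of exponents appearing in $A_{i,j}$ is the window $\{L,L+1,\dots,L+q^{2}-1\}$ with $L=i+jq\in\{0,\dots,q^{2}-1\}$. Because $a^{q^{2}-1}=1$ on $\F_{q^{2}}^{*}$ while $0^{m}=0$ for $m\geq 1$, the value of $\bm{g}_{\alpha}\bm{g}_{\beta}^{\dagger}$ depends only on the class of its exponent where $0$ forms its own class and every positive integer is identified with its residue in $\{1,\dots,q^{2}-1\}$ modulo $q^{2}-1$. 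In $A_{0,0}$ the window $\{0,1,\dots,q^{2}-1\}$ hits each class exactly once; in any $A_{i,j}$ with $L\geq 1$ the window is a length-$q^{2}$ stretch of strictly positive integers, and a direct cyclic traversal $L,L+1,\dots,q^{2}-1,1,2,\dots,L$ shows that each class in $\{1,\dots,q^{2}-1\}$ appears exactly once except the class $L$, which appears twice. Setting $L_{0}=s+tq$ for $(s,t)\neq(0,0)$, this produces two occurrences of $\bm{g}_{s}\bm{g}_{t}^{\dagger}$ in the unique block $A_{s,t}$ with $L=L_{0}$ and one occurrence in each of the other $q^{2}-1$ blocks.

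The delicate point I expect to guard against is keeping the exponent class $0$ separate from $q^{2}-1$: the identity $a^{q^{2}-1}=1$ breaks at $a=0$, so $\bm{g}_{0}\bm{g}_{0}^{\dagger}$ and $\bm{g}_{q-1}\bm{g}_{q-1}^{\dagger}$ can truly differ whenever $0\in\{a_{1},\dots,a_{n}\}$. This is precisely why the moreover excludes $(s,t)=(0,0)$ and why the reduction cannot be taken as plain reduction modulo $q^{2}-1$. Once that one-class separation is cleanly written, the rest of the count is the routine cyclic statement above.
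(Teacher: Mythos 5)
Your proof is correct and rests on the same algebraic fact as the paper's ($a^{q^{2}}=a$ on $\F_{q^{2}}$, which is exactly what drives the paper's shift identities $\bm{g}_i\bm{g}_j^\dagger=\bm{g}_{i-q}\bm{g}_{j+1}^\dagger=\bm{g}_{i+1}\bm{g}_{j-q}^\dagger$), so the approach is essentially the same. If anything you are more complete than the paper, which states the reduction identities and then declares the block form ``verified'': your exponent-window argument explicitly establishes the ``moreover'' multiplicity count and correctly separates the exponent-$0$ class from the class of $q^{2}-1$, the one place where $a_l=0$ and the convention $0^{0}=1$ could break a naive reduction modulo $q^{2}-1$.
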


\begin{proof}
For any $1\leq i,j\leq q^2$, we have that the element of the $i$-th row and $j$-th column of the matrix $G_{q^2}(\bm{a},\bm{v})G_{q^2}(\bm{a},\bm{v})^\dagger$ is $\bm{g}_{i-1}\bm{g}_{j-1}^\dagger$.
Note that
$$\bm{g}_i\bm{g}_j^\dagger=\sum_{l-1}^{n}v_l^{q+1}a_l^{qi+j}=\sum_{l-1}^{n}v_l^{q+1}a_l^{q(i-q)+j+1}=\bm{g}_{i-q}\bm{g}_{j+1}^\dagger$$
and
$$\bm{g}_{i}\bm{g}_{j}^\dagger=\sum_{l-1}^{n}v_l^{q+1}a_l^{qi+j}=\sum_{l-1}^{n}v_l^{q+1}a_l^{q(i+1)+j-q}=\bm{g}_{i+1}\bm{g}_{j-q}^\dagger,$$
then the element in any place of matrix $G_{q^2}(\bm{a},\bm{v})G_{q^2}(\bm{a},\bm{v})^\dagger$ is one of $\bm{g}_i\bm{g}_j^\dagger$ for $0\leq i,j\leq q-1$.
 Then the matrix $G_{q^2}(\bm{a},\bm{v})G_{q^2}(\bm{a},\bm{v})^\dagger$ is verified to be of the above form.  This completes the proof.
\end{proof}

When given a class of Hermitian self-orthogonal GRS codes, we have the following result.

\begin{theorem}\label{th 2q}
Let $q>2$ be a prime power.
Let $GRS_{k'}(\bm{a},\bm{v})$ be an $[n,k']_{q^2}$ Hermitian self-orthogonal GRS code,
where $n> q+1$ and $k'\leq q-1$.
Then the following statements hold.
\begin{itemize}
\item[(1)]
For $k'\leq k\leq q$,
there exists an $[n,k]_{q^2}$ GRS code with $s$-dimensional Hermitian hull for each $0\leq s\leq 2k'-k$,
where $2k'-k\geq 0$.
\item[(2)]
For $q< k\leq \min\{q+k'-1,n\}$,
there exists an $[n,k]_{q^2}$ GRS code  with $s$-dimensional Hermitian hull for each $0\leq s\leq \max\{2k'-k,k+4k'-4q\}$,
where $\max\{2k'-k,k+4k'-4q\}\geq 0$.
\end{itemize}
\end{theorem}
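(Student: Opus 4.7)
The plan is to bound $\mathrm{rank}(G_kG_k^\dagger)$ directly for $G_k=G_k(\bm{a},\bm{v})$, convert it via Lemma \ref{lem Hull=GG} to a lower bound on $\dim\Hull_H(GRS_k(\bm{a},\bm{v}))$, and apply Lemma \ref{lem 1} to produce codes realising every hull dimension from $0$ up to that lower bound. After using Remark \ref{rem 1} to assume $\bm{a}\in(\F_{q^2}^*)^n$, the proof of Lemma \ref{lem q^2} gives $\bm{g}_i\bm{g}_j^\dagger=\Phi(qi+j\bmod(q^2-1))$, where $\Phi(e)=\sum_l v_l^{q+1}a_l^e$, and self-orthogonality of $GRS_{k'}(\bm{a},\bm{v})$ is equivalent to $\Phi$ vanishing on $S=\{qa+b:0\le a,b\le k'-1\}$.

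For part (1), $k'\le k\le q$ prevents any wraparound, so the only zero entries of $G_kG_k^\dagger$ come from $S$ and produce the top-left $k'\times k'$ zero block. Writing $G_kG_k^\dagger=\begin{pmatrix}0_{k'\times k'} & B\\ B^\dagger & C\end{pmatrix}$ with $B$ of size $k'\times(k-k')$, the top $k'$ rows span at most $\mathrm{rank}(B)\le k-k'$ dimensions and the bottom $k-k'$ rows contribute at most $k-k'$ more, giving $\mathrm{rank}(G_kG_k^\dagger)\le 2(k-k')$ and hence $\dim\Hull_H\ge 2k'-k$; Lemma \ref{lem 1} then gives all hull dimensions in $[0,2k'-k]$.

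For part (2), write $k=q+t$ with $1\le t\le k'-1$ and partition $\{0,\dots,k-1\}$ into
\[
I_1=[0,k'-2],\ I_2=\{k'-1\},\ I_3=[k',q-2],\ I_4=\{q-1\},\ I_5=[q,q+t-1].
\]
A case check of whether $qi+j\bmod(q^2-1)$ lies in $S$ shows that the zero sub-blocks of $G_kG_k^\dagger$ are exactly $(I_1,I_1)$, $(I_1,I_2)$, $(I_1,I_5)$, $(I_2,I_2)$, $(I_4,I_4)$, $(I_4,I_5)$, $(I_5,I_5)$ together with their Hermitian transposes; the ones involving $I_4$ or $I_5$ require the wraparound identity $\bm{g}_{q+s}\bm{g}_j^\dagger=\bm{g}_s\bm{g}_{j+1}^\dagger$ from the proof of Lemma \ref{lem q^2}, and in particular the reduction $q(q+s)+(q-1)\equiv q(s+1)\pmod{q^2-1}$ with $s+1\le k'-1$ is what forces $(I_5,I_4)$ (and its transpose) to vanish. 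This zero pattern confines each column-block's support to a row-block union of sizes $q-k'$, $q-k'+t$, $k$, $q-1$, $q-k'$ respectively, and summing the column-block ranks gives
\[
\mathrm{rank}(G_kG_k^\dagger)\le (q-k'+1)+\min(k'-1,q-k')+\min(t,q-k').
\]
A short case split on whether $k'-1\le q-k'$ and whether $t\le q-k'$ shows the right-hand side is at most $\min\bigl(2(k-k'),\,4(q-k')\bigr)$, so $\dim\Hull_H\ge\max(2k'-k,\,k+4k'-4q)$ and Lemma \ref{lem 1} provides the full range.

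The step I expect to be most delicate is the complete enumeration of zero blocks in part (2): the wraparound zeros at $(I_4,I_5)$ and $(I_5,I_4)$ are easy to overlook because they do not come from any pair with $i,j\le k'-1$, yet they are precisely what forces the support of the column-block $I_5$ to lie in the $(q-k')$-row union $I_2\cup I_3$ and thereby upgrades the a-priori bound $2(k-k')$ from part (1) to the sharper $4(q-k')$ needed when $t>q-k'$.
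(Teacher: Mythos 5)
Your proof is correct and rests on the same underlying strategy as the paper's: bound $\mathrm{rank}(G_kG_k^\dagger)$ using the zero pattern forced by Hermitian self-orthogonality together with the wraparound identity from Lemma \ref{lem q^2}, then convert to a hull-dimension lower bound via Lemma \ref{lem Hull=GG} and realise all smaller dimensions via Lemma \ref{lem 1}. The difference is in execution, and it is worth recording. For part (1) the paper simply cites Corollary \ref{cor GRS 2} (iterated dimension increase from Proposition \ref{pro 3}), whereas you give a direct two-block rank estimate; these are equivalent in strength. For part (2) the paper only asserts ``it is easy to check that $\mathrm{rank}(G_kG_k^\dagger)\leq 4q-4k'$'' and leaves the $2k'-k$ branch of the maximum to an implicit appeal to Corollary \ref{cor GRS 2}, while you supply the complete enumeration of zero blocks and derive the single unified bound $(q-k'+1)+\min(k'-1,q-k')+\min(t,q-k')$; I have checked the four-way case split and this quantity is indeed at most $\min\bigl(2(k-k'),\,4(q-k')\bigr)$ in every case, so both branches of the maximum follow at once. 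This is a genuine gain in completeness over the paper's one-line justification. One cosmetic caveat: when $n=q^2$ the element $0$ must occur among the $a_l$, so Remark \ref{rem 1} cannot be invoked, and the single entry $\bm{g}_{q-1}\bm{g}_{q-1}^\dagger$ (exponent $q^2-1$, which you reduce to $0$) need not vanish, since for $a_l=0$ one has $a_l^{q^2-1}=0\neq 0^0$; because $|I_4|=1$ its rank contribution is already counted as $1$ in your sum, the final bound is unaffected, but the word ``exactly'' in your description of the zero blocks should be weakened and the claim for the block $(I_4,I_4)$ restricted to the case $0\notin\{a_1,\dots,a_n\}$.
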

\begin{proof}
(1)
By Corollary \ref{cor GRS 2}, we have the desired result.

(2) For $q< k\leq q+k'-1$, we have that $G_k(\bm{a},\bm{v})G_k(\bm{a},\bm{v})^\dagger$ is a $k$-order leading principal submatrix of $G_{q^2}(\bm{a},\bm{v})G_{q^2}(\bm{a},\bm{v})^\dagger$.
By Lemma \ref{lem q^2}, it is easy to check that $rank(G_k(\bm{a},\bm{v})G_k(\bm{a},\bm{v})^\dagger)\leq 4q-4k'$.
The desired result follows from Lemmas \ref{lem Hull=GG} and \ref{lem 1}. This completes the proof.
\end{proof}

When the dimension of the Hermitian self-orthogonal GRS code is $q-1$, by Theorem \ref{th 2q}, we have the following corollary.

\begin{corollary}\label{cor q-1}
Let $q>2$ be a prime power.
Let $GRS_{q-1}(\bm{a},\bm{v})$ be an $[n,q-1]_{q^2}$ Hermitian self-orthogonal GRS code,  where $n\geq \begin{cases}
\frac{q^2-1}{2}, & if\ q\ odd;\\
\frac{q^2}{2}, & if\  q\ even.
\end{cases}$
Then the following statements hold.
\begin{itemize}
\item[(1)]
For $k=q$,
there exists an $[n,k]_{q^2}$ GRS code  with $s$-dimensional Hermitian hull for each $0\leq s\leq q-2$.
\item[(2)]
For $q+1\leq  k\leq 2q-2$,
there exists an $[n,k]_{q^2}$ GRS code  with $s$-dimensional Hermitian hull for each $0\leq s\leq k-4$.
\end{itemize}
\end{corollary}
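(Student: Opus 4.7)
The plan is to deduce Corollary \ref{cor q-1} as a direct specialization of Theorem \ref{th 2q} by taking $k' = q-1$, with only some bookkeeping required to simplify the bounds. First, I would verify that the hypothesis is compatible: Theorem \ref{th 2q} requires $n > q+1$ and permits $k$ in the range $k' \leq k \leq \min\{q+k'-1,n\} = \min\{2q-2, n\}$. The assumption $n \geq (q^2-1)/2$ (odd $q$) or $n \geq q^2/2$ (even $q$) certainly gives $n > q+1$, and it implies $n \geq 2q-2$ since $(q-1)(q-3) \geq 0$ for odd $q \geq 3$ and $(q-2)^2 \geq 0$ for even $q$, so the full range $q-1 \leq k \leq 2q-2$ is available.

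For part (1), take $k = q$ and apply Theorem \ref{th 2q}(1): the bound becomes $2k' - k = 2(q-1) - q = q - 2$, which is non-negative, yielding an $[n,q]_{q^2}$ GRS code with $s$-dimensional Hermitian hull for every $0 \leq s \leq q-2$, exactly as stated.

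For part (2), take $q+1 \leq k \leq 2q-2$ and apply Theorem \ref{th 2q}(2). Substituting $k' = q-1$ gives $2k'-k = 2q-2-k$ and $k + 4k' - 4q = k - 4$. The comparison $k-4 \geq 2q-2-k$ is equivalent to $k \geq q+1$, which holds throughout our range, so $\max\{2k'-k, k+4k'-4q\} = k-4$. Moreover $k-4 \geq 0$ for $k \geq q+1 \geq 4$ (the case $q = 2$ is excluded by the hypothesis $q > 2$, and $q = 3$ gives $k \geq 4$). Theorem \ref{th 2q}(2) therefore produces the desired $[n,k]_{q^2}$ GRS codes with $s$-dimensional Hermitian hull for each $0 \leq s \leq k-4$.

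The argument is essentially mechanical since Theorem \ref{th 2q} already does the heavy lifting (via the block structure of $G_{q^2}(\bm{a},\bm{v})G_{q^2}(\bm{a},\bm{v})^\dagger$ in Lemma \ref{lem q^2} and Lemmas \ref{lem Hull=GG}, \ref{lem 1}). The only point requiring care is the case analysis for $\max\{2k'-k, k+4k'-4q\}$, but even this reduces to the linear inequality $k \geq q+1$, so there is no genuine obstacle beyond a clean substitution.
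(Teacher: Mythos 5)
Your derivation matches the paper's: Corollary \ref{cor q-1} is presented as an immediate specialization of Theorem \ref{th 2q} with $k'=q-1$, and your substitutions ($2k'-k=q-2$ for $k=q$, and $\max\{2q-2-k,\,k-4\}=k-4$ for $k\geq q+1$) are exactly the intended computation, with the range checks $n\geq 2q-2$ and $k-4\geq 0$ correctly verified. The only caveat, which the paper also glosses over, is the boundary case $q=3$, $n=(q^2-1)/2=4$, where your claim that the length hypothesis ``certainly gives $n>q+1$'' fails to be strict.
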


\subsection{Extend the length and increase the dimension at the same time of the GRS code}

In \cite{RefJ (2022) Luo}, for a given $[n,k]_{q^2}$-code with $l$-dimensional Hermitian hull, Luo et al. obtain an $[n+1,k+1]_{q^2}$-code with $(l+1)$-dimensional Hermitian hull.
Inspired by this result, we study the Hull-variable problem of extending the length and increasing the dimension at the same time of GRS codes.

\begin{proposition}\label{pro 4}
Let $q>2$ be a prime power and $GRS_k(\bm{a},\bm{v})$ be an $[n,k]_{q^2}$ GRS code
 with generator matrix $G=G_{k}(\bm{a},\bm{v})$
and $l$-dimensional Hermitian hull,
where $0\leq l\leq k$.
Let $\bm{g}_i$ be defined as above, then the following statements hold.

\begin{itemize}
\item[(1)]
If $\bm{g}_k\in GRS_k(\bm{a},\bm{v})^{\bot_H}$ and $\bm{g}_k\bm{g}_k^\dagger\neq 0$ $(resp.\ \bm{g}_{-1}\in GRS_k(\bm{a},\bm{v})^{\bot_H}\ and\ \bm{g}_{-1}\bm{g}_{-1}^\dagger\neq 0\ with\ n<q^2)$,
 then
there exists an $[n+1,k+1]_{q^2}$ MDS code with $s$-dimensional Hermitian hull for each $0\leq s\leq l+1$.
\item[(2)]
If $\bm{g}_k\notin (\Hull_H(GRS_k(\bm{a},\bm{v})))^{\bot_H}$ $(resp.\ \bm{g}_{-1}\notin (\Hull_H(GRS_k(\bm{a},\bm{v})))^{\bot_H}\ with\ n<q^2)$,
then there exists an $[n+1,k+1]_{q^2}$ MDS code  with $s$-dimensional Hermitian hull for each $0\leq s\leq l-1$.
\item[(3)]
If $\bm{g}_k\in (\Hull_H(GRS_k(\bm{a},\bm{v})))^{\bot_H}$ $(resp.\ \bm{g}_{-1}\in (\Hull_H(GRS_k(\bm{a},\bm{v})))^{\bot_H}\ with\ n<q^2)$,
then there exists an $[n+1,k+1]_{q^2}$ MDS code  with $s$-dimensional Hermitian hull for each $0\leq s\leq l$.
\item[(4)]
There exists an $[n+1,k+1]_{q^2}$ MDS code  with $s$-dimensional Hermitian hull for each $0\leq s\leq k-rank(GSG^\dagger)$, where $S=(\lambda+\bm{g}_k\bm{g}_k^\dagger)E_n-\bm{g}_k^\dagger\bm{g}_k$ for $\lambda\in \F_q^*$ such that $\lambda+\bm{g}_k\bm{g}_k^\dagger\neq 0$
 $(resp.\ S=(\lambda+\bm{g}_{-1}\bm{g}_{-1}^\dagger)E_n-\bm{g}_{-1}^\dagger\bm{g}_{-1}$ for $\lambda\in \F_q^*$ such that $\lambda+\bm{g}_{-1}\bm{g}_{-1}^\dagger\neq 0\ with\ n<q^2)$.
\end{itemize}
\end{proposition}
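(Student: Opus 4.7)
The plan is to combine the two constructions from Propositions \ref{pro GRS 1} and \ref{pro 3}: simultaneously adjoin the $\infty$-coordinate (length extension) and append the row $\bm{g}_k$ to the generator matrix (dimension increase). Concretely, for a parameter $\xi\in\F_{q^2}^*$ to be chosen, I would work with the EGRS code $GRS_{k+1}(\bm{a},\xi\bm{v},\infty)$ of length $n+1$ and dimension $k+1$, which is automatically MDS. The \emph{resp.} variant using $\bm{g}_{-1}$ proceeds identically with the GRS code $GRS_{k+1}(\tilde{\bm{a}},\tilde{\bm{v}})$ for $\tilde{\bm{a}}=(0,\bm{a})$ and $\tilde{\bm{v}}=(\xi,\bm{v})$, mirroring Proposition \ref{pro GRS 1}(2). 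A direct computation of the Gram matrix gives
\[
G_{k+1}(\bm{a},\xi\bm{v},\infty)G_{k+1}(\bm{a},\xi\bm{v},\infty)^\dagger
=\xi^{q+1}\underbrace{\begin{pmatrix} GG^\dagger & G\bm{g}_k^\dagger \\ \bm{g}_k G^\dagger & \bm{g}_k\bm{g}_k^\dagger+\xi^{-(q+1)}\end{pmatrix}}_{=:M},
\]
and by Lemma \ref{lem Hull=GG} the Hermitian hull dimension of the new code equals $k+1-rank(M)$. Since $\xi^{q+1}$ sweeps out all of $\F_q^*$, writing $\lambda=\xi^{-(q+1)}$ I get freedom to perturb the bottom-right entry by any element of $\F_q^*$.

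I would then analyze $rank(M)$ in each of the four cases, mirroring the arguments in Proposition \ref{pro 3} but tracking the extra diagonal term $\lambda$. In case (1), the hypothesis $\bm{g}_k\in GRS_k(\bm{a},\bm{v})^{\bot_H}$ kills the off-diagonal blocks, and since $\bm{g}_k\bm{g}_k^\dagger\neq 0$ the choice $\lambda=-\bm{g}_k\bm{g}_k^\dagger$ also kills the bottom-right entry, producing $rank(M)=rank(GG^\dagger)=k-l$ and hull dimension $l+1$. In case (2), the assumption $\bm{g}_k\notin(\Hull_H(GRS_k(\bm{a},\bm{v})))^{\bot_H}$ guarantees some hull basis vector $\bm{g}_{h_t}$ with $\bm{g}_{h_t}\bm{g}_k^\dagger\neq 0$, and the same rank-bookkeeping as in Proposition \ref{pro 3}(2) yields $rank(M)\le k-l+2$, i.e.\ hull dimension $\ge l-1$. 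In case (3), the stronger assumption $\bm{g}_k\in(\Hull_H(GRS_k(\bm{a},\bm{v})))^{\bot_H}$ forces every row of $M$ indexed by a hull element to vanish outside the last column, so $rank(M)\le k-l+1$ and hull dimension $\ge l$. Finally, in case (4) I would apply the Schur-complement reduction of Proposition \ref{pro 3}(4) with $\alpha:=\lambda+\bm{g}_k\bm{g}_k^\dagger\neq 0$, obtaining $rank(M)=rank(GSG^\dagger)+1$ after factoring out $\alpha$, so the hull dimension is exactly $k-rank(GSG^\dagger)$. In every case, a final invocation of Lemma \ref{lem 1} fills in all smaller admissible values of $s$.

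The main technical obstacle I anticipate is case (4): one must ensure $\lambda\in\F_q^*$ can be chosen so that $\lambda+\bm{g}_k\bm{g}_k^\dagger\neq 0$, which is immediate when $q>2$ since this excludes at most one value from the $q-1\ge 2$ available, and then verify that $\alpha$ can be pulled cleanly through the rank identity exactly as in Proposition \ref{pro 3}(4). A secondary subtlety is the \emph{resp.}~statement using $\bm{g}_{-1}$ at the new evaluation point $0$: this requires $n<q^2$ so that Remark \ref{rem 1} lets us assume $\bm{a}\in(\F_{q^2}^*)^n$ without loss of generality, ensuring $0$ is a legitimate new evaluation point. Once these points are handled, the four rank computations are direct adaptations of those in Proposition \ref{pro 3}, and the MDS property is automatic from the underlying (E)GRS construction.
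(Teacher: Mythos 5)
Your proposal matches the paper's proof: the paper likewise forms $GRS_{k+1}(\bm{a},\xi\bm{v},\infty)$ with $\xi^{q+1}=\lambda^{-1}$, reduces to the rank of the bordered Gram matrix with bottom-right entry $\bm{g}_k\bm{g}_k^\dagger+\lambda$, chooses $\lambda=-\bm{g}_k\bm{g}_k^\dagger$ in case (1), and handles (2)--(4) by the same adaptations of Proposition \ref{pro 3} before invoking Lemma \ref{lem 1}. Your treatment is in fact slightly more explicit than the paper's (which omits the details of (2)--(4)), and your observations about choosing $\lambda$ when $q>2$ and about needing $n<q^2$ for the $\bm{g}_{-1}$ variant are correct.
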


\begin{proof}
Similarly, we only prove the case where $\bm{g}_k$ is added and extended to EGRS code.

(1) For any $\lambda\in \F_q^*$, there exists $\xi$ such that $\xi^{q+1}=\lambda^{-1}$.
By Propositions \ref{pro GRS 1} and \ref{pro 3},
we have that
$$
rank(G_{k+1}(\bm{a},\xi\bm{v},\infty)G_{k+1}(\bm{a},\xi\bm{v},\infty)^\dagger)
=rank(
\begin{pmatrix}
GG^\dagger & G\bm{g}_k^\dagger\\
\bm{g}_kG^\dagger & \bm{g}_k\bm{g}_k^\dagger+\lambda\\
\end{pmatrix}).
$$
If $\lambda=-\bm{g}_k\bm{g}_k^\dagger$, then $GRS_{k+1}(\bm{a},\xi\bm{v},\infty)$ is an $[n+1,k+1]_{q^2}$ EGRS code with $(l+1)$-dimensional Hermitian hull.
The desired result follows from Lemma \ref{lem 1}.

The proof of (2)(3)(4) is similar to Proposition \ref{pro 3}. We omit the details.
This completes the proof.
\end{proof}

\begin{remark}
Comparison of Proposition \ref{pro 4} and \cite[Proposition 9]{RefJ (2022) Luo}:
\begin{itemize}
\item
Different from \cite[ Proposition 9]{RefJ (2022) Luo}, we choose the added vector to be $\bm{g}_k$ $(resp.\ \bm{g}_{-1}\ with\ n\leq q)$, so the resulting code remains an MDS code.
\item
In \cite{RefJ (2022) Luo}, Luo et al. consider added vector $\bm{c}$ satisfying $\bm{c}\in \C^{\bot_H} \backslash \Hull_H(\C)$ such that $\bm{c}\bm{c}^{\dagger}\neq 0$, while Proposition \ref{pro 4} considers added vector $\bm{g}_k$ $(resp.\ \bm{g}_{-1}\ with\ n\leq q)$, which does not always need to satisfy the above condition. Thus, Proposition \ref{pro 4} differs from \cite[Proposition 9]{RefJ (2022) Luo}.
\end{itemize}
\end{remark}

By the proof of Propositions \ref{pro 3} (3) and \ref{pro 4} (3), we have the following corollary.

\begin{corollary}
Let $GRS_k(\bm{a},\bm{v})$ be an $[n,k]_{q^2}$ GRS code with $l$-dimensional Hermitian hull,
where $0\leq l\leq k$.
If $\bm{g}_k\in (\Hull_H(GRS_k(\bm{a},\bm{v})))^{\bot_H}$ $(resp.\ \bm{g}_{-1}\in (\Hull_H(GRS_k(\bm{a},\bm{v})))^{\bot_H}\ with\ n<q^2)$, then there exists an $[n\ or\ n+1,k+1]_{q^2}$ MDS code with $(l+1)$-dimensional Hermitian hull.
\end{corollary}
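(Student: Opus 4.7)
The plan is to sharpen the proofs of Propositions \ref{pro 3} (3) and \ref{pro 4} (3), both of which only produce a Hermitian hull of dimension \emph{at most} $l+1$, by isolating the precise scalar whose vanishing promotes the hull dimension from $l$ to $l+1$. First I would fix a basis $\{\bm{g}_{h_1},\dots,\bm{g}_{h_l}\}$ of $\Hull_H(GRS_k(\bm{a},\bm{v}))$, complement it inside $\{\bm{g}_0,\dots,\bm{g}_{k-1}\}$ by vectors $\bm{g}_{z_1},\dots,\bm{g}_{z_{k-l}}$, and arrange these complementary vectors as the rows of a matrix $G_2$. As observed in Proposition \ref{pro 3} (3), the Gram matrix $G_2G_2^\dagger$ is then an invertible $(k-l)\times(k-l)$ matrix.

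Reusing the Schur-complement cleanup from the proof of Proposition \ref{pro 3} (3), which uses only the hypothesis $\bm{g}_k\in(\Hull_H(GRS_k(\bm{a},\bm{v})))^{\bot_H}$, reduces the Gram matrix of $G_{k+1}(\bm{a},\bm{v})$ to
\[
rank\bigl(G_{k+1}(\bm{a},\bm{v})G_{k+1}(\bm{a},\bm{v})^\dagger\bigr) = (k-l) + rank(\alpha),\qquad \alpha := \bm{g}_k\bm{g}_k^\dagger - \bm{g}_kG_2^\dagger(G_2G_2^\dagger)^{-1}G_2\bm{g}_k^\dagger.
\]
The same manipulation applied to the EGRS Gram matrix from Proposition \ref{pro 4}, after introducing a scalar $\xi$ with $\xi^{q+1}=\lambda^{-1}$ for some $\lambda\in\F_q^*$, simply replaces $\alpha$ by $\alpha+\lambda$. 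By Lemma \ref{lem Hull=GG}, getting an $(l+1)$-dimensional Hermitian hull amounts exactly to forcing this scalar to vanish.

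The argument then splits on whether $\alpha=0$. If $\alpha=0$, then $GRS_{k+1}(\bm{a},\bm{v})$ is already the desired $[n,k+1]_{q^2}$ MDS code with $(l+1)$-dimensional Hermitian hull. Otherwise $\alpha\neq 0$, and I would set $\lambda=-\alpha$ in Proposition \ref{pro 4} (3) to obtain an $[n+1,k+1]_{q^2}$ EGRS MDS code with $(l+1)$-dimensional Hermitian hull. The parenthetical case with $\bm{g}_{-1}$ (for $n<q^2$) is handled identically after invoking Remark \ref{rem 1} to assume $\bm{a}\in(\F_{q^2}^*)^n$.

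The one nontrivial check, and the anticipated main obstacle, is that $\alpha$ lies in $\F_q$; without this, $-\alpha$ would not be an admissible value of $\lambda$, since the equation $\xi^{q+1}=\lambda^{-1}$ is solvable in $\F_{q^2}$ only for $\lambda\in\F_q^*$. This follows from a short $\dagger$-Hermitian conjugation: the scalar $\bm{g}_k\bm{g}_k^\dagger$ is fixed under $(\cdot)^q$ as a Hermitian norm, $G_2G_2^\dagger$ is Hermitian so its inverse is too, and the quadratic form $\bm{g}_kG_2^\dagger(G_2G_2^\dagger)^{-1}G_2\bm{g}_k^\dagger$ therefore equals its own Hermitian transpose, i.e. its own $q$-th power. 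Once $\alpha\in\F_q$ is in hand, the dichotomy closes the argument.
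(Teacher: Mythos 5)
Your proposal is correct and follows essentially the same route the paper intends: the paper gives no explicit proof, deferring to the Schur-complement computations in the proofs of Propositions \ref{pro 3}~(3) and \ref{pro 4}~(3), and your dichotomy on whether the scalar $\alpha=\bm{g}_k\bm{g}_k^\dagger-\bm{g}_kG_2^\dagger(G_2G_2^\dagger)^{-1}G_2\bm{g}_k^\dagger$ vanishes (choosing $\lambda=-\alpha$ in the EGRS extension otherwise) is exactly the intended argument. Your explicit verification that $\alpha\in\F_q$ --- via $(\bm{x}\bm{y}^\dagger)^q=\bm{y}\bm{x}^\dagger$ and the Hermitian symmetry of $(G_2G_2^\dagger)^{-1}$ --- supplies the one detail the paper leaves implicit, and is needed for $-\alpha$ to be an admissible $\lambda$.
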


By Propositions \ref{pro 4} (2) and \ref{pro 4} (3), we have the following corollary.

\begin{corollary}\label{cor GRS 3}
Let $q>2$ be a prime power.
If there exists an $[n,k]_{q^2}$ GRS code with $l$-dimensional Hermitian hull,
where $0\leq l\leq k$,
then for any $1\leq i\leq \min\{l,q^2+1-n\}$, there exists an $[n+i,k+i]_{q^2}$ MDS code with $s$-dimensional Hermitian hull for each $0\leq s\leq l-i$.
\end{corollary}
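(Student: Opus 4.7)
The plan is to prove the corollary by induction on $i$, leveraging Proposition \ref{pro 4} at each step. For the base case $i=1$, observe that either $\bm{g}_k \in (\Hull_H(GRS_k(\bm{a},\bm{v})))^{\perp_H}$ or not; in the former situation Proposition \ref{pro 4}(3) provides an $[n+1,k+1]_{q^2}$ MDS code with Hermitian hull of dimension $l$, and in the latter Proposition \ref{pro 4}(2) provides one with Hermitian hull of dimension $l-1$. In either case, combining with Lemma \ref{lem 1} yields an $[n+1,k+1]_{q^2}$ MDS code with $s$-dimensional Hermitian hull for every $0 \leq s \leq l-1$, confirming the $i=1$ case.

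For the inductive step from $i-1$ to $i$, I would use the fact that the constructions underlying Proposition \ref{pro 4} produce not merely an abstract MDS code but a specific GRS or EGRS code of the form $GRS_{k+1}(\tilde{\bm{a}},\tilde{\bm{v}})$ or $GRS_{k+1}(\bm{a},\xi\bm{v},\infty)$, with an explicit generator matrix. As long as $n+j<q^2$ (so that $\F_{q^2}^*$ still has room beyond the current evaluation set), the extension can be realised by appending a fresh evaluation point from $\F_{q^2}^*$, keeping the intermediate code a genuine GRS code and permitting a direct reapplication of Proposition \ref{pro 4}. Each iteration decreases the guaranteed Hermitian hull dimension by at most one, so after $i$ applications we arrive at an $[n+i,k+i]_{q^2}$ MDS code whose Hermitian hull has dimension at least $l-i$, and Lemma \ref{lem 1} then delivers any value $s$ in the range $0\leq s\leq l-i$.

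The constraints $1\leq i\leq \min\{l,q^2+1-n\}$ arise naturally: $i\leq l$ ensures $l-i\geq 0$, and $n+i\leq q^2+1$ is the maximal possible length of an MDS code over $\F_{q^2}$ realisable from (extended) GRS constructions, which allows at most one evaluation at $\infty$ beyond length $q^2$. The main obstacle I foresee is verifying that the iteration remains valid once the length crosses from the pure GRS regime ($n+j\leq q^2$) into the EGRS regime ($n+j=q^2+1$), since Proposition \ref{pro 4} is literally stated only for GRS inputs. A priori, one should either exhaust the GRS-type extension (appending points of $\F_{q^2}^*$) before resorting to the $\infty$-extension at the final step, or else check that the same Gram-matrix rank computation used in the proof of Proposition \ref{pro 4} applies verbatim to EGRS inputs; either resolution delivers the stated conclusion.
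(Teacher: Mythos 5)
Your proposal is correct and matches the paper's (essentially unstated) argument: the paper derives Corollary \ref{cor GRS 3} simply by iterating Propositions \ref{pro 4}(2) and \ref{pro 4}(3), exactly the induction you describe. The obstacle you flag at the end is handled in the paper by Lemma \ref{lem GRS=EGRS}: since each intermediate length is at most $q^2$, the EGRS code produced at each step is again a GRS code, so Proposition \ref{pro 4} can be reapplied until the final step, where only the MDS property (not GRS-ness) is needed.
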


 By Corollary \ref{cor Galois},
the following result for Galois hull can be proved similarly.

\begin{corollary}\label{cor GALOIS}
Let $q>4$ be a prime power.
If there exists an $[n,k]_{q}$ GRS code with $l$-dimensional $e$-Galois hull, where $0\leq l\leq k$ and $0\leq e\leq m-1$, then for any $1\leq i\leq \min\{l,q+1-n\}$, there exists an $[n+i,k+i]_{q}$ MDS code with $s$-dimensional $e$-Galois hull for each $0\leq s\leq l-i$.
\end{corollary}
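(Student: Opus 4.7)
The plan is to follow the strategy of Corollary \ref{cor GRS 3} verbatim, with the Hermitian inner product replaced by the $e$-Galois inner product and Corollary \ref{cor Galois} invoked in place of Lemma \ref{lem 1}. The corollary is essentially a packaging of three ingredients: a Galois analogue of Proposition \ref{pro 4}, iteration of that one-step extension, and a final hull-shrinking by equivalence.

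First I would establish the Galois analogue of Proposition \ref{pro 4}, parts (2) and (3). The block-matrix rank computations there are purely linear-algebraic: they rely only on the associativity of matrix multiplication and on simultaneous row/column operations, not on any particular property of the Hermitian pairing. Hence, replacing every occurrence of $A^\dagger$ by $A^\ddagger$, the identities
$$G_{k+1}(\bm{a},\xi\bm{v},\infty)\,G_{k+1}(\bm{a},\xi\bm{v},\infty)^\ddagger=\begin{pmatrix} GG^\ddagger & G\bm{g}_k^\ddagger \\ \bm{g}_k G^\ddagger & \bm{g}_k\bm{g}_k^\ddagger+\lambda \end{pmatrix}$$
and the attendant block-triangularizations carry over unchanged. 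The one substantive point is the choice of the scaling element $\xi$: in the Hermitian case one uses the surjectivity of $\xi\mapsto \xi^{q+1}$ from $\F_{q^2}^*$ onto $\F_q^*$, whereas here one has the map $\xi\mapsto \xi^{p^e+1}$ on $\F_q^*$, whose image has index $\gcd(p^e+1,q-1)$. Since we only need \emph{some} admissible $\lambda$ to realise the case we are in (and Corollary \ref{cor Galois} absorbs the residual degrees of freedom), such a $\xi$ always exists.

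Next I would iterate. Beginning with the given $[n,k]_q$ GRS code of $e$-Galois hull dimension $l$, one application of the Galois version of Proposition \ref{pro 4} produces an $[n+1,k+1]_q$ MDS code whose $e$-Galois hull has dimension at least $l-1$ (selecting the $\bm{g}_k$ branch in case (2) or (3) according as $\bm{g}_k\in(\Hull_e(\mathcal{C}))^{\bot_e}$ or not). Repeating the extension $i$ times yields an $[n+i,k+i]_q$ MDS code with $e$-Galois hull of dimension at least $l-i$; the restriction $i\leq q+1-n$ arises because each extension requires the target length $n+i$ to be at most $q+1$, the largest length of an EGRS code over $\F_q$. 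Finally, since $q>4$, Corollary \ref{cor Galois} applies and allows us to reduce the hull dimension one step at a time via monomial equivalence, yielding an $[n+i,k+i]_q$ MDS code with $s$-dimensional $e$-Galois hull for each $0\leq s\leq l-i$.

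The main obstacle is the verification in the first step: one must check that every block-matrix rank identity in the Hermitian proof of Proposition \ref{pro 4} remains valid under the substitution $\dagger\mapsto\ddagger$, despite the fact that $\ddagger$ is not an involution when $2e\not\equiv 0\pmod m$. This turns out to be a non-issue, since the manipulations never use $(A^\ddagger)^\ddagger=A$; they only use that left-multiplication by $A^\ddagger$ is a linear operation with the appropriate conjugacy behaviour that makes $\mathrm{rank}(GG^\ddagger)=k-\dim \Hull_e(\mathcal{C})$ hold (the $e$-Galois analogue of Lemma \ref{lem Hull=GG}). Once this bookkeeping is completed, the remainder of the argument is a direct transliteration of the Hermitian proof.
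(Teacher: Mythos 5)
Your proposal is correct and follows essentially the same route the paper intends: transplant the length-and-dimension extension of Proposition \ref{pro 4} to the $e$-Galois setting, iterate it $i$ times (using that each intermediate EGRS code of length at most $q$ is again GRS), and finish with Corollary \ref{cor Galois} in place of Lemma \ref{lem 1}. The only remark worth adding is that the paper sidesteps your ``main obstacle'' entirely by using the cruder bound $\mathrm{rank}\begin{pmatrix} GG^\ddagger & G\bm{g}_k^\ddagger\\ \bm{g}_kG^\ddagger & \ast\end{pmatrix}\leq \mathrm{rank}(GG^\ddagger)+2$ (stated just before its Galois corollary to Proposition \ref{pro 3}), which gives the needed lower bound $l-1$ on the hull dimension per step without reproducing the exact-rank case analysis of Proposition \ref{pro 4}(2)--(3), whose equality version does quietly use the Hermitian symmetry $\bm{g}_i\bm{g}_j^\dagger=0\iff\bm{g}_j\bm{g}_i^\dagger=0$ that is not available for general $e$.
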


\section{MDS codes with Hermitian hulls of arbitrary dimensions}\label{sec MDS}

In \cite{RefJ (2021) Ball determine}, Ball et al. transformed the existence of a Hermitian self-orthogonal GRS code into the existence of a polynomial with a given number of distinct zeros and
 proved Conjecture 11 in \cite{RefJ (2015) n=q^2+1(2)}.
 From this we know the minimum distance of
of the puncture code $P(\C)$ of the $[q^2+1,k]_{q^2}$ RS code,
where the concept of puncture codes was first introduced by Rains (see \cite{RefJ (1999) introduct 2}).

 $\bullet$ $\textbf{Conjecture 11 in \cite{RefJ (2015) n=q^2+1(2)}:}$
The minimum distance of the puncture code $P(\C)$ of the $[q^2+1,k]_{q^2}$ RS code $\C$
is
$$\begin{cases}
2k, & if\ 1\leq k\leq q/2;\\
(q+1)(k-(q-1)/2), & if\ (q+1)/2\leq k\leq q-1,\ q\ odd;\\
q(k+1-q/2), & if\ q/2\leq k\leq q-1,\ q\ even;\\
q^2+1, & if\ k=q,\\
\end{cases}$$
where $P(\C)=\{\bm{\lambda}=(\lambda_1,\lambda_2,\dots,\lambda_n)\in \F_q^n:\ \langle  \bm{x}\star \bm{y}^q,\bm{\lambda} \rangle_0=0,\ for\ all\ \bm{x},\bm{y}\in \C \}$.

 By \cite[Conjecture 11]{RefJ (2015) n=q^2+1(2)} and \cite[Theorem 2]{RefJ (2021) Ball determine},
 we can deduce that the minimum length of a $q^2$-ary Hermitian self-orthogonal GRS code of dimension $q-1$ is $\begin{cases}
\frac{q^2-1}{2}, & if\ q\ odd;\\
\frac{q^2}{2}, & if\  q\ even.
\end{cases}$
However, the construction of $q^2$-ary Hermitian self-orthogonal GRS codes of dimension $q-1$ remains an open problem.
Until now, only some special length $q^2$-ary Hermitian self-orthogonal GRS codes of dimension $q-1$ have been constructed.
In this section,
based on some known $q^2$-ary Hermitian self-orthogonal GRS codes with dimension $q-1$,
 we obtain some classes of MDS codes with Hermitian hulls of arbitrary dimensions.
 It is worth noting that the dimension of these MDS codes can be taken from $q$ to $\frac{n}{2}$.

In \cite{RefJ (2004) Grassl n=q^2} and \cite{RefJ (2008) Li n=q^2}, the authors constructed $[q^2,k]_{q^2}$ Hermitian self-orthogonal GRS codes, where $1\leq k\leq q-1$.
Here we consider larger dimensions.

\begin{lemma}\label{lem q22}
There exists a $[q^2,q]_{q^2}$ GRS code $GRS_q(\bm{a},\bm{v})$ such that
$$\begin{cases}
\bm{g}_{q-1}\bm{g}_{q-1}^\dagger\neq 0; \\
\bm{g}_i\bm{g}_j^\dagger=0, \ 0\leq i,\ j\leq q-1,\ except\ i=j=q-1.
\end{cases}$$
\end{lemma}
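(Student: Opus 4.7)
The plan is to construct $\bm{a}$ and $\bm{v}$ very explicitly: take $\bm{a}=(a_1,a_2,\dots,a_{q^2})$ to be any enumeration of all the elements of $\F_{q^2}$, and take $\bm{v}=(1,1,\dots,1)\in(\F_{q^2}^*)^{q^2}$. Then the entries of $\bm{a}$ are distinct, $\bm{v}\in(\F_{q^2}^*)^{q^2}$, and $v_l^{q+1}=1$ for every $l$, so that
\[
\bm{g}_i\bm{g}_j^\dagger=\sum_{l=1}^{q^2}v_l^{q+1}a_l^{qi+j}=\sum_{a\in\F_{q^2}}a^{qi+j}
\]
for $0\le i,j\le q-1$ (using the convention $0^0=1$ when $qi+j=0$).

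Next I would invoke the standard power-sum identity for a finite field: for any integer $m\ge 0$,
\[
\sum_{a\in\F_{q^2}}a^{m}=
\begin{cases}
-1, & m>0 \text{ and } (q^2-1)\mid m,\\
0, & \text{otherwise},
\end{cases}
\]
where the $m=0$ case gives $q^2\cdot 1=0$ in $\F_{q^2}$ because the characteristic divides $q^2$. This is a one-line check via the cyclic structure of $\F_{q^2}^*$, splitting off the $a=0$ term carefully (it is $0$ for $m\ge 1$ and $1$ for $m=0$).

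Now I would combine the two ingredients. As $(i,j)$ ranges over $\{0,1,\dots,q-1\}^2$, the exponent $qi+j$ ranges bijectively over $\{0,1,\dots,q^2-1\}$. Hence among those $q^2$ exponents, only $qi+j=0$ (i.e.\ $(i,j)=(0,0)$) and $qi+j=q^2-1$ (i.e.\ $(i,j)=(q-1,q-1)$) lie in the set $\{0\}\cup(q^2-1)\mathbb{Z}_{>0}\cap[0,q^2-1]$. By the identity above, the first of these contributes $q^2=0$ and the second contributes $-1$. For every other pair $(i,j)$, the sum is $0$. Thus $\bm{g}_i\bm{g}_j^\dagger=0$ for all $(i,j)\ne(q-1,q-1)$, and $\bm{g}_{q-1}\bm{g}_{q-1}^\dagger=-1\ne 0$, which is exactly the statement.

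There is no real obstacle here: the whole argument reduces to the bijection $(i,j)\mapsto qi+j$ and the power-sum identity. The only small thing to be careful with is the boundary case $m=0$, where one must remember the convention $0^0=1$ from the preliminaries so that $\bm{g}_0\bm{g}_0^\dagger$ really equals $\sum_{a\in\F_{q^2}}a^0=q^2=0$, rather than accidentally dropping the $a=0$ term.
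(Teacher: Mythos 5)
Your proof is correct, and it is in fact a cleaner, self-contained specialization of what the paper does. The paper takes the same evaluation set $\bm{a}$ (all of $\F_{q^2}$, with $a_1=0$) but obtains $\bm{v}$ indirectly: it cites an external lemma (Lemma 2.3 of Fang--Fu) to produce a vector $\bm{u}\in(\F_q^*)^{q^2}$ in the kernel of the matrix $A=(\bm{a}^0,\dots,\bm{a}^{q^2-2})^{\mathcal{T}}$, then sets $\bm{v}^{q+1}=\bm{u}$ and concludes $\bm{g}_{q-1}\bm{g}_{q-1}^\dagger=-u_1\neq 0$. You simply observe that the all-ones vector already works: $\bm{1}$ lies in that kernel because $\sum_{a\in\F_{q^2}}a^m=0$ for $0\le m\le q^2-2$, and $\bm{v}=\bm{1}$ trivially satisfies $\bm{v}^{q+1}=\bm{1}$. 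Your direct computation via the power-sum identity, together with the bijection $(i,j)\mapsto qi+j$ from $\{0,\dots,q-1\}^2$ onto $\{0,\dots,q^2-1\}$, gives exactly the required pattern, with $\bm{g}_{q-1}\bm{g}_{q-1}^\dagger=-1$ (matching the paper's $-u_1$). What the paper's more roundabout route buys is flexibility in the choice of $\bm{v}$ (any $\bm{u}\in(\F_q^*)^{q^2}$ in the kernel will do); what your route buys is that the lemma becomes a one-page verification with no external citation. Your handling of the $m=0$ boundary case via the convention $0^0=1$ is also the right care point and is consistent with the paper's conventions.
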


\begin{proof}
Label the elements of $\F_{q^2}$ as $\{a_1=0,a_2,\dots,a_{q^2}\}$.
Let $\bm{a}=(a_1,a_2,\dots,a_{q^2})\in \F_{q^2}^{q^2}$.
Denote $A=(\bm{a}^0,\bm{a}^1,\dots,\bm{a}^{q^2-2})^\mathcal{T}$ be an $(q^2-1)\times q^2$ matrix over $\F_{q^2}$.
It is easy to check that any $q^2-1$ columns of $A$ are linearly independent and $A^q$ is row equivalent to $A$.
By \cite[Lemma 2.3]{RefJ (2019)W. some},
we have that equation $A\bm{u}^T=\bm{0}^T$
has a solution $\bm{u}=(u_1,u_2,\dots,u_{q^2})\in (\F_{q}^*)^{q^2}$.
Let $\bm{v}=(v_1,v_2,\dots,v_{q^2})\in (\F_{q^2}^*)^{q^2}$ such that $\bm{v}^{q+1}=\bm{u}$.
It follows that $\bm{g}_i\bm{g}_j^\dagger=\langle \bm{a}^{qi+j},\bm{v}^{q+1}\rangle_0=0$ for
$1\leq i,j\leq q-1$ except $i=j=q-1$.
Moreover, it is easy to check that $\bm{g}_{q-1}\bm{g}_{q-1}^\dagger=-u_1\neq 0$. This completes the proof.
\end{proof}

\begin{theorem}\label{th MDS 1}
Let $n=q^2$.
For any $1\leq t\leq \lceil\frac{n}{2q}\rceil$ and
$$\begin{cases}
(t-1)q+1\leq k\leq tq-t, & l=k-(t-1)^2;\\
k=tq-t+1, &  l=k-(t-1)^2-1;\\
tq-t+2\leq k\leq tq, & l=2tq-t^2-k,
\end{cases}$$
there exists a $q^2$-ary $[n,k]$ MDS with $s$-dimensional Hermitian hull for each $0\leq s\leq l$.
\end{theorem}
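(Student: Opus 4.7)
The plan is to start from the specific GRS code $GRS_q(\bm{a}, \bm{v})$ produced by Lemma \ref{lem q22}, keep the same evaluation pair $(\bm{a}, \bm{v})$ while enlarging the dimension to arbitrary $k$ in the range $q \leq k \leq tq$, and read off the Hermitian hull dimension of $GRS_k(\bm{a}, \bm{v})$ as $k - rank(G_k(\bm{a},\bm{v}) G_k(\bm{a},\bm{v})^\dagger)$ via Lemma \ref{lem Hull=GG}. Once each of these rank values is pinned down, Lemma \ref{lem 1} will produce, for every $0 \leq s \leq l$, a monomially equivalent $[n, k]_{q^2}$ code with $s$-dimensional Hermitian hull; since monomial equivalence preserves minimum distance, every such code remains MDS.

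The first technical step is to describe the nonzero entries of $G_{q^2}(\bm{a},\bm{v}) G_{q^2}(\bm{a},\bm{v})^\dagger$ in closed form. Writing $\bm{g}_i\bm{g}_j^\dagger = \sum_l v_l^{q+1} a_l^{qi+j}$ and using both $a_1 = 0$ and the fact that $\bm{u} = \bm{v}^{q+1}$ from the proof of Lemma \ref{lem q22} annihilates $\bm{a}^m$ for $0 \leq m \leq q^2 - 2$, a direct computation shows that $\bm{g}_i\bm{g}_j^\dagger = -v_1^{q+1}$ whenever $qi + j$ is a positive multiple of $q^2 - 1$ and $\bm{g}_i\bm{g}_j^\dagger = 0$ otherwise. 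Substituting into the $q \times q$ block description from Lemma \ref{lem q^2}, the nonzero entries of $G_k(\bm{a},\bm{v}) G_k(\bm{a},\bm{v})^\dagger$ sit precisely at the positions $(Iq + q - 1 - J,\ Jq + q - 1 - I)$ with $(I, J) \in \{0, \dots, q - 1\}^2$ and both coordinates at most $k - 1$. Distinct pairs $(I, J)$ produce distinct row indices (the value $Iq + q - 1 - J$ determines $(I, J)$ uniquely on $\{0, \dots, q - 1\}^2$), so each row carries at most one nonzero entry and the rank of $G_k G_k^\dagger$ equals the number of admissible pairs.

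The core of the argument is then a case analysis of this count. Writing $k = (t - 1)q + r$ with $1 \leq r \leq q$, the pairs with $\max(I, J) \leq t - 2$ are always admissible and contribute $(t - 1)^2$ positions. The diagonal pair $(t - 1, t - 1)$ enters precisely when $r \geq q - t + 1$, while the symmetric off-diagonal pairs $(I, t - 1),(t - 1, I)$ with $q - r \leq I \leq t - 2$ enter only when $r \geq q - t + 2$, each such unordered pair adding two positions. Matching the three resulting intervals $r \leq q - t$, $r = q - t + 1$, $q - t + 2 \leq r \leq q$ to the three subcases of the statement yields ranks $(t - 1)^2$, $(t - 1)^2 + 1$, and $t^2 + 2(r - q) = 2k + t^2 - 2tq$ respectively; subtracting from $k$ gives exactly the stated value of $l$.

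The main obstacle will be the bookkeeping in this case analysis. In particular, one must rule out pairs with $J \geq t$, which would otherwise spoil the closed form; fortunately the hypothesis $t \leq \lceil n / (2q) \rceil = \lceil q / 2 \rceil$ forces $2q - t > q \geq r$, which is precisely what makes the condition $J = t$ infeasible. With that cleared, a final invocation of Lemma \ref{lem 1} lowers the hull dimension to any $0 \leq s \leq l$ and completes the proof.
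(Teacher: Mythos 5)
Your proposal is correct and follows essentially the same route as the paper: both start from the code of Lemma \ref{lem q22}, identify the Hermitian hull dimension of $GRS_k(\bm{a},\bm{v})$ as $k$ minus the number of nonzero entries in the leading $k\times k$ principal submatrix of $G_{q^2}(\bm{a},\bm{v})G_{q^2}(\bm{a},\bm{v})^\dagger$, and finish with Lemmas \ref{lem Hull=GG} and \ref{lem 1}. The only difference is one of presentation: you locate the nonzero positions via the congruence $qi+j\equiv 0 \pmod{q^2-1}$ and count them exactly, whereas the paper reads them off the block structure of Lemma \ref{lem q^2} and, for $tq-t+2\le k\le tq$, settles for the rank upper bound supplied by Proposition \ref{pro 3}; both yield the stated value of $l$.
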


\begin{proof}
Let $\bm{a}$, $\bm{v}$ be defined as Lemma \ref{lem q22},
then we have that
$$\begin{cases}
\bm{g}_{q-1}\bm{g}_{q-1}^\dagger\neq 0;\\
\bm{g}_i\bm{g}_j^\dagger=0,\ 0\leq i,\ j\leq q-1,\ except\ i=j=q-1.
\end{cases}$$
By Lemma \ref{lem q^2},
 the element in any place of matrix $G_{q^2}(\bm{a},\bm{v})G_{q^2}(\bm{a},\bm{v})^\dagger$ is one of $\bm{g}_i\bm{g}_j^\dagger$ for $0\leq i,j\leq q-1$.
 Note that for any $1\leq k\leq \lfloor\frac{n}{2}\rfloor$, $G_k(\bm{a},\bm{v})G_k(\bm{a},\bm{v})^\dagger$ is a $k$-order leading principal submatrix of $G_{q^2}(\bm{a},\bm{v})G_{q^2}(\bm{a},\bm{v})^\dagger$,
 then we only need to consider the position of $\bm{g}_{q-1}\bm{g}_{q-1}^{\dagger}$ in the matrix $G_k(\bm{a},\bm{v})G_k(\bm{a},\bm{v})^\dagger$.
 For $1\leq k\leq \lfloor\frac{n}{2}\rfloor$, let $\lambda_k$ be the number of elements $\bm{g}_{q-1}\bm{g}_{q-1}^\dagger$ in matrix $G_{k}(\bm{a},\bm{v})G_{k}(\bm{a},\bm{v})^\dagger$,
it follows that $rank(G_{k}(\bm{a},\bm{v})G_{k}(\bm{a},\bm{v})^\dagger)\leq \lambda_k$.
By Lemma \ref{lem q^2},
it is easy to check that
$$ \lambda_k=\begin{cases}
(t-1)^2, & for\ (t-1)q+1\leq k\leq tq-t ;\\
(t-1)^2+1, & for\  tq-t+1.\\
\end{cases}$$
Moreover, by Proposition \ref{pro 3},
for $tq-t+2\leq k\leq tq$,
we have
$$ \lambda_k\leq  (t-1)^2+1+2[k-(tq-t+1)]=t^2-2tq+2k.$$
The desired result follows from Lemmas \ref{lem Hull=GG} and \ref{lem 1}.
This completes the proof.
\end{proof}

\begin{remark}
For a GRS code $GRS_k(\bm{a},\bm{v})$, if $\Hull_H(GRS_q(\bm{a},\bm{v}))=GRS_{q-1}(\bm{a},\bm{v})$, then we get similar conclusions to those of Theorem \ref{th MDS 1}.
\end{remark}

Similarly, from \cite[Corollary 4.5]{RefJ (2017) Shi MDS} and \cite[Corollary 4]{RefJ (2023) Wan 1}, we can obtain the following lemma.

\begin{lemma}\label{lem h-1}
Let $n=\frac{(h-1)(q^2-1)}{h}$, where $h\mid (q+1)$ and $h\geq 2$, then there exists an $[n,q]_{q^2}$ GRS code $GRS_q(\bm{a},\bm{v})$ such that
$$\begin{cases}
\bm{g}_{q-1}\bm{g}_{\frac{q+1}{h}-2}^\dagger\neq 0;\\
\bm{g}_{\frac{q+1}{h}-2}\bm{g}_{q-1}^\dagger\neq 0;\\
\bm{g}_i\bm{g}_j^\dagger=0,\ 0\leq i,\ j\leq q-1,\ except\ i=q-1,\ j=\frac{q+1}{h}-2\ and\ i=\frac{q+1}{h}-2,\ j=q-1.
\end{cases}$$
\end{lemma}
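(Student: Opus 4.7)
The plan is to invoke the explicit Hermitian self-orthogonal GRS constructions of Shi et al.\ \cite{RefJ (2017) Shi MDS} and Wan et al.\ \cite{RefJ (2023) Wan 1}, and then verify the extended inner-product pattern by a direct character sum computation, in the same spirit as the proof of Lemma \ref{lem q22}.

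First, I would take the evaluation set $\bm{a} \in (\F_{q^2}^*)^n$ and scaling vector $\bm{v}$ from the cited references that make $GRS_{q-1}(\bm{a}, \bm{v})$ Hermitian self-orthogonal. In the natural such construction, $\bm{a}$ enumerates $h-1$ multiplicative cosets of the index-$h$ subgroup $H \leq \F_{q^2}^*$ (of order $|H| = (q^2-1)/h$), and $v_l^{q+1}$ depends only on the coset containing $a_l$. This choice already forces $\bm{g}_i \bm{g}_j^\dagger = 0$ for all $0 \leq i,j \leq q-2$, which handles the bulk of the claimed zero pattern.

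Next, I would examine the new entries that appear when we pass from dimension $q-1$ to dimension $q$, namely $\bm{g}_{q-1}\bm{g}_j^\dagger$, $\bm{g}_i\bm{g}_{q-1}^\dagger$ for $0\leq i,j\leq q-1$, and $\bm{g}_{q-1}\bm{g}_{q-1}^\dagger$. Writing each as a power sum $\sum_l u_l a_l^m$ with $u_l = v_l^{q+1} \in \F_q^*$ and reducing the exponent $m=i+qj$ modulo $q^2-1$, the coset structure of $\bm{a}$ factors the sum as $\bigl(\sum_{x\in H} x^m\bigr) \bigl(\sum_k \alpha_k c_k^m\bigr)$, where $c_k$ are coset representatives and $\alpha_k$ the associated coset-wise weights. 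The inner factor vanishes unless $|H| \mid m$, which cuts the list of surviving exponents down to a handful of candidates; for these candidates the outer factor is a character sum in $h$-th roots of unity controlled by the chosen $\alpha_k$.

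The key structural observation is that, writing $d = (q+1)/h$, one has $q \cdot (qd-q-1) \equiv q^2-q+d-2 \pmod{q^2-1}$, so the two claimed exceptional exponents form a single orbit under the Frobenius action $m \mapsto qm$. Since $u_l \in \F_q^*$ forces $(\sum_l u_l a_l^m)^q = \sum_l u_l a_l^{qm}$, the zero set is automatically Frobenius-invariant, and non-vanishing at one representative of the exceptional orbit implies non-vanishing at the other. The main obstacle is the congruence bookkeeping: one must verify that every exponent $i+qj$ arising from a pair $(i,j) \in \{0,\dots,q-1\}^2 \setminus \{(q-1,d-2),(d-2,q-1)\}$ either fails the divisibility $|H| \mid m$ or lies in a Frobenius orbit that is annihilated by the coset weights $\alpha_k$. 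This tracking uses the crucial relations $q \equiv -1 \pmod h$ (from $h \mid q+1$) and $q^2 \equiv 1 \pmod{|H|}$, which together pin down the exceptional orbit uniquely. Once the zero pattern is established, nonvanishing at $m = qd-q-1$ is an explicit evaluation of the surviving character sum, and by the Frobenius symmetry nonvanishing at $m = q^2-q+d-2$ follows for free.
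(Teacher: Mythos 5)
Your proposal follows essentially the same route as the paper: both take $\bm{a},\bm{v}$ from the cited Hermitian self-orthogonal constructions (so that $\bm{g}_i\bm{g}_j^\dagger=0$ for $0\leq i,j\leq q-2$ comes for free), and both reduce the remaining entries involving $\bm{g}_{q-1}$ to a divisibility condition on the exponent modulo $|H|=(q^2-1)/h$ arising from the coset structure, leaving the final case-check to the congruence bookkeeping of the cited reference. Your explicit Frobenius-orbit observation (that $u_l\in\F_q^*$ makes the zero set of exponents invariant under $m\mapsto qm$, pairing the two exceptional entries) is a nice sharpening of the conjugate symmetry $\bm{g}_j\bm{g}_i^\dagger=(\bm{g}_i\bm{g}_j^\dagger)^q$ that the paper uses only implicitly, and your orbit computation checks out.
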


\begin{proof}
Let $\bm{a}$, $\bm{v}$ be defined as \cite[Section 4.2]{RefJ (2023) Wan 1},
by \cite[Theorem 2]{RefJ (2023) Wan 1}, we have that $\bm{g}_i\bm{g}_j^{\dagger}=0$
for $0\leq i,j\leq q-2$.
Note that
$$\bm{g}_i\bm{g}_j^\dagger\neq 0\ {\rm for}\ i=q-1\ {\rm or}\ j=q-1\
{\rm if\ and\ only\ if}\ \frac{q^2-1}{h}\mid qi+j+q+1-\frac{q+1}{h}.$$
Similar to the proof of \cite[ Lemma 8]{RefJ (2023) Wan 1},
we have that when $h$ is even,
$$\bm{g}_i\bm{g}_j\neq 0\ {\rm for}\ i=q-1\ {\rm or}\ j=q-1\
{\rm if\ and\ only\ if}\  (i,j)=(q-1,\frac{q+1}{h}-2)\ {\rm or}\ (i,j)=(\frac{q+1}{h}-2,q-1).$$
Hence, when $h$ even, we have
$\bm{g}_i\bm{g}_j^\dagger=0$ for $0\leq i,j\leq q-1$ except $i=q-1$, $j=\frac{q+1}{h}-2$ and $i=\frac{q+1}{h}-2$, $j=q-1$.
Moreover, we know that $\bm{g}_{q-1}\bm{g}_{\frac{q+1}{h}-2}^\dagger\neq 0$ and $\bm{g}_{\frac{q+1}{h}-2}\bm{g}_{q-1}^\dagger\neq 0$.
Similarly, it can be shown that the result still holds when $h$ is odd. We omit the details.
This completes the proof.
\end{proof}

\begin{theorem}\label{th MDS 2}
Let $n=\frac{(h-1)(q^2-1)}{h}$, where $h\mid (q+1)$ and $h\geq 2$.
For any $1\leq t\leq \lceil\frac{n}{2q}\rceil$ and
$$\begin{cases}
(t-1)q+1\leq k\leq tq-t-\frac{(h-1)(q+1)}{h}+1, & l= k-2(t-1)^2;\\
tq-t-\frac{(h-1)(q+1)}{h}+2\leq k\leq tq-\frac{(h-1)(q+1)}{h}, &  l= 2tq+2t-2t^2-k-\frac{2(h-1)(q+1)}{h};\\
tq-\frac{(h-1)(q+1)}{h}+1\leq k\leq tq-t, & l=k+2t-2t^2;\\
tq-t+1\leq k\leq tq &  l= 2tq-2t^2-k,
\end{cases}$$
there exists a $q^2$-ary $[n,k]$ MDS with $s$-dimensional Hermitian hull for each $0\leq s\leq l$. \end{theorem}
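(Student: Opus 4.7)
The argument parallels Theorem~\ref{th MDS 1}, adapted to the presence of two nonzero families of positions instead of one. Starting from the GRS code $GRS_q(\bm{a}, \bm{v})$ supplied by Lemma~\ref{lem h-1}, set $r = \frac{q+1}{h} - 2$ and $s = \frac{(h-1)(q+1)}{h}$, so that $r + s + 1 = q$; only $\bm{g}_{q-1}\bm{g}_r^\dagger$ and $\bm{g}_r\bm{g}_{q-1}^\dagger$ are nonzero among $\bm{g}_i\bm{g}_j^\dagger$ for $0 \leq i, j \leq q-1$. By Lemma~\ref{lem q^2}, the entry $\bm{g}_I\bm{g}_J^\dagger$ of $G_{q^2}(\bm{a}, \bm{v})G_{q^2}(\bm{a}, \bm{v})^\dagger$ depends only on $qI + J \bmod (q^2 - 1)$, so the $(I+1, J+1)$-entry is nonzero precisely when $qI + J \equiv \alpha$ or $\beta \pmod{q^2 - 1}$, where $\alpha = q(q-1) + r$ and $\beta = qr + q - 1$. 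Writing $\lambda_k$ for the number of $(I, J) \in [0, k-1]^2$ satisfying this congruence, the bound $\mathrm{rank}(G_kG_k^\dagger) \leq \lambda_k$ together with Lemma~\ref{lem Hull=GG} gives $\dim \Hull_H(GRS_k(\bm{a}, \bm{v})) \geq k - \lambda_k$, and Lemma~\ref{lem 1} then produces every smaller Hermitian hull dimension.

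I would handle the four ranges in two modes. In Ranges~1 and~3 (the stable plateaus), I would compute $\lambda_k$ directly by enumerating lattice points on the lines $qI + J = \rho + m(q^2 - 1)$ for $\rho \in \{\alpha, \beta\}$ and $m \geq 0$. In Range~1, $(t-1)q + 1 \leq k \leq tq - t - s + 1$, each of the two families contributes exactly $(t-1)^2$ positions --- coming from the values $m = 0, 1, \dots, t-2$ --- and the upper endpoint is precisely where the next position would enter $[0, k-1]^2$; this yields $\lambda_k = 2(t-1)^2$ and hence $l = k - 2(t-1)^2$. In Range~3, $tq - s + 1 \leq k \leq tq - t$, the window has absorbed one further ``strip'' per family (the threshold $tq - s$ being governed by the offset $s = q - 1 - r$), giving $\lambda_k = 2t(t-1)$ and $l = k + 2t - 2t^2$. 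For Ranges~2 and~4 I would rely on Proposition~\ref{pro 3}, which guarantees that incrementing the dimension by one raises $\mathrm{rank}(G_kG_k^\dagger)$ by at most $2$. Iterating this $+2$ bound from $\mathrm{rank} \leq 2(t-1)^2$ at $k = tq - t - s + 1$ and from $\mathrm{rank} \leq 2t(t-1)$ at $k = tq - t$ yields the stated formulas for $l$ in the remaining two ranges.

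The main obstacle will be the bookkeeping in Ranges~1 and~3: one must verify that, as $k$ sweeps through each interval, no lattice point on a line $qI + J = \rho + m(q^2 - 1)$ with $m$ larger than the nominal bound slips into $[0, k-1]^2$ prematurely, and that the two families never share a point --- equivalently, $\alpha \not\equiv \beta \pmod{q^2 - 1}$, which follows from $\alpha - \beta = (q-1)s$ together with $s \not\equiv 0 \pmod{q+1}$ (since $s = (q+1) - \frac{q+1}{h}$ and $h \geq 2$). Both checks reduce to inequalities on the discrete coordinates along each line, controlled by $r + s + 1 = q$. Once the boundary counts $\lambda_{tq - t - s + 1} = 2(t-1)^2$ and $\lambda_{tq - t} = 2t(t-1)$ are pinned down, the Proposition~\ref{pro 3} recursion fills in Ranges~2 and~4 uniformly, and Lemma~\ref{lem 1} completes the proof by providing all smaller hull dimensions.
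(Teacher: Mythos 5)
Your proposal follows essentially the same route as the paper: start from the code of Lemma \ref{lem h-1}, use the periodic structure of $G_{q^2}(\bm{a},\bm{v})G_{q^2}(\bm{a},\bm{v})^\dagger$ from Lemma \ref{lem q^2} to count the nonzero entries in the $k$-th leading principal submatrix (the paper counts one of the two symmetric families and doubles, you count both at once), bound the rank by that count, and finish with Lemmas \ref{lem Hull=GG} and \ref{lem 1}. The only cosmetic difference is that the paper computes the count directly on Range~2 and invokes the Proposition \ref{pro 3} increment bound only on Range~4, whereas you use the increment recursion on both Ranges~2 and~4; the resulting formulas for $l$ coincide.
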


\begin{proof}
Let $\bm{a}$, $\bm{v}$ be defined as Lemma \ref{lem h-1},
then we have that
 $$\bm{g}_i\bm{g}_j^\dagger=0\ {\rm for}\ 0\leq i,j\leq q-1\ {\rm except}\ i=q-1,\ j=\frac{q+1}{h}-2\ {\rm and}\ i=\frac{q+1}{h}-2,\ j=q-1$$
Similar to the proof of Theorem \ref{th MDS 1},
for $1\leq k\leq \lfloor\frac{n}{2}\rfloor$,
we only need to consider the position of $\bm{g}_{q-1}\bm{g}_{\frac{q+1}{h}+2}^{\dagger}$ and $\bm{g}_{\frac{q+1}{h}+2}\bm{g}_{q-1}^{\dagger}$ in the matrix $G_k(\bm{a},\bm{v})G_k(\bm{a},\bm{v})^\dagger$.
For $1\leq k\leq \lfloor\frac{n}{2}\rfloor$,
let $\lambda_k$ be the number of elements $\bm{g}_{q-1}\bm{g}_{\frac{q+1}{h}-2}^\dagger$ in matrix $G_{k}(\bm{a},\bm{v})G_{k}(\bm{a},\bm{v})^\dagger$.
By symmetry, we have that $rank(G_{k}(\bm{a},\bm{v})G_{k}(\bm{a},\bm{v})^\dagger)\leq 2\lambda_k$.
By Lemma \ref{lem q^2},
we can get
$$ \lambda_k=\begin{cases}
(t-1)^2, & for\ (t-1)q+1\leq k\leq tq-t-\frac{(h-1)(q+1)}{h}+1;\\
t^2-t-tq+k+\frac{(h-1)(q+1)}{h}, & for\  tq-t-\frac{(h-1)(q+1)}{h}+2\leq k\leq tq-\frac{(h-1)(q+1)}{h};\\
t^2-t, & for\  tq-\frac{(h-1)(q+1)}{h}+1\leq k\leq tq-t.\\
\end{cases}$$
Moreover, by Proposition \ref{pro 3},
for $tq-t+1\leq k\leq tq$,
we have
$$\lambda_k \leq t^2-t+k-(tq-t)=t^2-tq+k.$$
The desired result follows from Lemmas \ref{lem Hull=GG} and \ref{lem 1}.
 This completes the proof.
\end{proof}

Similarly, from \cite[Corollary 4]{RefJ (2023) Wan 2}, we can obtain the following lemma.

\begin{lemma}\label{lem 2h-1}
Let $n=\frac{(2h-1)(q^2-1)}{2h}$, where $q$ is odd and $\frac{q+1}{h}=2\tau+1$, for some $\tau\geq 1$.
Then there exists an $[n,q]_{q^2}$ GRS code $GRS_q(\bm{a},\bm{v})$ such that
$$\begin{cases}
\bm{g}_{q-1}\bm{g}_{\frac{q+1}{2}-2}^\dagger\neq 0;\\
\bm{g}_{\frac{q+1}{2}-2}\bm{g}_{q-1}^\dagger\neq 0;\\
\bm{g}_i\bm{g}_j^\dagger=0,\ 0\leq i,\ j\leq q-1,\ except\ i=q-1,\ j=\frac{q+1}{2}-2\ and\ i=\frac{q+1}{2}-2,\ j=q-1.
\end{cases}$$
\end{lemma}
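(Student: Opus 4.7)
The plan is to mirror the strategy used in the proof of Lemma \ref{lem h-1}, replacing the reference to [RefJ (2023) Wan 1] with the analogous construction from [RefJ (2023) Wan 2, Corollary 4]. Specifically, I would begin by invoking that corollary to obtain explicit vectors $\bm{a}\in\F_{q^2}^n$ and $\bm{v}\in(\F_{q^2}^*)^n$ of length $n=\frac{(2h-1)(q^2-1)}{2h}$ such that the inner products $\bm{g}_i\bm{g}_j^\dagger=\langle\bm{a}^{qi+j},\bm{v}^{q+1}\rangle_0$ vanish for the lower-degree range $0\leq i,j\leq q-2$. This is the analogue of the base case established in the previous lemma.

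Next, I would reduce the problem of deciding when $\bm{g}_i\bm{g}_j^\dagger\neq 0$ for pairs $(i,j)$ with $i=q-1$ or $j=q-1$ to a single divisibility condition. By construction, the support of $\bm{v}^{q+1}$ is contained in the multiplicative subgroup of $\F_{q^2}^*$ of order $\frac{q^2-1}{2h}$, possibly after a multiplicative shift by a coset representative determined by the parameter $\tau$. The character-sum/power-sum vanishing criterion will then read
\[
\bm{g}_i\bm{g}_j^\dagger\neq 0\ \Longleftrightarrow\ \tfrac{q^2-1}{2h}\ \text{divides}\ qi+j+(q+1)-\tfrac{q+1}{2},
\]
in perfect analogy with the $\frac{q+1}{h}$-shift of the earlier proof.

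The main technical step is to show that, restricted to $0\leq i,j\leq q-1$ with at least one of $i,j$ equal to $q-1$, this congruence has exactly two solutions, namely $(i,j)=(q-1,\frac{q+1}{2}-2)$ and $(i,j)=(\frac{q+1}{2}-2,q-1)$. Using the assumption that $q$ is odd and $\frac{q+1}{h}=2\tau+1$ is odd (which forces $h$ to be even and makes $\frac{q+1}{2}$ an integer), I would bound $qi+j+(q+1)-\frac{q+1}{2}$ within an interval of length less than $2\cdot\frac{q^2-1}{2h}$ whenever $i=q-1$ (or, symmetrically, $j=q-1$), thereby ruling out all but a single residue class within the admissible range. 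Plugging in the candidate solutions verifies the congruence by direct substitution. This indexing argument is the step I expect to require the most care, since the shift parameters differ from those of Lemma \ref{lem h-1} by a factor of $2$, and I must track both parities correctly.

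Finally, I would confirm that the two surviving inner products are actually nonzero, not merely consistent with the divisibility. Since $\bm{g}_{q-1}\bm{g}_{(q+1)/2-2}^\dagger$ equals (up to a nonzero scalar coming from the normalization of $\bm{v}$) a sum of the form $\sum v_l^{q+1}a_l^{N}$ where the exponent $N$ lies in the required coset, and since $\bm{v}^{q+1}$ is supported on a single coset of the subgroup, the sum collapses to a nonzero multiple of $|\mathrm{supp}(\bm{v}^{q+1})|$ in $\F_{q^2}$, which is nonzero because the support size is coprime to the characteristic (indeed, $\frac{q^2-1}{2h}\not\equiv 0\pmod p$). The symmetric product $\bm{g}_{(q+1)/2-2}\bm{g}_{q-1}^\dagger$ is handled by the same argument with the roles of $i$ and $j$ swapped. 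The odd-$h$ case mentioned in the previous lemma does not arise here since the hypothesis $\frac{q+1}{h}=2\tau+1$ odd, together with $q$ odd, already forces the relevant parity, so no separate subcase analysis is needed.
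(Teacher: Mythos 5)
Your proposal follows essentially the same route as the paper's proof: invoke the construction from \cite[Theorem 4 / Corollary 4]{RefJ (2023) Wan 2} for the range $0\leq i,j\leq q-2$, reduce the boundary cases to the divisibility criterion $\frac{q^2-1}{2h}\mid qi+j+\frac{q+1}{2}$ (your shift $(q+1)-\frac{q+1}{2}$ equals the paper's $\frac{q+1}{2}$), and isolate the two solution pairs by an interval/counting argument before confirming non-vanishing. Your observation that the hypotheses force $h$ to be even is correct and in fact tidies up the paper's case split, which otherwise defers the remaining parity case to the argument of Lemma \ref{lem h-1}.
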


\begin{proof}
Let $\bm{a}$, $\bm{v}$ be defined as \cite[Theorem 4]{RefJ (2023) Wan 2}, we have that $\bm{g}_i\bm{g}_j^{\dagger}=0$
for $0\leq i,j\leq q-2$.
Note that
$$\bm{g}_i\bm{g}_j^\dagger\neq 0\ {\rm for}\ i=q-1\ {\rm or}\ j=q-1\
{\rm if\ and\ only\ if}\ \frac{q^2-1}{2h}\mid qi+j+\frac{q+1}{2}.$$
Similar to the proof of \cite[Lemma 9]{RefJ (2023) Wan 2},
we have that when $h$ is even,
$$\bm{g}_i\bm{g}_j\neq 0\ {\rm for}\ i=q-1\ {\rm or}\ j=q-1\
{\rm if\ and\ only\ if}\  (i,j)=(q-1,\frac{q+1}{2}-2)\ {\rm or}\ (i,j)=(\frac{q+1}{2}-2,q-1).$$
The rest of the proof is similar to Lemma \ref{lem h-1}, and we omit the details.
This completes the proof.
\end{proof}

Similarly to Theorem \ref{th MDS 2}, by Lemma \ref{lem 2h-1}, we can directly obtain the following theorem.

\begin{theorem}\label{th MDS 3}
Let $n=\frac{(2h-1)(q^2-1)}{2h}$, where $q$ is odd and $\frac{q+1}{h}=2\tau+1$, for some $\tau\geq 1$.
For any $1\leq t\leq \lfloor \frac{n}{2q}\rfloor$ and
$$\begin{cases}
(t-1)q+1\leq k\leq tq-t-\frac{q+1}{2}+1, & l=k-2(t-1)^2;\\
tq-t-\frac{q+1}{2}+2\leq k\leq tq-\frac{q+1}{2}, &  l= 2tq+2t-2t^2-k-q-1;\\
tq-\frac{q+1}{2}+1\leq k\leq tq-t, & l= k+2t-2t^2;\\
tq-t+1\leq k\leq tq & l= 2tq-2t^2-k,
\end{cases}$$
there exists a $q^2$-ary $[n,k]$ MDS with $s$-dimensional Hermitian hull for each $0\leq s\leq l$.
\end{theorem}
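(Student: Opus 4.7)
The plan is to mirror the proof of Theorem \ref{th MDS 2}, substituting Lemma \ref{lem 2h-1} for Lemma \ref{lem h-1}. First, I would fix the GRS code $GRS_q(\bm{a},\bm{v})$ guaranteed by Lemma \ref{lem 2h-1}, for which the only nonzero values of $\bm{g}_i\bm{g}_j^\dagger$ with $0\le i,j\le q-1$ occur at the symmetric pair of positions $(q-1,\frac{q+1}{2}-2)$ and $(\frac{q+1}{2}-2,q-1)$. By Lemma \ref{lem q^2}, every entry of $G_{q^2}(\bm{a},\bm{v})G_{q^2}(\bm{a},\bm{v})^\dagger$ is one of the $\bm{g}_i\bm{g}_j^\dagger$ with $0\le i,j\le q-1$, and its $q\times q$ block decomposition pins down where each such entry appears. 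Since $G_k(\bm{a},\bm{v})G_k(\bm{a},\bm{v})^\dagger$ is the leading $k\times k$ principal submatrix of $G_{q^2}(\bm{a},\bm{v})G_{q^2}(\bm{a},\bm{v})^\dagger$, the rank analysis reduces to counting nonzero entries in that submatrix.

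Let $\lambda_k$ denote the number of occurrences of $\bm{g}_{q-1}\bm{g}_{\frac{q+1}{2}-2}^\dagger$ inside the leading $k\times k$ submatrix. By Hermitian symmetry the partner $\bm{g}_{\frac{q+1}{2}-2}\bm{g}_{q-1}^\dagger$ contributes a mirror count, so $\mathrm{rank}(G_kG_k^\dagger)\le 2\lambda_k$. A direct inspection of the block pattern in Lemma \ref{lem q^2}, threading the column shift $\frac{q+1}{2}$ through the arithmetic, shows that on the first three $k$-ranges of the theorem
\[
\lambda_k=\begin{cases}(t-1)^2,&(t-1)q+1\le k\le tq-t-\tfrac{q+1}{2}+1,\\ t^2-t-tq+k+\tfrac{q+1}{2},&tq-t-\tfrac{q+1}{2}+2\le k\le tq-\tfrac{q+1}{2},\\ t^2-t,&tq-\tfrac{q+1}{2}+1\le k\le tq-t.\end{cases}
\]
Lemma \ref{lem Hull=GG} then yields a lower bound of $k-2\lambda_k$ on the Hermitian hull dimension, and Lemma \ref{lem 1} upgrades this bound into the claimed ``for each $0\le s\le l$'' statement.

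For the fourth range $tq-t+1\le k\le tq$, the new row $\bm{g}_{k-1}$ leaves the region described above, so I fall back on Proposition \ref{pro 3}, which allows $\mathrm{rank}$ of the Gram matrix to grow by at most $2$ when the code dimension is increased by one. Starting from $\lambda_{tq-t}=t^2-t$ and iterating, one obtains $\mathrm{rank}(G_kG_k^\dagger)\le 2(t^2-t)+2\bigl(k-(tq-t)\bigr)=2t^2-2tq+2k$, which gives a Hermitian hull dimension of at least $2tq-2t^2-k$, matching the fourth line of the statement. Lemma \ref{lem 1} again produces the intermediate values of $s$.

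The main obstacle is the piecewise bookkeeping for $\lambda_k$: one must chase precisely where the column index $\frac{q+1}{2}-2$ lands inside each $q\times q$ block $A_{i,j}$ as $k$ crosses the thresholds $tq-t-\frac{q+1}{2}+2$, $tq-\frac{q+1}{2}+1$, and $tq-t+1$. Once this is set up in parallel with the corresponding count in Theorem \ref{th MDS 2}, the remaining invocations of Lemmas \ref{lem Hull=GG} and \ref{lem 1} are routine and the four cases of the theorem drop out in sequence.
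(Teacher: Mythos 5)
Your proposal is correct and follows exactly the route the paper intends: the paper gives no separate proof of Theorem \ref{th MDS 3}, stating only that it follows ``similarly to Theorem \ref{th MDS 2}, by Lemma \ref{lem 2h-1},'' and your write-up fills in precisely that argument, with the $\lambda_k$ counts and the resulting bounds $l=k-2\lambda_k$ matching all four cases of the statement.
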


\section{Application to EAQECCs and comparison of results}\label{sec4}

In this section,
we derive three new propagation rules for MDS EAQECCs constructed from GRS codes.
Then, from the presently known GRS codes with Hermitian hulls, we can directly obtain many MDS EAQECCs with more flexible parameters.
Moreover, based on the MDS codes we constructed before, we can obtain many new MDS EAQECCs with distances greater than $q+1$.
Finally, by comparison, we can show that the MDS EAQECCs we obtained have new parameters.

\subsection{Application to EAQECCs}
A $q$-ary quantum code of length $n$ and dimension $K$ is a $K$-dimension subspace of a $q^n$-dimensional Hilbert space $(\mathbb{C}^q)^{\otimes n}$.
A quantum code with minimum distance $d$ can detect any $d-1$ quantum errors and correct any $\lfloor \frac{d-1}{2}\rfloor$ quantum errors.
We use $[[n,k,d]]_q$ to denote a quantum code of length $n$, dimension $q^k$ and minimum distance $d$.
A quantum code attaining the bound
$$2d\leq n-k+2$$
is called a quantum MDS code.
In recent years, many quantum MDS codes with good parameters have been constructed from Hermitian self-orthogonal codes (see \cite{RefJ (2004) Grassl n=q^2}-\cite{RefJ (2023) Wan 2} and the references therein).

A $q$-ary EAQECC can be denoted as $[[n,k,d;c]]_q$ which encodes $k$ information qubits into
$n$ channel qubits with the help of $c$ pairs of maximally entangled states and corrects up
to $\lfloor \frac{d-1}{2}\rfloor$ errors, where $d$ is the minimum distance of the code.
For more information on EAQECCs, we refer the reader to \cite{RefJ (2012) Djordjevic}.
Similarly,
a tradeoff condition between the parameters $n$, $k$, $d$ and $c$ is the following bound:

\begin{lemma}(\cite{RefJ (2022) Grassl bound})
For an any EAQECC with parameters $[[n,k,d;c]]_q$,
\begin{align}
&k\leq c+\max\{0,n-2d+2\};\label{eqqqq 1}\\
&k\leq n-d+1;\label{eqqqq 2}\\
&k\leq \frac{(n-d+1)(c+2d-2-n)}{3d-3-n},\ if\ 2d\geq n+2.\label{eqqqq 3}
\end{align}
An EAQECC that satisfies Eq. (\ref{eqqqq 1}) for $2d\leq n+2$ or Eq. (\ref{eqqqq 3}) for $2d\geq n+2$ with equality is called an MDS EAQECC.
\end{lemma}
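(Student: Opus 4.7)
The statement packages three Singleton-type inequalities for entanglement-assisted quantum codes, so my plan is to treat each one on its own, in increasing order of difficulty: (\ref{eqqqq 2}) first, then (\ref{eqqqq 1}), then the refined bound (\ref{eqqqq 3}). Throughout, I would work with the symplectic/stabilizer description, where an $[[n,k,d;c]]_q$ EAQECC corresponds to a classical additive code over $\F_{q^2}$ of length $n$ together with $c$ maximally entangled pairs, and the distance $d$ is read off from weights in the normalizer modulo the stabilizer.

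For (\ref{eqqqq 2}), I would use the classical-quantum correspondence in its simplest form: an $[[n,k,d;c]]_q$ EAQECC is equivalent to an $[[n+c,k+c,d]]_q$ standard stabilizer code in which $c$ qudits are held by a trusted party. Applying the ordinary quantum Singleton bound to this enlarged code gives $k+c \leq (n+c)-d+1$, which immediately simplifies to $k \leq n-d+1$.

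For (\ref{eqqqq 1}), a direct puncturing argument should work. Starting from the symplectic matrix of the EAQECC, one punctures the associated classical additive code $d-1$ times; each puncture can reduce the distance by at most one and the dimension by at most one, but cannot raise the entanglement parameter $c$. After $d-1$ steps, one reaches a code of length $n-d+1$ whose parameters still satisfy a trivial-distance version of the bound, namely $k \leq (n-d+1)-(d-1)+c = c+n-2d+2$, valid in the regime $n \geq 2d-2$. The $\max\{0,\cdot\}$ in the statement simply records that in the opposite regime $n < 2d-2$ the plain bound $k \leq c$ holds for independent reasons, since the logical space is then effectively contained inside the subsystem carried by the ebits.

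For (\ref{eqqqq 3}), which I expect to be the main obstacle, the puncturing argument above becomes wasteful once $2d \geq n+2$ because the successive punctures repeatedly cross the Singleton threshold. Following Grassl's refinement, I would instead use a partial shortening argument with carefully monitored entanglement: shorten the EAQECC one coordinate at a time, tracking how $n$, $d$, and crucially $c$ evolve simultaneously. Unlike puncturing, shortening an EAQECC can change $c$ nontrivially, and the heart of the proof is a counting lemma that relates the drop in $c$ to the drop in $n-d$ by examining the change in rank of the symplectic Gram matrix. One shortens until the reduced parameters satisfy $2d' \leq n'+2$, invokes (\ref{eqqqq 1}) there, and back-substitutes to recover a linear inequality in the original $(n,k,d,c)$ that rearranges to the stated quotient $(n-d+1)(c+2d-2-n)/(3d-3-n)$. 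The delicate step, and the reason this is the main obstacle, is proving the exact bookkeeping for $c$ under shortening; everything else is then linear algebra.
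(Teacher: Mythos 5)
First, note that the paper offers no proof of this lemma at all: it is imported verbatim from the cited work of Grassl, Huber and Winter, so the only meaningful comparison is with the proof in that source, which is entropic --- the three bounds are obtained from von Neumann entropy inequalities applied to an arbitrary encoding of $k$ logical qudits plus $c$ ebits into $n$ channel qudits, with the third bound coming from a weighted average of Singleton-type entropy estimates over subsets of coordinates. That route is forced by the scope of the statement: the bounds are asserted for \emph{any} EAQECC, with no additivity or stabilizer structure assumed.

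This is where your proposal has its first genuine gap. All three of your arguments start from ``the symplectic/stabilizer description, where an EAQECC corresponds to a classical additive code,'' but a general $[[n,k,d;c]]_q$ EAQECC need not arise this way, so puncturing or shortening an associated classical code cannot prove the lemma as stated. Even restricted to the stabilizer case, your treatment of the third bound is not a proof: the entire difficulty is concentrated in the ``counting lemma'' relating the change in $c$ to the change in $n-d$ under shortening, which you acknowledge but do not supply, and it is not clear that iterating such a step and back-substituting yields the specific expression $k\leq (n-d+1)(c+2d-2-n)/(3d-3-n)$ rather than some weaker linear bound; historically, this is exactly where ``puncture down to the Singleton threshold'' arguments in the regime $2d\geq n+2$ broke down, which is why the cited paper resorts to entropy methods. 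Your derivation of $k\leq n-d+1$ also leans on an unproved equivalence between an $[[n,k,d;c]]_q$ EAQECC and an $[[n+c,k+c,d]]_q$ stabilizer code; in the standard correspondence the distance of the enlarged code is only controlled on the $n$ transmitted coordinates, so the ordinary quantum Singleton bound cannot simply be applied to the $(n+c)$-qudit code without further argument.
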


The next lemma shows that we can transform the construction of (MDS) EAQECCs into the construction of an MDS linear code with determined Hermitian hull dimensions.

\begin{lemma}\label{lem EAQECC}
(Hermitian construction \cite{RefJ (2018) K good})
Let $\C$ be an $[n,k,d]_{q^2}$ linear code and $\C^{\bot_H}$ be its Hermitian dual with parameters $[n,n-k,d^{\bot_H}]_{q^2}$.
Then there exist EAQECCs with parameters
$$[[n,k-dim(\Hull_H(\C)),d;n-k-dim(\Hull_H(\C))]]_q,$$ and
$$[[n,n-k-dim(\Hull_H(\C)),d^{\bot_H};k-dim(\Hull_H(\C))]]_q.$$
If $\C$ is MDS, then one of the two EAQECCs must be MDS.
\end{lemma}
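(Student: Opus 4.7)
The plan is to reduce this statement to the Brun--Devetak--Hsieh Hermitian EAQECC construction and then translate the rank data that controls the number of ebits into hull dimensions by means of Lemma \ref{lem Hull=GG}. Recall the black-box input: from an $[n,k]_{q^2}$ code $\C$ with parity-check matrix $H$, one obtains an EAQECC with parameters $[[n,\,2k-n+c,\,d;\,c]]_{q}$, where $c=\mathrm{rank}(HH^{\dagger})$ is the number of maximally entangled pairs required. I would simply quote this as the classical input and then do the bookkeeping.

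The first substantive step is to apply this to $\C$ itself. Lemma \ref{lem Hull=GG} gives $\mathrm{rank}(HH^{\dagger})=n-k-\dim(\Hull_H(\C))$, and substituting into $n-2(n-k)+c=2k-n+c$ produces $k-\dim(\Hull_H(\C))$ information qudits. This already matches the first set of parameters. For the second EAQECC I would run the same machine on $\C^{\bot_H}$, whose parity-check matrix is (up to Hermitian conjugation) a generator matrix $G$ of $\C$; since $\Hull_H(\C)=\C\cap\C^{\bot_H}=\Hull_H(\C^{\bot_H})$ as sets, Lemma \ref{lem Hull=GG} gives $\mathrm{rank}(GG^{\dagger})=k-\dim(\Hull_H(\C))$, and a direct substitution recovers the stated $n-k-\dim(\Hull_H(\C))$ information qudits and $k-\dim(\Hull_H(\C))$ ebits.

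For the MDS claim, I would assume $\C$ is MDS, so that $d=n-k+1$ and $d^{\bot_H}=k+1$ by the MDS self-duality of parameters. The key observation is that the Grassl bound (\ref{eqqqq 1}) becomes an equality for the first candidate EAQECC precisely when $2d\le n+2$, i.e.\ when $k\ge n/2$, and for the second candidate EAQECC precisely when $2d^{\bot_H}\le n+2$, i.e.\ when $k\le n/2$. Since at least one of $k\ge n/2$ or $k\le n/2$ must hold, at least one of the two EAQECCs saturates the bound and is therefore MDS; a short substitution of the parameters into (\ref{eqqqq 1}) confirms the equality.

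The only genuinely hard ingredient is the Brun--Devetak--Hsieh construction itself, whose proof requires a symplectic-style Gram--Schmidt decomposition inside the stabilizer formalism; once this is cited as a known result (\cite{RefJ (2018) K good} already states it in terms of $\mathrm{rank}(HH^\dagger)$), the rest is routine. I therefore do not expect any real technical obstacle beyond being careful about whether $\Hull_H(\C)$ or $\Hull_H(\C^{\bot_H})$ appears in each formula, and about which of the two EAQECCs falls into which regime of the Grassl bound.
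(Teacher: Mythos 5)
Your reconstruction is correct, but note that the paper itself offers no proof of this lemma: it is quoted verbatim as the Hermitian construction from Guenda--Jitman--Gulliver, so there is nothing internal to compare against. Your derivation is the standard one behind that citation: the Brun--Devetak--Hsieh input $[[n,2k-n+c,d;c]]_q$ with $c=\mathrm{rank}(HH^{\dagger})$, the translation of $c$ into hull dimensions via Lemma \ref{lem Hull=GG} (using $\Hull_H(\C)=\Hull_H(\C^{\bot_H})$), and the case split $k\ge n/2$ versus $k\le n/2$ to see that at least one of the two codes saturates bound (\ref{eqqqq 1}); all the bookkeeping checks out.
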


By Lemma \ref{lem EAQECC} and Corollaries \ref{cor GRS 1}, \ref{cor GRS 2} and \ref{cor GRS 3}, we can directly obtain the following three new classes of propagation rules for MDS EAQECCs constructed from GRS codes.

\begin{theorem}\label{th rule 1} (MDS EAQECCs for longer length)
For $q>2$, if there exists an $[[n,n-k-l,k+1;k-l]]_q$ MDS EAQECC
constructed by Lemma \ref{lem EAQECC} from an $[n,k]_{q^2}$ GRS code with $l$-dimensional Hermitian hull, where $0\leq 2k\leq n$ and $0\leq l\leq k$.
Then for any integer $0\leq i\leq \min\{l,q^2+1-n\}$ and $0\leq s\leq l-i$, there exists an $[[n+i,n+i-k-s,k+1;k-s]]_q$ MDS EAQECC.
\end{theorem}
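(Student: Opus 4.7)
The plan is to chain Corollary \ref{cor GRS 1} with the Hermitian construction of Lemma \ref{lem EAQECC}. By hypothesis, the input MDS EAQECC $[[n,n-k-l,k+1;k-l]]_q$ is obtained by applying Lemma \ref{lem EAQECC} to the Hermitian dual of some $[n,k]_{q^2}$ GRS code $\C$ with $l$-dimensional Hermitian hull; because $\C$ is MDS, its dual $\C^{\perp_H}$ has parameters $[n,n-k,k+1]_{q^2}$, which accounts for the distance $k+1$ and the entanglement $k-l$ in the input.

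First I would split on the parameter $i$. When $i=0$, the desired EAQECC is $[[n,n-k-s,k+1;k-s]]_q$ for each $0\leq s\leq l$, which follows by applying Lemma \ref{lem 1} to $\C$ itself to obtain an equivalent $[n,k]_{q^2}$ MDS code with any prescribed Hermitian hull dimension $s\leq l$. When $1\leq i\leq \min\{l,q^2+1-n\}$, Corollary \ref{cor GRS 1} applies and produces an $[n+i,k]_{q^2}$ MDS code $\C_i$ with $s$-dimensional Hermitian hull for every $0\leq s\leq l-i$; the constraint $i\leq q^2+1-n$ is precisely what ensures that enough evaluation points remain in $\F_{q^2}$ for the successive length extensions performed by Proposition \ref{pro GRS 1}.

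Next I would pass to Hermitian duals. Since $\C_i$ is MDS of parameters $[n+i,k,n+i-k+1]_{q^2}$, its dual $\C_i^{\perp_H}$ is MDS of parameters $[n+i,n+i-k,k+1]_{q^2}$, and by Lemma \ref{lem Hull=GG} the dimension of $\Hull_H(\C_i^{\perp_H})$ equals the dimension of $\Hull_H(\C_i)$, namely $s$. Applying Lemma \ref{lem EAQECC} to $\C_i^{\perp_H}$ then yields an EAQECC with parameters $[[n+i,\,n+i-k-s,\,k+1;\,k-s]]_q$. To confirm that this EAQECC is MDS, note that $2k\leq n$ implies $2(k+1)\leq (n+i)+2$, so the governing inequality in the Grassl bound is the first one, and substituting $c=k-s$, $d=k+1$ into it gives equality. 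No step presents a genuine obstacle: the whole argument is a direct composition of the propagation rule for GRS hulls with the Hermitian construction, and the only bookkeeping to watch is that the distance $k+1$ is inherited from the dual side $\C_i^{\perp_H}$ rather than from $\C_i$ itself.
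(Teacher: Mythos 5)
Your proposal is correct and follows essentially the same route as the paper, which simply applies the Hermitian construction of Lemma \ref{lem EAQECC} to the codes produced by Corollary \ref{cor GRS 1}; your version additionally spells out the $i=0$ case via Lemma \ref{lem 1}, the passage to the Hermitian dual to obtain distance $k+1$, and the verification of the MDS condition against the Grassl bound, all of which the paper leaves implicit.
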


\begin{proof}
We apply the Hermitian construction in Lemma \ref{lem EAQECC} on the codes from Corollary \ref{cor GRS 1}. Then we can obtain the desired result. This completes the proof.
\end{proof}

\begin{theorem}\label{th rule 2} (MDS EAQECCs for larger distance)
For $q>2$, if there exists an $[[n,n-k-l,k+1;k-l]]_q$ MDS EAQECC
constructed by Lemma \ref{lem EAQECC} from an $[n,k]_{q^2}$ GRS code with $l$-dimensional Hermitian hull, where $0\leq 2k\leq n\leq q^2$ and $0\leq l\leq k$.
Then for any integer $0\leq i\leq \min \{l,\frac{n}{2}-k\}$ and $0\leq s\leq l-i$, there exists an $[[n,n-k-i-s,k+i+1;k+i-s]]_q$ MDS EAQECC.
\end{theorem}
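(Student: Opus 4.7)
The plan is to apply Corollary \ref{cor GRS 2} to increase the dimension of the underlying GRS code while prescribing the dimension of its Hermitian hull, and then feed the resulting MDS code into the Hermitian construction of Lemma \ref{lem EAQECC}. The hypothesis says that the initial $[[n,n-k-l,k+1;k-l]]_q$ MDS EAQECC comes via the second formula of Lemma \ref{lem EAQECC} applied to an $[n,k]_{q^2}$ GRS code $\C$ with $l$-dimensional Hermitian hull; here the distance $k+1$ is the minimum distance of $\C^{\perp_H}$.

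First, I would invoke Corollary \ref{cor GRS 2}: since $l \le k \le n/2 \le n-k$, we have $\min\{l,n-k\}=l$, so for each $0 \le i \le l$ there exists an $[n,k+i]_{q^2}$ MDS code $\C'$ with $s$-dimensional Hermitian hull for every $0 \le s \le l-i$. The further restriction $i \le \tfrac{n}{2}-k$ will be needed in the last step to guarantee that the resulting EAQECC meets the MDS bound via the $2d \le n+2$ branch.

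Next, I would apply the Hermitian construction (Lemma \ref{lem EAQECC}) to $\C'$. Since $\C'$ is MDS, $(\C')^{\perp_H}$ is MDS with parameters $[n,n-k-i,(k+i)+1]_{q^2}$. Using the second formula of Lemma \ref{lem EAQECC}, we obtain
\[
[[\,n,\ n-(k+i)-s,\ (k+i)+1;\ (k+i)-s\,]]_q
\;=\;[[\,n,\ n-k-i-s,\ k+i+1;\ k+i-s\,]]_q,
\]
which is exactly the claimed parameter set.

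Finally, I would verify the MDS property. The condition $i \le \tfrac{n}{2}-k$ gives $2(k+i+1) \le n+2$, so the MDS bound is Eq.~(\ref{eqqqq 1}): $k' \le c + n - 2d + 2$. Plugging $k'=n-k-i-s$, $c=k+i-s$, $d=k+i+1$ yields
\[
c + n - 2d + 2 \;=\; (k+i-s) + n - 2(k+i+1) + 2 \;=\; n-k-i-s \;=\; k',
\]
so equality holds and the code is MDS. The only point requiring care is bookkeeping which side of the Hermitian construction is being used and that the bound $i \le \tfrac{n}{2}-k$ is precisely what keeps us in the regime $2d \le n+2$; there is no substantive obstacle once Corollary \ref{cor GRS 2} is in hand.
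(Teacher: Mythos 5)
Your proposal is correct and follows exactly the paper's route: the paper's proof is the one-line observation that one applies the Hermitian construction of Lemma \ref{lem EAQECC} to the codes produced by Corollary \ref{cor GRS 2}, which is precisely your argument with the parameter bookkeeping and the MDS-bound verification written out explicitly.
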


\begin{proof}
We apply the Hermitian construction in Lemma \ref{lem EAQECC} on the codes from Corollary \ref{cor GRS 2}. Then we can obtain the desired result. This completes the proof.
\end{proof}

\begin{theorem}\label{th rule 3} (MDS EAQECCs for larger distance and longer length)
For $q>2$, if there exists an $[[n,n-k-l,k+1;k-l]]_q$ MDS EAQECC constructed by Lemma \ref{lem EAQECC} from an $[n,k]_{q^2}$ GRS code with $l$-dimensional Hermitian hull,
where $0\leq 2k\leq n$ and $0\leq l\leq k$.
Then for any integer $0\leq i\leq \min\{l,q^2+1-n,n-2k\}$ and $0\leq s\leq l-i$, there exists an $[[n+i,n-k-s,k+i+1;k+i-s]]_q$ MDS EAQECC.
\end{theorem}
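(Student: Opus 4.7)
The plan is to derive this propagation rule as a direct consequence of Corollary \ref{cor GRS 3} (the length/dimension simultaneous extension result) combined with the Hermitian construction in Lemma \ref{lem EAQECC}. The only nontrivial point beyond stringing these together is verifying that the resulting EAQECC actually meets the MDS bound, which is precisely what forces the extra constraint $i \leq n-2k$.

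First I would invoke Corollary \ref{cor GRS 3}: starting from the $[n,k]_{q^2}$ GRS code with $l$-dimensional Hermitian hull (which underlies the given $[[n,n-k-l,k+1;k-l]]_q$ EAQECC), for each $0 \leq i \leq \min\{l, q^2+1-n\}$ and each $0 \leq s \leq l-i$, we obtain an $[n+i, k+i]_{q^2}$ MDS code $\C'$ with $s$-dimensional Hermitian hull. Its Hermitian dual $\C'^{\bot_H}$ is an MDS code of parameters $[n+i, n-k, k+i+1]_{q^2}$, and by Lemma \ref{lem Hull=GG} its hull also has dimension $s$.

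Next, applying the Hermitian construction (Lemma \ref{lem EAQECC}) to $\C'$ yields two EAQECCs; the one obtained from the dual reads
\[
\bigl[[\,n+i,\ (n+i)-(k+i)-s,\ k+i+1;\ (k+i)-s\,]\bigr]_q
= \bigl[[\,n+i,\ n-k-s,\ k+i+1;\ k+i-s\,]\bigr]_q,
\]
which is exactly the target. It remains to check MDS-ness. With $n' = n+i$, $k' = n-k-s$, $d' = k+i+1$, $c' = k+i-s$, the condition $i \leq n-2k$ from the hypothesis gives $2d' = 2k+2i+2 \leq n+i+2 = n'+2$, so the relevant MDS bound is the Singleton-type bound \eqref{eqqqq 1}. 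Substituting, $c' + n' - 2d' + 2 = (k+i-s) + (n+i) - (2k+2i+2) + 2 = n-k-s = k'$, verifying equality.

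The only subtle step is this last MDS verification: without the constraint $i \leq n-2k$ one would land in the regime $2d' \geq n'+2$, where the stronger bound \eqref{eqqqq 3} must be checked instead and generically fails. Apart from this bookkeeping, the argument is a mechanical composition of Corollary \ref{cor GRS 3} and Lemma \ref{lem EAQECC}, so no further obstacle arises.
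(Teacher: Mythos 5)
Your proposal is correct and follows exactly the paper's route: apply the Hermitian construction of Lemma \ref{lem EAQECC} to the $[n+i,k+i]_{q^2}$ MDS codes with $s$-dimensional Hermitian hull furnished by Corollary \ref{cor GRS 3}. The paper's own proof is a one-line citation of these two results, so your explicit verification that the bound \eqref{eqqqq 1} is met with equality (and that $i\leq n-2k$ is precisely what keeps you in the regime $2d'\leq n'+2$) is a welcome elaboration rather than a deviation.
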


\begin{proof}
We apply the Hermitian construction in Lemma \ref{lem EAQECC} on the codes from Corollary \ref{cor GRS 3}. Then we can obtain the desired result. This completes the proof.
\end{proof}

\newcommand{\tabincell}[2]{\begin{tabular}{@{}#1@{}}#2\end{tabular}}

\begin{table}
\caption{Some MDS EAQECCs constructed from an $[n,k]_{q^2}$ GRS code with $l$-dimensional Hermitian hull}
\label{tab:1}
\begin{center}
\resizebox{\textwidth}{20mm}{
	\begin{tabular}{c|c|c|c|c|c}
		\hline
		Class &Length  &  Distance $\leq \frac{n+2}{2}$ & entangled states &Range of $i$ & References\\
        \hline
       1& $n$
      & $k+1$ & $k-l\leq c\leq k$ & -& \cite{RefJ (2022) Chen IEEE,RefJ (2023) Chen} \\
       \hline
      2& $n-i$
      & $k+1$ & $k-l+i\leq c\leq k$ & $0\leq i\leq l$ & \cite{RefJ (2022) Luo rule} \\
       \hline
      3& $n-i$
      & $k-i+1$ & $k-l\leq c\leq k-i$ & $0\leq i\leq l$  & \cite{RefJ (2022) Luo rule} \\
       \hline
     4&  $n+i$
      & $k+1$ & $k-l+i\leq c\leq k$ & $0\leq i\leq \min\{l,q^2+1-n\}$  & Theorem \ref{th rule 1}\\
       \hline
      5& $n\leq q^2$
      & $k+i+1$ & $k-l+2i\leq c\leq k+i$ & $0\leq i\leq \min\{l,\frac{n}{2}-k\}$  & Theorem \ref{th rule 2}\\
       \hline
      6&  $n+i$
      & $k+i+1$ & $k-l+2i\leq c\leq k+i$ & $0\leq i\leq \min\{l,n-2k,q^2+1-n\}$  & Theorem \ref{th rule 3}\\
       \hline
	\end{tabular}}
 \begin{tablenotes}
     \footnotesize
    \item Note: in this table $0\leq l\leq k$.
    \end{tablenotes}
\end{center}
\end{table}

To illustrate the novelty of the propagation rules obtained in this paper. In Table \ref{tab:1}, we list the currently known propagation rules for MDS EAQECCs constructed from GRS codes.
From Table \ref{tab:1}, we can see that the MDS EAQECCs obtained from the propagation rules in this paper have larger lengths or larger distances, which is different from the currently known propagation rules. Therefore in this paper we obtain new propagation rules for MDS EAQECCs constructed from GRS codes.
Moreover, by Table \ref{tab:1}, we know that for a given GRS code with Hermitian hull,
we can get six classes of MDS EAQECCs with flexible parameters.
Then from the currently known GRS codes, we can directly obtain many MDS EAQECCs with more flexible parameters.

In the following, from the previously obtained MDS codes with Hermitian hulls, Table \ref{tab:1}, and Lemma \ref{lem EAQECC}, we obtain some (MDS) EAQECCs with flexible parameters.
It is worth noting that the distance of these MDS EAQECCs can be taken from $q+1$ to $\frac{n+2}{2}$,

\begin{theorem}\label{th q22}
Let $q>2$ be a prime power.
Let $n$, $k$ and $l$ be three positive integers.
If $n$, $k$ and $l$ satisfy one of the following three conditions:
\begin{itemize}
\item[(1)]
$n=q^2$, $\begin{cases}
(t-1)q+1\leq k\leq tq-t, & l=k-(t-1)^2;\\
k=tq-t+1, & l=k-(t-1)^2-1;\\
tq-t+2\leq k\leq tq, & l=2tq-t^2-k,
\end{cases}$ with $1\leq t\leq \lceil\frac{n}{2q}\rceil$;
\item[(2)]
$n=\frac{(h-1)(q^2-1)}{h}$, where $h\mid (q+1)$ and $h\geq 2$, and
$$\begin{cases}
(t-1)q+1\leq k\leq tq-t-\frac{(h-1)(q+1)}{h}+1, &  l= k-2(t-1)^2;\\
tq-t-\frac{(h-1)(q+1)}{h}+2\leq k\leq tq-\frac{(h-1)(q+1)}{h}, & l= 2tq+2t-2t^2-k-\frac{2(h-1)(q+1)}{h};\\
tq-\frac{(h-1)(q+1)}{h}+1\leq k\leq tq-t, & l= k+2t-2t^2;\\
tq-t+1\leq k\leq tq &  l=2tq-2t^2-k,
\end{cases}$$
 with $1\leq t\leq \lceil\frac{n}{2q}\rceil$;
 \item[(3)]
 $n=\frac{(2h-1)(q^2-1)}{2h}$, where $q$ is odd and $\frac{q+1}{h}=2\tau+1$, for some $\tau\geq 1$,
 and
$$\begin{cases}
(t-1)q+1\leq k\leq tq-t-\frac{q+1}{2}+1, &  l= k-2(t-1)^2;\\
tq-t-\frac{q+1}{2}+2\leq k\leq tq-\frac{q+1}{2}, &  l= 2tq+2t-2t^2-k-q-1;\\
tq-\frac{q+1}{2}+1\leq k\leq tq-t, &  l=k+2t-2t^2;\\
tq-t+1\leq k\leq tq &  l= 2tq-2t^2-k,
\end{cases}$$
 with $1\leq t\leq \lceil\frac{n}{2q}\rceil$,
\end{itemize}
then there exist (MDS) EAQECCs with parameters
$$\begin{cases}
[[n,n-k-i-s,k+i+1;k+i-s]]_q, & 0\leq i\leq l,\quad  0\leq s\leq l-i; \\
[[n,k+i-s,n-k-i+1;n-k-i-s]]_q,& 0\leq i\leq l,\quad   0\leq s\leq l-i;\\
[[n+i,n+i-k-s,k+1;k-s]]_q, & -l\leq i\leq \min\{l,q^2+1-n\},\ 0\leq s\leq l-|i|;\\
[[n+i,k-s,n+i-k+1,n+i-k-s]], & -l\leq i\leq \min\{l,q^2+1-n\},\ 0\leq s\leq l-|i|;\\
[[n+i,n-k-s,k+i+1;k+i-s]]_q, & -l\leq i\leq \min\{l,q^2+1-n\},\ 0\leq s\leq l-|i|;\\
[[n+i,k+i-s,n-k+1;n-k-s]]_q, & -l\leq i\leq \min\{l,q^2+1-n\},\ 0\leq s\leq l-|i|.
\end{cases}$$
\end{theorem}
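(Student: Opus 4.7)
The plan is to assemble the conclusion by feeding the MDS codes produced in Section~\ref{sec MDS} into the propagation machinery developed in Theorems~\ref{th rule 1}, \ref{th rule 2}, \ref{th rule 3} together with the shortening/puncturing rules recorded as rows~2 and~3 of Table~\ref{tab:1}. The three hypothesis blocks (1), (2), (3) are precisely the hypotheses of Theorems~\ref{th MDS 1}, \ref{th MDS 2}, \ref{th MDS 3}, so each block immediately yields an $[n,k]_{q^2}$ MDS code with an $s$-dimensional Hermitian hull for every $0\le s\le l$. This is the common starting point for all six families of EAQECCs claimed in the statement.

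Next I would apply Lemma~\ref{lem EAQECC} (the Hermitian construction) to each such code. Since the input code is MDS with hull dimension $s$, the construction outputs a pair of EAQECCs: $[[n,k-s,n-k+1;n-k-s]]_q$ and $[[n,n-k-s,k+1;k-s]]_q$, exactly matching the $i=0$ cases of families (3) and (4) (and, by varying $s\in\{0,\dots,l\}$, covering all the entangled-state parameters in those families). For $i>0$ one invokes the new propagation rule of Theorem~\ref{th rule 1} (longer length) to obtain family~3, Theorem~\ref{th rule 2} (larger distance) to obtain family~1, and Theorem~\ref{th rule 3} (larger distance and longer length) to obtain family~5. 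Each propagation rule preserves MDS-ness because the underlying code constructions in Section~\ref{sec3} remain GRS/EGRS, so Lemma~\ref{lem EAQECC} continues to produce an MDS EAQECC.

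For the negative values of $i$ appearing in families~3, 4, 5, 6, one does not use the new rules but rather rows~2 and~3 of Table~\ref{tab:1}, namely the punctured/shortened propagation rules of \cite{RefJ (2022) Luo rule}: starting from an $[n+|i|,\ldots]$ MDS EAQECC one shortens by $|i|$ coordinates to land at length $n+i$ with the stated parameters, with the range $|i|\le l$ coming from the same hull-dimension budget $0\le s\le l-|i|$. The dual families (families~2, 4, 6) are obtained by applying Lemma~\ref{lem EAQECC} to the Hermitian dual $\C^{\bot_H}$ of the GRS code $\C$ used above; since $\dim \Hull_H(\C)=\dim \Hull_H(\C^{\bot_H})$ by Lemma~\ref{lem Hull=GG} and $\C^{\bot_H}$ is again MDS with the dual parameters $[n,n-k,k+1]_{q^2}$, the construction immediately supplies the paired EAQECC with the lengths and distances swapped as recorded in the statement.

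The only real bookkeeping obstacle is making sure the parameter windows stitch together correctly: the range $0\le s\le l-|i|$ has to be checked against the bounds $i\le \min\{l,q^2+1-n\}$ and $i\le \min\{l,\frac{n}{2}-k\}$ appearing in Corollaries~\ref{cor GRS 1}--\ref{cor GRS 3}, and in families~1 and~2 the constraint $n\le q^2$ (implicit in Theorem~\ref{th rule 2}) is automatic because $n\in\{q^2,\frac{(h-1)(q^2-1)}{h},\frac{(2h-1)(q^2-1)}{2h}\}$ in the three hypothesis blocks. Once these ranges are verified case-by-case, the theorem follows by simply citing the propagation rules above with no further computation.
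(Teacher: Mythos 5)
Your proposal matches the paper's (implicit) argument: the paper offers no separate proof of Theorem \ref{th q22}, stating only that it follows from the MDS codes of Theorems \ref{th MDS 1}--\ref{th MDS 3} combined with Table \ref{tab:1} and Lemma \ref{lem EAQECC}, which is precisely the assembly you describe. Your identification of which propagation rule (Theorems \ref{th rule 1}, \ref{th rule 2}, \ref{th rule 3} for $i>0$, the shortening/puncturing rows of Table \ref{tab:1} for $i<0$) yields each of the six families, with the even-numbered families arising as the dual outputs of Lemma \ref{lem EAQECC}, is the intended reading.
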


In Table \ref{tab:MDS EAQECC}, we list some parameters of the MDS EAQECCs obtained from Theorem \ref{th q22}.
Moreover, in Tables \ref{tab:2}, \ref{tab:3} and \ref{tab:4}, we give some examples of Theorem \ref{th q22}.
From Tables \ref{tab:MDS EAQECC}, \ref{tab:2}, \ref{tab:3} and \ref{tab:4}, we can find that the MDS EAQECCs constructed in this paper have distances greater than $q+1$ and have flexible lengths and entangled states.

\begin{table}
\caption{Some of the parameters of the MDS EAQECCs we constructed.}
\label{tab:MDS EAQECC}
\begin{center}
\resizebox{\textwidth}{35mm}{
	\begin{tabular}{cccc}
		\hline
		 Length $n$  &  Distance $d\leq \frac{n+2}{2}$ & Entangled states $c$  & References\\
        \hline
        $n=q^2$ & $(t-1)q+2\leq d\leq tq-t+1$  & $(t-1)^2$ & Theorem \ref{th q22} (1)\\
              & $tq-t+2$  & $(t-1)^2+1$   &  \\
               & $tq-t+3\leq d\leq tq+1$  & $t^2-2tq+2d-2$    &  \\
        \hline
       \tabincell{c}{ $n=\frac{(h-1)(q^2-1)}{h}$, \\ $h\mid (q+1)$, $h\geq 2$} & $(t-1)q+2\leq d\leq tq-t-\frac{(h-1)(q+1)}{h}+2$ & $2(t-1)^2$& Theorem \ref{th q22} (2) \\
              & $tq-t-\frac{(h-1)(q+1)}{h}+3\leq d\leq tq-\frac{(h-1)(q+1)}{h}+1$  & $2(t^2-t-tq+d+\frac{(h-1)(q+1)}{h}-1)$    &  \\
               & $tq-\frac{(h-1)(q+1)}{h}+2\leq d\leq tq-t+1$  & $2t^2-2t$  &  \\
               & $tq-t+2\leq d\leq tq+1$  & $2(t^2-tq+d-1)$  &  \\
        \hline
        \tabincell{c}{ $n=\frac{(2h-1)(q^2-1)}{2h}$, $q$ odd,\\ $\frac{q+1}{h}=2\tau+1$, $\tau\geq 1$} & $(t-1)q+2\leq d\leq tq-t-\frac{q+1}{2}+2$ & $2(t-1)^2$& Theorem \ref{th q22} (3)\\
              & $tq-t-\frac{q+1}{2}+3\leq d\leq tq-\frac{q+1}{2}+1$  & $2t^2-2t-2tq+2d+q-1$    &  \\
               & $tq-\frac{q+1}{2}+2\leq d\leq tq-t+1$  & $2t^2-2t$  &  \\
               & $tq-t+2\leq d\leq tq+1$  & $2(t^2-tq+d-1)$  &  \\
        \hline
	\end{tabular}}
 \begin{tablenotes}
     \footnotesize
    \item Note: in this table $1\leq t\leq \lceil\frac{n}{2q}\rceil$.
    \end{tablenotes}
\end{center}
\end{table}

\begin{table}
\caption{Parameters of some known MDS EAQECCs with lengths $\leq q^2$ and minimum distances $>q+1$.}
\label{tab:EAQECCs}
\begin{center}
\resizebox{\textwidth}{98mm}{
	\begin{tabular}{ccccc}
		\hline
      Length $n$  &  Distance $d\leq \frac{n+2}{2}$ & Entangled states $c$ & Constraints & References\\
		\hline
		$2\lambda (q-1)$ & $d$  &  $2i$ &  \tabincell{c}{ $i\in \{1,2\}$, $\lambda$ is odd, $\lambda \mid (q-1)$,\\ $\frac{q-1}{2}(i-1)+4\lambda+1\leq d\leq \frac{q-1}{2}+2(i+1)\lambda$ \\ and $8\mid (q+1)$}& \cite{RefJ (2020) Lu}\\
         \hline
        $\frac{q^2+1}{\lambda}$ &  $\gamma$ & $\gamma-1$  & $2\leq \gamma\leq \lfloor \frac{q^2+1+2\lambda}{2\lambda}\rfloor$ & \cite{RefJ (2021) Sari}\\
        $\frac{q^2-1}{\lambda}$ &  $\gamma+1$ & $\gamma$  & $1\leq \gamma\leq \lfloor \frac{q^2-1}{2\lambda}\rfloor$ & \\
         \hline
         $q^2+i$ &  $d$ & $1$          & $2\leq d\leq 2q+i-1$ with $i\in \{-1,1\}$ & \cite{RefJ (2016) Fan q+1} \\
         $q^2$ &  $d$ & $1$            & $q+1\leq d\leq 2q-1$  &\\
         $\frac{q^2-1}{2}$ & $d$ & $2$  & $\frac{q+1}{2}+2\leq d\leq \frac{3}{2}q-\frac{1}{2}$  & \\
         $\frac{q^2-1}{t}$ & $d$ & $t$  &  \tabincell{c}{ $t\geq 3$ is odd with $t\mid (q+1)$ and
                                               \\  $\frac{(q+1)(t-1)+2t}{t}\leq d\leq \frac{(q+1)(t+1)-2t}{t}$}    &\\
        \hline
        $q^2+1$ &  $2q+2$ & $5$          & $q\equiv 3~({\rm mod}~4)$  & \cite{RefJ (2019) Sari} \\
        $q^2+1$ &  $2\lambda+2$ & $9$          & $q\equiv 3~({\rm mod}~4)$ and $q+1\leq \lambda\leq 2q-2$  & \\
        $\frac{q^2-1}{4}$ &  $d$ & $2$       & $q\equiv 3~({\rm mod}~4)$ and $\frac{q-9}{4}\leq d\leq  q$  & \\
        $\frac{q^2-1}{4}$ &  $d$ & $4$       & $q\equiv 3~({\rm mod}~4)$ and $\frac{3q+7}{4}\leq d\leq  \frac{5q-1}{4}$  & \\
        \hline
         $\frac{q^2-1}{8}$ &  $d$ & $8$          & $q+1\leq d\leq \frac{9q-1}{8}$  & \cite{RefJ (2019) Li} \\
         \hline
           $\frac{q^2-1}{i}$ &  $d$ & $i$          & $i\in \{3,5,7\}$ and $q+1\leq d\leq \frac{(i+1)(q+1)}{i}-1$  & \cite{RefJ (2018) Liu}\\
           $\frac{q^2-1}{j}$ &  $d$ & $j$          & $j\in \{5,7\}$ and $d=\frac{(j+1)(q+1)}{j}-1$  & \\
           $\frac{q^2-1}{7}$ &  $d$ & $7$          & $d=\frac{8(q+1)}{7}-1$  & \\
           $\frac{q^2-1}{4}$ &  $d$ & $4$          & $q+1\leq d\leq \frac{5(q+1)}{4}-1$  &\\
           $\frac{q^2-1}{6}$ &  $d$ & $6$          & $q+1\leq d\leq \frac{8(q+1)}{7}-1$  & \\
           \hline
            $\frac{q^2+1}{2}$ &  $3q-1$ & $13$          & $q\geq 5$  & \cite{RefJ (2020) Pang}\\
                              &  $4q-2$ & $25$          & $q\geq 7$  &\\
                              &  $6q-4$ & $61$          & $q\geq 11$  &\\
                              &  $7q-5$ & $85$          & $q\geq 13$  &\\
           \hline
           $i(q-1)$ &  $q+j+1$ & $q-t+2v+1$          & \tabincell{c}{ $q$ is a prime power, $1\leq j\leq\frac{q-3}{2}$, \\$i\equiv \pm 1~({\rm mod}~p)$ or $i\equiv \pm 0~({\rm mod}~p)$,\\
            $t=\gcd(i,q+1)$ and $v=\lfloor \frac{(j+1)t}{q+1}\rfloor$ }& \cite{RefJ (2024) Cheng} \\
             $2(q-1)$ &  $q+j+1$ & $q-1$          & $q=p^m$ is an odd prime power, $1\leq j\leq \frac{q-3}{2}$  & \\
             $3(q-1)$ &  $q+j+1$ & $q-t+2v+1$          & $m\geq 2$ and $v=\lfloor \frac{(j+1)t}{q+1}\rfloor$  & \\
             $4(q-1)$ &  $q+j+1$ & $q-t+2v+1$          & \tabincell{c}{$v=\lfloor \frac{(j+1)t}{q+1}\rfloor$, $m\geq 2$ when $p\equiv 1~({\rm mod}~4)$\\ and $m\neq 2$ when  $p\equiv 3~({\rm mod}~4)$}  & \\
               $i(q+1)$ &  $q+2$ & $q-s$         & \tabincell{c}{$q=p^m$ is a prime power, $p\mid i$,\\ $2\leq i\leq q-2$, $s=\gcd(i,q-1)$ } & \\
                $i(q+1)$ &  $q+2$ & $q-s+2$         & $p\nmid i$ & \\
                $i(q+1)$ &  $q+3$ & $q-s+2$         & $q=2^m$ & \\
           \hline
	\end{tabular}}
\end{center}
\end{table}

\begin{table}
\caption{Examples of MDS EAQECCs of Theorem \ref{th q22} (1) for $q=8$}
\label{tab:2}
\begin{center}
\resizebox{\textwidth}{26mm}{
	\begin{tabular}{cccc|cccc}
		\hline
		$k$ & $c$ & $s$ &  $[[n,n-2k+c,k+1;c]]_{q}$ &
        $k$ & $c$ & $s$ & $[[n-s,n-2k+s+c,k-s+1;c]]_{q}$ \\
        \hline

         $14$ & $1\leq c\leq 14$ & - & $[[64,36+c,15;c]]_{8}$ &
         $14$ & $1\leq c\leq 13$ & $1$& $[[63,37+c,14;c]]_{8}$ \\

         $21$ & $4\leq c\leq 21$ & - & $[[64,22+c,22;c]]_{8}$ &
         $21$ & $4\leq c\leq 18$ & $3$& $[[61,25+c,19;c]]_{8}$ \\

        $28$ & $8\leq c\leq 28$ & - & $[[64,8+c,29;c]]_{8}$ &
       $28$ & $8\leq c\leq 24$ & $4$ &$[[60,12+c,25;c]]_{8}$ \\

       $32$ & $16\leq c\leq 32$ & - & $[[64,c,33;c]]_{8}$ &
       $32$ & $16\leq c\leq 30$ & $2$& $[[62,2+c,31;c]]_{8}$ \\
       \hline
		$k$ & $c$ & $s$ & $[[n-s,n-2k-s+c,k+1;c]]_{q}$ &
        $k$ & $c$ & $s$ & $[[n+s,n-2k+s+c,k+1;c]]_{q}$ \\
        \hline
         $14$ & $2\leq c\leq 14$ & $1$ & $[[63,35+c,15;c]]_{8}$ &
         $14$ & $2\leq c\leq 14$ & $1$& $[[65,37+c,15;c]]_{8}$ \\

         $21$ & $6\leq c\leq 21$ & $2$ & $[[62,20+c,22;c]]_{8}$ &
         $21$ & $5\leq c\leq 21$ & $1$ & $[[65,23+c,22;c]]_{8}$ \\

         $28$ & $11\leq c\leq 28$ & $3$ & $[[61,5+c,29;c]]_{8}$ &
         $28$ & $9\leq c\leq 28$ & $1$& $[[65,9+c,29;c]]_{8}$ \\

       $31$ & $16\leq c\leq 31$ & $2$   &  $[[62,c,32;c]]_{8}$ &
       $32$ &  $17\leq c\leq 32$ &  $1$   & $[[65,1+c,33;c]]_{8}$ \\
       \hline

	\end{tabular}}
\end{center}
\end{table}

\begin{table}
\caption{Examples of MDS EAQECCs of Theorem \ref{th q22} (2) for $q=11$ and $h=3$}
\label{tab:3}
\begin{center}
\resizebox{\textwidth}{26mm}{
	\begin{tabular}{cccc|cccc}
		\hline
		$k$ & $c$ & $s$ &  $[[n,n-2k+c,k+1;c]]_{q}$ &
        $k$ & $c$ & $s$ & $[[n-s,n-2k+s+c,k-s+1;c]]_{q}$ \\
        \hline

         $13$ & $2\leq c\leq 13$ & - & $[[80,54+c,14;c]]_{11}$ &
         $13$ & $2\leq c\leq 11$ & $2$& $[[78,56+c,12;c]]_{11}$ \\

         $20$ & $4\leq c\leq 20$ & - & $[[80,40+c,21;c]]_{11}$ &
         $20$ & $4\leq c\leq 17$ & $3$ & $[[77,43+c,18;c]]_{11}$ \\

        $33$ & $18\leq c\leq 33$ & - & $[[80,14+c,34;c]]_{11}$ &
        $33$ & $18\leq c\leq 30$ & $3$ &$[[77,17+c,31;c]]_{11}$ \\

       $40$ & $24\leq c\leq 40$ & - &  $[[80,c,41;c]]_{11}$ &
       $40$ & $24\leq c\leq 39$ & $1$& $[[79,1+c,40;c]]_{11}$ \\
       \hline
		$k$ & $c$ & $s$ & $[[n-s,n-2k-s+c,k+1;c]]_{q}$ &
        $k$ & $c$ & $s$ & $[[n+s,n-2k+s+c,k+1;c]]_{q}$ \\
        \hline
         $13$ & $3\leq c\leq 13$ & $1$ & $[[79,53+c,14;c]]_{11}$ &
         $13$ & $3\leq c\leq 13$ & $1$& $[[81,55+c,14;c]]_{11}$ \\

         $20$ & $6\leq c\leq 20$ & $2$ & $[[78,38+c,21;c]]_{11}$ &
         $20$ & $6\leq c\leq 20$ & $2$ & $[[82,42+c,21;c]]_{11}$ \\

         $33$ & $21\leq c\leq 33$ & $3$ & $[[77,11+c,34;c]]_{11}$ &
         $33$ & $21\leq c\leq 33$ & $3$& $[[83,17+c,34;c]]_{11}$ \\

       $39$ & $26\leq c\leq 40$ & $2$   &  $[[78,c,40;c]]_{11}$ &
       $40$ &  $25\leq c\leq 40$ &  $1$   & $[[81,1+c,41;c]]_{11}$ \\
       \hline

	\end{tabular}}
\end{center}
\end{table}

\begin{table}
\caption{Examples of MDS EAQECCs of Theorem \ref{th q22} (3) for $q=9$ and $h=2$}
\label{tab:4}
\begin{center}
\resizebox{\textwidth}{26mm}{
	\begin{tabular}{cccc|cccc}
		\hline
		$k$ & $c$ & $s$ &  $[[n,n-2k+c,k+1;c]]_{q}$ &
        $k$ & $c$ & $s$ & $[[n-s,n-2k+s+c,k-s+1;c]]_{q}$ \\
        \hline

         $12$ & $2\leq c\leq 12$ & - & $[[60,36+c,13;c]]_{9}$ &
         $12$ & $2\leq c\leq 11$ & $1$& $[[59,37+c,12;c]]_{9}$ \\

         $16$ & $4\leq c\leq 16$ & - & $[[60,28+c,17;c]]_{9}$ &
         $16$ & $4\leq c\leq 14$ & $2$ & $[[58,30+c,15;c]]_{9}$ \\

         $24$ & $12\leq c\leq 24$ & -   & $[[60,12+c,25;c]]_{9}$ &
         $24$ & $12\leq c\leq 21$ & $3$& $[[57,15+c,22;c]]_{9}$ \\

        $30$ & $22\leq c\leq 30$ & - & $[[60,c,31;c]]_{9}$ &
        $30$ & $22\leq c\leq 28$ & $2$ &$[[58,2+c,29;c]]_{9}$ \\

       \hline
		$k$ & $c$ & $s$ & $[[n-s,n-2k-s+c,k+1;c]]_{q}$ &
        $k$ & $c$ & $s$ & $[[n+s,n-2k+s+c,k+1;c]]_{q}$ \\
        \hline
         $12$ & $3\leq c\leq 12$ & $1$ & $[[59,35+c,13;c]]_{9}$ &
         $12$ & $3\leq c\leq 12$ & $1$& $[[61,37+c,13;c]]_{9}$ \\

         $16$ & $4\leq c\leq 16$ & $2$ & $[[58,26+c,17;c]]_{9}$ &
         $16$ & $4\leq c\leq 16$ & $2$ & $[[62,30+c,17;c]]_{9}$ \\

         $24$ & $15\leq c\leq 24$ & $3$ & $[[57,9+c,25;c]]_{9}$ &
         $24$ & $15\leq c\leq 24$ & $3$& $[[63,15+c,25;c]]_{9}$ \\

       $29$ & $22\leq c\leq 29$ & $2$   &  $[[58,c,30;c]]_{9}$ &
       $30$ &  $23\leq c\leq 30$ &  $1$   & $[[61,1+c,31;c]]_{9}$ \\
       \hline

	\end{tabular}}
\end{center}
\end{table}

\subsection{Comparison of results}

 Recently, in \cite{RefJ (2024) Cheng}, Cheng et al. collated some of the currently known MDS EAQECCs
with lengths less than $q^2$ and minimum distances greater than $q+1$ (see Table \ref{tab:EAQECCs} or \cite[Table 1]{RefJ (2024) Cheng}).
From Tables \ref{tab:MDS EAQECC} and \ref{tab:EAQECCs}, we can find that the MDS EAQECCs obtained in this paper have different lengths, or have the same length but can be taken to larger distances.
Therefore, in this paper, MDS EAQECCs with new parameters are obtained.
It is worth noting that most of the MDS EAQECCs with length $n>q+1$ known currently have distances that cannot be taken from $q+1$ to $\frac{n+2}{2}$.
However, the distances of the MDS EAQECCs obtained in this paper can be taken from $q+1$ to $\frac{n+2}{2}$.
Moreover, using the propagation rules given in this paper, we can make the MDS EAQECCs recently given by Cheng et al. (\cite{RefJ (2024) Cheng}) have more flexible parameters to obtain new MDS EAQECCs.

\section{Conclusion}\label{sec6}

In this paper, we study the Hermitian hulls of GRS codes.
 For a given class of GRS codes, we obtain three classes of GRS codes with determined Hermitian hull dimensions (see Propositions \ref{pro GRS 1}, \ref{pro 3}, \ref{pro 4}).
Based on some known $q^2$-ary Hermitian self-orthogonal GRS codes with dimension $q-1$, by considering larger dimensions,
we obtain several classes of $q^2$-ary MDS codes with Hermitian hulls of arbitrary dimensions (see Theorems \ref{th MDS 1}, \ref{th MDS 2}, \ref{th MDS 3}).
It is worth noting that the dimension of these MDS codes can be taken from $q$ to $\frac{n}{2}$.
Furthermore, we derive three new propagation rules on MDS EAQECCs constructed from GRS codes (see Theorems \ref{th rule 1}, \ref{th rule 2}, \ref{th rule 3}).
 The propagation rules given in this paper can make the currently known MDS EAQECCs from GRS codes have more flexible parameters.
 Finally, we present several new classes of (MDS) EAQECCs with distances $d>q+1$ (see Theorem \ref{th q22}).

\section*{Acknowledgments}

This research was supported by the National Natural Science Foundation of China (No. U21A20428 and 12171134).

\section*{Data availability}

All data generated or analyzed during this study are included in this paper.

\section*{Conflict of Interest}

The authors declare that there is no possible conflict of interest.

\end{document}